\documentclass[11pt]{article}

\usepackage[margin=1in]{geometry}
\usepackage[T1]{fontenc}
\usepackage[utf8]{inputenc}
\usepackage{microtype}
\usepackage{amsmath, amssymb, amsthm, mathtools}
\usepackage{xcolor}
\usepackage{booktabs}
\usepackage{multirow}
\usepackage{graphicx}
\usepackage{tikz}
\usetikzlibrary{arrows.meta,positioning,calc}
\usepackage{enumitem}
\usepackage[numbers,sort&compress]{natbib}
\usepackage{algorithm}
\usepackage{algorithmicx}
\usepackage{algpseudocode}
\usepackage{pifont}
\usepackage{placeins}
\usepackage[colorlinks=true,linkcolor=blue!60!black,citecolor=blue!60!black,urlcolor=blue!60!black]{hyperref}
\graphicspath{{figures/}}

\newtheorem{lemma}{Lemma}
\newtheorem{theorem}{Theorem}
\newtheorem{corollary}{Corollary}
\newtheorem{proposition}{Proposition}

\newcommand{\eps}{\varepsilon}
\newcommand{\Dhat}{\widehat{D}}
\newcommand{\Ahat}{\widehat{A}}
\newcommand{\tsupp}{\tau_{\mathrm{supp}}}
\newcommand{\defeq}{\mathrel{:=}}
\newcommand{\cmark}{\ding{51}}
\newcommand{\xmark}{\ding{55}}

\title{Efficient Nash Equilibrium Computation for Cybersecurity Games}
\author{Michael Lanier \qquad David Farmer \qquad Yevgeniy Vorobeychik\\[4pt]
Washington University in St.\ Louis}
\date{}

\begin{document}
\maketitle

\begin{abstract}
Game-theoretic analyses of cyber defence often compute equilibria of games whose payoffs
exist only as the output of a simulator. Iterative equilibrium-finding methods grow a set of
attacker and defender policies and need the payoff of every attacker--defender pair, so
they are bottlenecked by payoff estimation: each payoff costs many simulator runs. We introduce Regret-Weighted Payoff Sampling
(RWPS), which spends a fixed simulation budget on the payoffs the equilibrium actually
depends on and predicts the rest with a model trained on every payoff measured so far.
Standard error bounds for estimated games are driven by the worst-estimated payoff, so they
cannot credit an estimator that is inaccurate only where accuracy does not matter. We prove a
bound that weights payoff errors by the opponent's equilibrium strategy, a certificate that
can be computed from simulated payoffs alone, and a condition under which errors in the
predicted payoffs cannot change either player's regret. On three synthetic general-sum
games, one of them a Colonel Blotto game of military resource allocation, the new bounds are
four to six times tighter than the standard one, and RWPS finds less exploitable equilibria
than minimum-regret-first search, information-gain search and progressive sampling at the
same budget. On two cyber-defence simulators, CyGym and a new game whose hosts are LLM agents
exposed to prompt injection, it gives the least exploitable equilibria at the smallest
budgets.
\end{abstract}

\section{Introduction}
\label{sec:intro}

Defending a network is a contest between two adaptive parties. An attacker probes for
weaknesses and changes tactics when blocked; a defender allocates limited monitoring and
response effort without seeing everything the attacker does. Game theory gives a principled
way to plan against such an opponent: model the interaction as a game, compute an
equilibrium, and deploy the defender's equilibrium strategy, which no attacker strategy can
exploit by more than the equilibrium's exploitability. For realistic networks the game cannot
be written down in closed form. What a pair of strategies is worth to each side is known only
by simulating the network under both of them, which makes these \emph{simulation-based
security games}. A family of simulators now supports this work. The Cyber Operations
Research Gym~\citep{standen2021cyborg,cage2023} underpins the CAGE challenges, Yawning
Titan~\citep{andrew2022yawningtitan}, CyberBattleSim~\citep{cyberbattlesim2021} and
FARLAND~\citep{molina2021farland} offer alternatives at different levels of abstraction, and
CyGym~\citep{lanier2026cygym} models cyber defence directly as a two-player general-sum game.

Games of this size are usually solved by growing each player's strategy set gradually.
Policy-space response oracles (PSRO)~\citep{lanctot2017psro} start from a few attacker and
defender policies and repeat three steps. They estimate the payoff of every attacker--defender
pair in the current sets, which gives a small matrix game; they solve that matrix game for a
mixed equilibrium; and they train one new policy per player as a best response to the
opponent's current mixture, add it to the sets, and start again. The loop stops when neither
new policy improves on the mixture, and MetaDOAR~\citep{lanier2026metadoar} scales the
best-response step to large networks.

The three steps cost very different amounts. A best response is a bounded reinforcement
learning run. Solving the matrix game takes milliseconds with fictitious
play~\citep{brown1951fictitious} or a standard solver~\citep{savani2025gambit}. Estimating
one payoff, however, means running the full simulator many times under both policies, and
the matrix has $n_D n_A$ entries for pools of $n_D$ defender and $n_A$ attacker policies, so
the simulation bill grows quadratically over a run. Simulation throughput is already a known
bottleneck: CybORG++~\citep{emerson2024cyborgpp} was built largely for speed and reports up to
a thousandfold faster parallel execution. Faster simulators usually give up fidelity by
abstracting host, process and packet detail~\citep{standen2021cyborg,molina2021farland}, and
that detail is often the realism the analysis needed. We take the other route and keep the
simulator, but call it less often.

RWPS spends a fixed simulation budget on the payoffs that can change the equilibrium and
predicts the others with a learned model. The model is trained on every payoff measured
earlier in the run, so a simulation paid for in one iteration keeps helping in later ones.
A score chooses which payoffs to simulate, favouring those the equilibrium depends on and the
model is least sure of. The obvious worry is what this costs in equilibrium quality. The
standard analysis of estimated games~\citep{tuyls2020bounds} bounds the damage by twice the
largest payoff error over \emph{all} entries. That bound is uninformative here, because RWPS
is deliberately inaccurate on entries it judges irrelevant, and those are exactly the entries
that set the largest error. Judging the method requires a bound that separates errors the
equilibrium can feel from errors it cannot. Our contributions are as follows.

\begin{enumerate}[leftmargin=1.6em,itemsep=2pt]
  \item \textbf{Bounds for estimated games that ignore irrelevant errors.} We sharpen the
        bound of \citet{tuyls2020bounds} so that a player's regret depends only on payoff
        errors weighted by the opponent's equilibrium strategy, and give a signed version
        in which only two kinds of error can hurt (Lemma~\ref{lem:supp}). Once every
        strategy has been simulated against the opponent's equilibrium strategies, a set
        of $B^\ast = n_D s_A + n_A s_D - s_D s_A$ entries, errors in the predicted entries
        cannot affect regret at all (Corollary~\ref{cor:coverage}). We also give a
        certificate computable from simulated payoffs alone (Corollary~\ref{cor:cert}).
        On the estimator's own output the new bounds are four to six times tighter than
        the standard one.
  \item \textbf{Regret-Weighted Payoff Sampling.} RWPS predicts unsimulated payoffs with
        a model trained on every payoff measured so far, chooses what to simulate with a
        score built from bootstrapped equilibrium strategies, and keeps a small amount of
        random exploration, which the appendix shows is necessary. Inside PSRO it reaches
        lower exploitability than minimum-regret-first search, information-gain
        search~\citep{jordan2008mrfs} and progressive sampling at the same budget. The
        prediction model accounts for $62\%$ of the gain over random sampling and the
        score for the other $38\%$, and with the model held fixed the score beats the
        sampling rule of \citet{sokota2019deviation} on Colonel Blotto.
  \item \textbf{The Agentic Network Security Game.} ANSG extends
        CyGym~\citep{lanier2026cygym} to networks of LLM agents that an attacker can
        subvert through indirect prompt injection, with general-sum payoffs and partial
        observability on both sides. On CyGym and ANSG, RWPS reaches the lowest
        exploitability at the smallest budgets; on ANSG, without its score it needs four
        simulated payoffs to reach the level it otherwise reaches with one.
\end{enumerate}
\section{Related Work}
\label{sec:related}

Several simulators target cyber defence. CybORG~\citep{standen2021cyborg,cage2023} drives the
CAGE challenges. Yawning Titan~\citep{andrew2022yawningtitan},
CyberBattleSim~\citep{cyberbattlesim2021} and FARLAND~\citep{molina2021farland} model networks
at different levels of fidelity, and CyGym~\citep{lanier2026cygym} adds the game-theoretic
layer this paper builds on. In all of them, evaluating a strategy pair means running the
simulator, and that cost is a recognised obstacle. CybORG++~\citep{emerson2024cyborgpp}
reports up to a thousandfold parallel speedup from a lightweight re-implementation, built
because large-scale evaluation on the original was infeasible. Our work is complementary to
simulator engineering. We reduce the number of simulator calls the equilibrium computation
needs, and the two approaches can be combined.

The double-oracle method~\citep{mcmahan2003double} and PSRO~\citep{lanctot2017psro} grow
strategy pools by best response to the current meta-game equilibrium. Later work improves
oracle quality and diversity~\citep{balduzzi2019openended}, and studies of the structure of
these pools~\citep{czarnecki2020spinningtops} describe what equilibrium supports look like in
practice, which matters here because support size sets the cost in
Corollary~\ref{cor:coverage}. MetaDOAR~\citep{lanier2026metadoar} scales the best-response
step to large cyber networks. Little of this work addresses the evaluation step, where most
of the cost lies in simulation-based games.

Our theory refines a known result. \citet{tuyls2018generalised,tuyls2020bounds} prove that
a Nash equilibrium of an estimated meta-game is a $2\tau$-approximate equilibrium of the true
game, where $\tau$ is the largest payoff error. Lemma~\ref{lem:supp} replaces that largest
error with one weighted by the opponent's equilibrium strategy, so the bound stays
informative for an estimator that leaves irrelevant cells inaccurate on purpose.
\citet{rowland2019multiagent} allocate samples for evaluation under incomplete information,
mainly for $\alpha$-rank~\citep{omidshafiei2019alpharank} response graphs, using bandit rules
without a surrogate that generalises across cells.

Query-efficient EGTA methods reduce the same cost by pruning profiles instead of predicting
them. Progressive-sampling algorithms sample profiles in rounds and stop sampling a profile
once its confidence interval shows it cannot affect the learning
goal~\citep{areyanviqueira2019learning,areyanviqueira2020improved,cousins2022computational,mishra2023regretpruning}.
Our method instead solves a matrix in which every cell is present, either simulated or
predicted. The two guarantees promise different things. A uniform guarantee covers the whole
estimated matrix, so any equilibrium computed from it is approximately correct, but it exists
only after every surviving profile has been sampled to accuracy $\eps$. Our certificate
covers only the equilibrium actually returned and says nothing about cells we did not
simulate, but it can be computed from whatever has been simulated so far. At the budgets we
target, progressive sampling cannot prune a single profile and gives no guarantee
(Section~\ref{sec:psro-headline}), while our certificate is available at every budget and
tightens as more cells are simulated.

\citet{wellman2025egtasurvey} still describe MRFS-style heuristic search as the practical
option when exhaustive evaluation is infeasible, so we treat it as a live baseline. We
compare against both algorithms of \citet{jordan2008mrfs} and reach lower exploitability than
both in a growing pool (Section~\ref{sec:psro-headline}). Neither carries information between
iterations beyond its cache, while our surrogate trains on every cell bought in every earlier
iteration, so our advantage grows over a run. On a fixed pool, where there is no such
history, MRFS is stronger at low budgets.

On the model-learning side, \citet{wang2023mfgegta} pair double oracle with a learned game
model and regularisation for sample efficiency, although for mean-field games and without an
explicit payoff matrix, and \citet{wang2022strategyexploration} study which strategies to
add to a pool, a question separate from which profiles to evaluate. Earlier work learns
payoff or deviation-payoff functions over strategy
spaces~\citep{vorobeychik2007payoff,sokota2019deviation} within
EGTA~\citep{wellman2006egta,vorobeychik2008stochastic}.

\citet{sokota2019deviation} pair a learned deviation-payoff model with sampling around
candidate equilibria, which is the idea closest to ours, but they build no payoff table and
give no bound. \citet{picheny2016bayesian} apply Bayesian optimisation to Nash computation in
continuous black-box games with a Gaussian-process surrogate and a fixed game. Adaptive
payoff queries for $\alpha$-rank~\citep{rowland2019multiagent,rashid2021alpharank} target a
different solution concept. Active learning~\citep{settles2009active,shahriari2016bayesopt}
and deep ensembles~\citep{lakshminarayanan2017ensembles} provide our acquisition and
uncertainty tools. The appendix compares each of these in detail.

\section{Method}
\label{sec:methods}

\subsection{Preliminaries and problem statement}

A simulation-based security game is a two-player general-sum game between a defender and an
attacker. Each player holds a set of pure strategies. In our setting a pure strategy is a
complete policy for the underlying stochastic game, for instance a trained neural
controller for the defender's network actions or the attacker's intrusion actions. The game
is \emph{simulation-based} because its payoffs have no analytical form. The expected
utility of a strategy pair $(d, a)$ is
\[
U^D(d,a) = \mathbb{E}\bigl[u(d,a)\bigr],
\qquad
U^A(d,a) \text{ symmetrically},
\]
where the expectation is over the simulator's stochasticity and each sample of it costs a
full rollout. The game is general-sum, $U^D + U^A \not\equiv c$, because the two players
value the same outcome differently. Nothing below assumes zero-sum structure.

Restricted to finite pools $S_D = \{d_1,\dots,d_{n_D}\}$ and
$S_A = \{a_1,\dots,a_{n_A}\}$, the game is a bimatrix game with payoff matrices
$D_{ij} = U^D(d_i, a_j)$ and $A_{ji} = U^A(d_i, a_j)$, scaled without loss of generality to
$[0,1]$. A solution is a mixed-strategy Nash equilibrium, a pair of distributions
$(p, q) \in \Delta_{n_D} \times \Delta_{n_A}$ from which neither player can profitably
deviate,
\[
p^\top D q \ge e_i^\top D q \;\;\forall i,
\qquad
q^\top A p \ge e_j^\top A p \;\;\forall j .
\]
Quality is measured by exploitability, the largest deviation gain either player forgoes,
\begin{equation}
\eps(p,q) = \max\Bigl\{\max_i\, (Dq)_i - p^\top D q,\;\;
                       \max_j\, (Ap)_j - q^\top A p\Bigr\},
\label{eq:eps-def-m}
\end{equation}
so $\eps = 0$ is an exact equilibrium and, with scaled payoffs, $\eps$ reads as a fraction
of the payoff range. A profile whose exploitability is at most 
$\eps$ is called an $\eps$-Nash equilibrium.

PSRO solves the full game by repeating three operations. It solves the current restricted
game for $(p, q)$, trains a best response to each opponent mixture with a
reinforcement-learning oracle and adds it to the pool, and then extends the payoff matrices
with the new strategy's entries. In the simulation-based setting the first operation is
nearly free, the second is a bounded training run, and the third costs
$N_{\mathrm{MC}}$ rollouts per cell over $\Theta(n_D n_A)$ cells. This paper addresses
the third step under a budget. Given pools $(S_D, S_A)$, a cache of previously simulated
entries, and a budget of $B \ll n_D n_A$ new cell evaluations, the goal is to produce estimates $(\Dhat, \Ahat)$ whose equilibrium
is close, in the sense of Eq.~\eqref{eq:eps-def-m}, to an equilibrium of the true
$(D, A)$.

The design question is which $B$ cells to simulate and what to put in the others.

\begin{figure}[t]
\centering
\begin{tikzpicture}[
  scale=0.84, transform shape,
  font=\tiny,
  meas/.style={draw=black!45, fill=black!12, minimum size=6.0mm, inner sep=0pt},
  pred/.style={draw=black!45, fill=white, minimum size=6.0mm, inner sep=0pt,
               text=black!55, font=\tiny\itshape},
  pick/.style={draw=blue!70, fill=blue!22, line width=0.5pt, minimum size=6.0mm,
               inner sep=0pt, text=black!75, font=\tiny\itshape},
  box/.style={draw=black!55, rounded corners=0.7mm, align=center, inner sep=1.0mm},
  bar/.style={fill=black!22, draw=none},
  topbar/.style={fill=blue!45, draw=none},
  flow/.style={-{Latex[length=1.5mm]}, draw=black!70, line width=0.5pt},
  lab/.style={font=\tiny, inner sep=0.4mm},
]

\node[lab, anchor=west] at (-2mm, 18.5mm) {\textbf{PSRO\,/\,MetaDOAR iteration $t$}};
\draw[black!25] (-2mm,16.9mm) -- (63mm,16.9mm);

\node[box, anchor=north west] (emb) at (-2mm, 14.2mm)
     {$\varphi(d_i)=[\,\cdot\,\cdot\,\cdot\,]$\\[-0.3mm]
      $\varphi(a_j)=[\,\cdot\,\cdot\,\cdot\,]$};
\node[lab, below=0.3mm of emb] {strategy embeddings};

\node[box, anchor=west, minimum height=7.5mm] (sur) at (19mm, 11.0mm)
     {ensemble\\ surrogate};
\node[lab, below=0.3mm of sur] {$\hat u_{ij},\ \hat\sigma_{ij}$};

\begin{scope}[xshift=40mm, yshift=13.0mm]
  \node[meas] (m11) at (0,0)      {$.62$};
  \node[meas] (m12) at (6.2mm,0)  {$.31$};
  \node[pred] (m13) at (12.4mm,0) {$.45$};
  \node[meas] (m21) at (0,-6.2mm)      {$.18$};
  \node[pred] (m22) at (6.2mm,-6.2mm)  {$.52$};
  \node[pick] (m23) at (12.4mm,-6.2mm) {$.37$};
  \node[pred] (m31) at (0,-12.4mm)      {$.29$};
  \node[pred] (m32) at (6.2mm,-12.4mm)  {$.44$};
  \node[pred] (m33) at (12.4mm,-12.4mm) {$.21$};
  \node[lab, above=0.4mm of m12] {payoff matrix $\Dhat$};
\end{scope}

\begin{scope}[xshift=-2mm, yshift=-10.5mm]
  \foreach \k/\o in {0/0, 1/0.9, 2/1.8}{
    \draw[draw=black!40, fill=white] (\o mm, -\o mm) rectangle ++(7mm,7mm);
  }
  \draw[black!30] (1.8mm,-1.8mm) ++(0,2.34mm) -- ++(7mm,0);
  \draw[black!30] (1.8mm,-1.8mm) ++(0,4.66mm) -- ++(7mm,0);
  \draw[black!30] (1.8mm,-1.8mm) ++(2.34mm,0) -- ++(0,7mm);
  \draw[black!30] (1.8mm,-1.8mm) ++(4.66mm,0) -- ++(0,7mm);
  \node[lab, anchor=north, align=center] at (4.4mm,-3.4mm)
       {bootstrap\\ re-solves};
\end{scope}

\node[box, anchor=west] (marg) at (13.5mm,-8.6mm) {$\tilde p,\ \tilde q$};
\node[lab, below=0.3mm of marg, align=center] {marginals,\\ $\nu$-smoothed};

\node[lab, anchor=south west] at (23mm,-6.4mm) {scores $\tilde p_i\tilde q_j\hat\sigma_{ij}$};
\draw[topbar] (23mm,-9.6mm) rectangle (32mm,-7.4mm);
\draw[bar]    (23mm,-12.4mm) rectangle (28.4mm,-10.2mm);
\draw[bar]    (23mm,-15.2mm) rectangle (25.5mm,-13.0mm);

\begin{scope}[xshift=52mm, yshift=-10.0mm]
  \node[draw=black!55, rounded corners=0.7mm, minimum width=11mm,
        minimum height=9.5mm] (sim) at (0,0) {};
  \fill[black!65] (-2.9mm,2.2mm) circle (0.6mm);
  \fill[black!65] ( 2.9mm,2.2mm) circle (0.6mm);
  \fill[black!65] (-2.9mm,-2.2mm) circle (0.6mm);
  \fill[black!65] ( 2.9mm,-2.2mm) circle (0.6mm);
  \fill[black!65] (0,0) circle (0.6mm);
  \draw[black!45] (-2.9mm,2.2mm) -- (0,0) -- (2.9mm,2.2mm);
  \draw[black!45] (-2.9mm,-2.2mm) -- (0,0) -- (2.9mm,-2.2mm);
  \draw[black!45] (-2.9mm,2.2mm) -- (-2.9mm,-2.2mm);
  \node[lab, below=0.4mm of sim] {simulator};
\end{scope}

\draw[flow] (emb.east) -- (emb.east -| sur.west);
\draw[flow] (sur.east) -- (36.6mm,11.0mm) node[lab, midway, above] {fill};
\draw[flow] (36.6mm,0.6mm) -| (3.5mm,-3.0mm)
      node[lab, pos=0.32, above] {resample and re-solve};
\draw[flow] (7.5mm,-7.2mm) -- (13.1mm,-8.6mm);
\draw[flow] (19.9mm,-8.6mm) -- (22.6mm,-8.6mm);
\draw[flow] (32.4mm,-9.0mm) -- (46.0mm,-9.0mm)
      node[lab, midway, above] {buy, $p = 1-\eps_{\mathrm{x}}$};
\draw[flow, densely dashed] (28.0mm,-13.3mm) -- (46.0mm,-12.0mm)
      node[lab, pos=0.68, below] {random, $p = \eps_{\mathrm{x}}$};
\draw[flow] (58.0mm,-10.0mm) -- (61.5mm,-10.0mm);
\draw[flow] (61.5mm,-10.0mm) -- (61.5mm,6.8mm) -- (55.6mm,6.8mm);
\node[lab, anchor=west, align=left] at (62.0mm,-1.0mm) {measured\\ value};
\draw[flow] (56.5mm,13.0mm) .. controls +(4mm,0) and +(0,-4mm) .. (58mm,18.9mm);
\node[lab, anchor=west, align=left] at (58.4mm,18.9mm)
     {cache and surrogate\\ to iteration $t{+}1$};
\end{tikzpicture}
\caption{One acquisition round at PSRO\,/\,MetaDOAR iteration $t$. Embeddings feed an ensemble surrogate
that fills every uncached entry of $\Dhat$ (italic) beside the measured ones (grey).
Re-solving bootstrap realizations of the filled matrix gives equilibrium marginals, mixed
toward uniform at rate $\nu$; each uncached cell is scored by its marginal weight times the
surrogate's uncertainty, the
top-scoring cell is simulated, and its value joins the cache carried into iteration
$t{+}1$.}
\label{fig:overview}
\end{figure}

\subsection{The budgeted estimator}
\label{sec:estimator}

Algorithm~\ref{alg:build} gives the full procedure. It has four steps, explained below in
order: embed each strategy, fit a surrogate, score cells by their influence on the
equilibrium, and simulate the top-scoring cells with some forced exploration, which the
appendix shows is necessary.\label{sec:embedding}\label{sec:surrogate}\label{sec:acquisition}\label{sec:exploration}
We call this \emph{Regret-Weighted Payoff Sampling} (RWPS), because the budget goes to
cells in proportion to their estimated influence on equilibrium regret, measured through
smoothed bootstrap marginals. The method has two separable parts, the history-trained fill
and the acquisition score. Section~\ref{sec:acq-ablation} measures each. The fill accounts
for most of the gain, and the score matters most at the smallest budgets.

\begin{algorithm}[t]
\caption{Regret-Weighted Payoff Sampling (RWPS)}
\label{alg:build}
\begin{algorithmic}[1]
\Require pools $S_D, S_A$; cache $\mathcal{C}$ of simulated entries; budget $B$; batch
size $c$; exploration probability $\eps_{\mathrm{x}}$; smoothing $\nu$; bootstrap size
$B_{\mathrm{boot}}$; ensemble size $M$
\State $\varphi(s) \gets$ embedding of every $s \in S_D \cup S_A$
       \Comment{no rollouts}
\State $B \gets B - |\{(i,j) \in \mathcal{C}\}|$ if $B$ is a coverage target
       \Comment{spend only on the \emph{uncovered remainder}}
\For{$r = 1, \dots, \lceil B/c \rceil$}
  \State fit ensemble $\{f_\psi^{(m)}\}_{m=1}^{M}$ on
         $\{(\varphi(d_i), \varphi(a_j)) \mapsto \mathcal{C}[i,j]\}$
  \State $(\hat u_{ij}, \hat\sigma_{ij}) \gets$ ensemble mean, spread for all
         $(i,j) \notin \mathcal{C}$
  \For{$b = 1, \dots, B_{\mathrm{boot}}$}
     \Comment{uncertainty of the \emph{solution}}
     \State subsample $k_D \times k_A$ strategies; draw
            $\widetilde D_{ij} \sim \mathcal{N}(\hat u_{ij}, \hat\sigma_{ij}^2)$ on
            unsimulated cells, pin cached cells
     \State $(p^{(b)}, q^{(b)}, v^{(b)}) \gets$ Nash solver on the realization
  \EndFor
  \State $\bar p_i \gets \operatorname{mean}_b\, p^{(b)}_i$;\quad
         $\bar q_j \gets \operatorname{mean}_b\, q^{(b)}_j$
         \Comment{product of marginals}
  \State $\tilde p \gets (1-\nu)\,\bar p + \nu/|S_D|$;\quad
         $\tilde q \gets (1-\nu)\,\bar q + \nu/|S_A|$
         \Comment{smooth toward uniform}
  \State $\alpha(i,j) \gets \tilde p_i\, \tilde q_j\, \hat\sigma_{ij}$
         \Comment{zero on cached cells}
  \For{$t = 1, \dots, c$}
     \Comment{$c$ cells per refit; the surrogate is refit between batches}
     \State with probability $\eps_{\mathrm{x}}$: simulate a cell drawn uniformly at
            random, unsimulated first, then revisits averaged by running mean
            \Comment{necessary; see appendix}
     \State otherwise: simulate the unsimulated cell of largest $\alpha$
     \State add to $\mathcal{C}$
  \EndFor
\EndFor
\State $\Dhat_{ij}, \Ahat_{ji} \gets \mathcal{C}[i,j]$ if cached, else ensemble mean
\State \Return $(\Dhat, \Ahat)$
\end{algorithmic}
\end{algorithm}

First, each strategy receives an embedding $\varphi(s) \in \mathbb{R}^d$ computed
without simulation. RWPS requires only that $\varphi$ distinguish the strategies in the pool. For policies with an actor $\mu_\pi$, a natural choice is a
behavioural fingerprint over a fixed probe bank $\Xi = \{\xi_1,\dots,\xi_P\}$ of
states,
\begin{equation}
\varphi(\pi) = \bigl[\operatorname{mean}_p \mu_\pi(\xi_p)\,\Vert\,
                     \operatorname{std}_p \mu_\pi(\xi_p)\bigr],
\label{eq:fingerprint}
\end{equation}
at the cost of $P$ forward passes. The simplest choice is an identity tag, a one-hot
vector per strategy, which carries no behavioural information.
Section~\ref{sec:experiments} states the embedding used in each experiment.

Second, an ensemble of $M$ small networks is fit to map embedding pairs to payoff pairs,
each on a bootstrap resample of every entry ever simulated, so a payoff bought at one PSRO
iteration still trains the surrogate at the next. Queries return the mean $\hat u_{ij}$ and
disagreement $\hat\sigma_{ij}$. Cached entries override surrogate values, and a rank-one
row-and-column model stands in until enough labels exist.

Third, each unsimulated cell is scored by how much the equilibrium uses it, times how
uncertain the surrogate is about it. We weight by equilibrium use because only those
cells can change the answer. By Lemma~\ref{lem:supp}, a player's regret depends on
payoff errors only through the opponent's equilibrium mixture, and with the support
held fixed a change to cell $(i,j)$ moves the defender's value $p^\top D q$ by $p_i q_j$
times that change. A cell that no equilibrium strategy uses cannot affect the solution however uncertain it
is, so scoring by uncertainty alone wastes budget. The weights depend on the unknown
equilibrium, but re-solving is nearly free, so we estimate it by re-solving
$B_{\mathrm{boot}}$ bootstrap realizations of the game, each with a resampled pool and a
matrix realization drawn from the surrogate's predictive distribution.

Weighting by $p_i q_j$ from a single solve would score only cells whose strategies that
solve already plays, so an error in the estimated support would never be corrected. We
make two choices so that cells outside the estimated support can also be scored. First,
we average each player's mixture across the bootstrap replicates and then multiply,
$\bar p_i \bar q_j$, instead of averaging the product within each replicate. If one
replicate plays $d_1$ against $a_1$ and another plays $d_2$ against $a_2$, averaging the
products scores only those two cells, while multiplying the averages also scores the
cross pairs $(d_1, a_2)$ and $(d_2, a_1)$. Those are the cells that matter when the
replicates disagree about the support. Second, each averaged mixture is blended with
uniform, $\tilde p = (1-\nu)\bar p + \nu/|S_D|$, with a hyperparameter $\nu$, so every
strategy keeps some weight even when all replicates agree and are wrong. Setting
$\nu = 0$ and averaging the products instead recovers value-sensitivity weighting,
reported as an ablation in the appendix.

Finally, cells are simulated one selection at a time. With probability
$1-\eps_{\mathrm{x}}$ the highest-scoring unsimulated cell is bought, and with probability
$\eps_{\mathrm{x}}$ a cell chosen uniformly at random is bought instead. The random
purchases are needed because the surrogate cannot know where it is wrong on cells it has
never simulated. Two strategies can share an embedding while differing arbitrarily in
payoff, so any rule that consults only the embedding and the cache can miss the cell
that matters, as the appendix proves.

The cache persists across PSRO iterations. Cached values override the surrogate, every
fit trains on all cells bought earlier, and a coverage-target budget pays only for the
part of the deviation-relevant set not already held.
Sokota et al.'s neighbourhood sampler moves mass toward the simplex edges, whereas
$\nu$-smoothing moves it toward uniform. The appendix gives further
implementation details, such as when the surrogate is refit, and all hyperparameters.
\section{Theory: an Instance-Dependent Bound}
\label{sec:theory}

We now ask what an equilibrium of the estimated game $(\Dhat,\Ahat)$ is worth in the true
game $(D,A)$. The known answer is the sup-norm bound of
\citet{tuyls2018generalised,tuyls2020bounds}. Writing $\Delta_D = \Dhat - D$,
$\Delta_A = \Ahat - A$ and $\tau = \max\{\|\Delta_D\|_\infty, \|\Delta_A\|_\infty\}$, any
exact equilibrium of the estimated game is a $2\tau$-equilibrium of the true one. That bound is too loose for a budgeted estimator. Our estimator leaves irrelevant cells
inaccurate on purpose, so $\tau$ is set by the cells that do not matter, and the bound is
correspondingly weak (Section~\ref{sec:results}). In this section we let the equilibrium
determine which errors count, strengthening the bound step by step from a two-sided worst
case to a signed bound that can be computed from the data the estimator already holds.
Proofs of the two results below are given here; the remaining proofs, and fuller versions
of these, are in the appendix.

\subsection{Support-weighted perturbation}

\begin{lemma}[Support-weighted perturbation]
\label{lem:supp}
Let $(\hat p, \hat q)$ be an $\eps_{\mathrm{solve}}$-approximate Nash
equilibrium of $(\Dhat, \Ahat)$. Write
$\Delta_D = \Dhat - D$ and define the opponent-support-weighted errors
$\tau_D(\hat q) = \max_i(|\Delta_D|\hat q)_i$ and
$\tau_A(\hat p) = \max_j(|\Delta_A|\hat p)_j$, with $|\cdot|$ entrywise. Then
the defender's regret in the true game satisfies
\begin{equation}
\label{eq:chain}
\begin{aligned}
\mathrm{reg}_D - \eps_{\mathrm{solve}}
&\;\le\;
\underbrace{\max_i\bigl(-\Delta_D\hat q\bigr)_i
            + \hat p^\top \Delta_D \hat q}_{\text{(a) signed}} \\[4pt]
&\;\le\;
\underbrace{\tau_D(\hat q)
            + \hat p^\top|\Delta_D|\hat q}_{\text{(b) two-term}} \\[4pt]
&\;\le\;
\underbrace{2\,\tau_D(\hat q)}_{\text{(c) support-weighted}}
\;\le\;
\underbrace{2\,\tau}_{\text{(d) sup-norm}},
\end{aligned}
\end{equation}
where $\tau = \max\{\|\Delta_D\|_\infty,\|\Delta_A\|_\infty\}$, and
symmetrically for the attacker. Consequently $(\hat p,\hat q)$ is an
$\eps$-Nash equilibrium of $(D,A)$ for $\eps$ equal to
$\eps_{\mathrm{solve}}$ plus the maximum over players of any of
(a)--(d). Equality in the last step holds only when a worst-error cell lies in
a played column, respectively row.
\end{lemma}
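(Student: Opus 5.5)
The plan is to write the true-game regret of the defender in terms of the estimated game and the error matrix, then relax step by step.

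\textbf{Step (a), the signed bound.} By definition, $\mathrm{reg}_D = \max_i (Dq)_i - \hat p^\top D\hat q$ with $\hat q$ in place of $q$. Substitute $D = \Dhat - \Delta_D$ to get, for each $i$,
\[
(D\hat q)_i - \hat p^\top D\hat q = \bigl[(\Dhat\hat q)_i - \hat p^\top\Dhat\hat q\bigr] + (-\Delta_D\hat q)_i + \hat p^\top\Delta_D\hat q .
\]
The bracketed term is at most $\eps_{\mathrm{solve}}$ because $(\hat p,\hat q)$ is an $\eps_{\mathrm{solve}}$-equilibrium of $(\Dhat,\Ahat)$. Taking the maximum over $i$ of the remaining terms gives (a). The only care needed is that the maximum over $i$ of a sum is at most the sum of the bracket's bound and the maximum of the signed error term, since the last term does not depend on $i$.

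\textbf{Step (b), the two-term bound.} Use entrywise $-\Delta_D \le |\Delta_D|$. Since $\hat q \ge 0$, this gives $(-\Delta_D\hat q)_i \le (|\Delta_D|\hat q)_i \le \tau_D(\hat q)$. Likewise $\hat p^\top\Delta_D\hat q \le \hat p^\top|\Delta_D|\hat q$, because $\hat p,\hat q$ are nonnegative.

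\textbf{Step (c), the support-weighted bound.} Write $\hat p^\top|\Delta_D|\hat q = \sum_i \hat p_i (|\Delta_D|\hat q)_i$. This is a convex combination of the entries $(|\Delta_D|\hat q)_i$, so it is at most their maximum, $\tau_D(\hat q)$.

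\textbf{Step (d), the sup-norm bound.} Each $(|\Delta_D|\hat q)_i$ is a $\hat q$-average of entries of row $i$ of $|\Delta_D|$, hence at most $\|\Delta_D\|_\infty \le \tau$.

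\textbf{Attacker and consequence.} Repeat the argument verbatim with $A$, $\Ahat$, $\Delta_A$, and the roles of $\hat p$ and $\hat q$ swapped. Then, by Eq.~\eqref{eq:eps-def-m}, exploitability is the maximum of the two regrets, each bounded by $\eps_{\mathrm{solve}}$ plus the corresponding term. This yields the $\eps$-Nash claim for any of (a)--(d).

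\textbf{Equality remark.} I would argue the contrapositive. If no worst-error cell lies in a column with $\hat q_j > 0$, then every row average $(|\Delta_D|\hat q)_i$ averages only entries strictly below $\|\Delta_D\|_\infty$. Hence $\tau_D(\hat q) < \|\Delta_D\|_\infty \le \tau$, and the inequality is strict. The symmetric argument applies to rows for the attacker.

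\textbf{Main obstacle.} This is bookkeeping rather than depth, and it sits in this remark. With two players and $\tau$ taken as a maximum over both error matrices, the last inequality can also be strict for other reasons, for example when $\|\Delta_D\|_\infty < \tau$. So I would state the remark as a necessary condition for equality, not a sufficient one. The other point needing care is the sign convention in (a): the maximum must be taken of $-\Delta_D\hat q$, reflecting that an underestimated deviation payoff is the harmful error.
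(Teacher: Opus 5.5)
Your proof is correct and follows the paper's argument link by link. For (a) it uses the same substitution $D = \Dhat - \Delta_D$ together with the $\eps_{\mathrm{solve}}$ condition; for (b), $-x \le |x|$ with nonnegative weights; for (c), the convex-combination bound; and for (d), the row-average bound. The only difference is cosmetic: you prove the equality remark by contrapositive, whereas the paper argues directly from $\tau_D(\hat q) = \tau$ and gets the slightly stronger conclusion that a single row has error $\tau$ in every played column. Both establish the stated necessary condition.
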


\begin{proof}
For any pure deviation $i$ of the defender, $e_i^\top D \hat q
= e_i^\top \Dhat \hat q - e_i^\top \Delta_D \hat q
\le \hat p^\top \Dhat \hat q + (|\Delta_D|\hat q)_i
= \hat p^\top D \hat q + \hat p^\top \Delta_D \hat q + (|\Delta_D|\hat q)_i$,
using optimality of $\hat p$ in the estimated game and the triangle inequality. Taking the
maximum over $i$ and bounding $\hat p^\top \Delta_D \hat q \le \tau_D(\hat q)$ gives the
defender's bound; the attacker's is symmetric, and no zero-sum structure is used. Each row of
$|\Delta_D|\hat q$ is a $\hat q$-weighted average of entries bounded by $\tau$, so
$\tau_D(\hat q) \le \tau$: columns outside $\mathrm{supp}(\hat q)$ contribute nothing however
large their error. The signed form follows by dropping absolute values and keeping only the
two harmful directions. Appendix~A.1 gives the step-by-step version.
\end{proof}

The chain contains four bounds.\footnote{The bound constrains $\Delta_D$, not its
origin, so it also covers irrational players and model misspecification. We do not consider
either here. Throughout, $\Delta_D$ is sampling and surrogate error.} Form (d) is the sup-norm bound of
\citet{tuyls2018generalised,tuyls2020bounds}, recovered as the loosest link. Form (c) is its
support-weighted refinement, and form (b) keeps both terms instead of doubling the larger.
Form (a) drops the absolute values, so that only \emph{underestimated deviations} and
\emph{overestimated support} contribute.

We use different parts of the chain at different points below, and the parts are not
interchangeable. Theorem~\ref{thm:eps} and Corollary~\ref{cor:coverage} are derived from
(b), since the absolute values are what allow a row to be split over simulated and surrogate
cells with each class bounded by a magnitude. Corollary~\ref{cor:cert} instead rests on the
sign structure of (a), which allows an optimistic fill. The solver term
$\eps_{\mathrm{solve}}$ is computable without ground truth, since it is the regret of
$(\hat p,\hat q)$ in the matrices the estimator already holds. We report it
alongside the bounds in Section~\ref{sec:results}. It matters most for form (a), which has
no slack to absorb it.

\subsection{Finite-budget guarantee}

Under the estimator, entries are of two kinds. Simulated cells carry statistical error, and
surrogate cells carry representational error. The support-weighted bound splits accordingly.
Assume rollouts return truth plus independent $\sigma$-sub-Gaussian noise, each simulated
cell holds an average of at least $m$ visits, and surrogate cells have absolute error at
most $\beta$ (from regression error, an uninformative embedding, or both; we do not certify $\beta$ in
advance and instead measure it in Section~\ref{sec:results}). Which cells are simulated,
and how often, is chosen by the acquisition rule from earlier samples, so the design is
\emph{adaptive}. A fixed-design union bound would be optimistic under
adaptive selection. We therefore state the statistical term with a time-uniform
concentration bound~\citep{howard2021timeuniform}, valid simultaneously for all cells and
all stopping times, at the cost of a $\log\log$ factor in the visit count. The guarantee
below then holds however the acquisition rule selects cells.

\begin{theorem}[Finite-budget guarantee]
\label{thm:eps}
With probability at least $1-\delta$, simultaneously over all cells and all sample-dependent
stopping rules, any $\eps_{\mathrm{solve}}$-approximate equilibrium $(\hat p, \hat q)$ of
$(\Dhat, \Ahat)$ satisfies
\[
\eps \;\le\; \eps_{\mathrm{solve}} + 2\max_{\pi \in \{\hat p, \hat q\}} \Bigl[
      \zeta(m,\delta)
      \;+\; \beta \cdot w_{\mathrm{sur}}(\pi) \Bigr],
\]
where
$\zeta(m,\delta) = \sigma\sqrt{\dfrac{2\bigl(\log(4 n_D n_A/\delta) + \log\log_2(2m)\bigr)}{m}}$
is a time-uniform confidence radius valid under adaptive sampling, and
$w_{\mathrm{sur}}(\hat q) = \max_i \sum_{j \,:\, (i,j)\ \mathrm{surrogate}} \hat q_j$
is the largest opponent-support mass any deviation row places on surrogate cells, and
symmetrically for $w_{\mathrm{sur}}(\hat p)$.
\end{theorem}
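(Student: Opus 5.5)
The plan is to start from form (b) of Lemma~\ref{lem:supp}, which gives $\mathrm{reg}_D - \eps_{\mathrm{solve}} \le \tau_D(\hat q) + \hat p^\top|\Delta_D|\hat q$. The second term is a $\hat p$-weighted average of the rows of $|\Delta_D|\hat q$, so it is at most $\tau_D(\hat q)$. Hence $\mathrm{reg}_D \le \eps_{\mathrm{solve}} + 2\tau_D(\hat q)$. Everything then reduces to bounding a single row quantity $(|\Delta_D|\hat q)_i$ uniformly in $i$, and the analogous quantity for the attacker. These rows are exactly what produce the factor $2$ and the maximum over $\pi\in\{\hat p,\hat q\}$ in the statement.

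First, I will fix a row $i$ and split the sum $\sum_j |\Delta_D|_{ij}\hat q_j$ over simulated and surrogate cells. This split is where the absolute values in (b) are needed. On surrogate cells $|\Delta_D|_{ij}\le\beta$ by assumption, so that part is at most $\beta\sum_{j\,\mathrm{sur}}\hat q_j \le \beta\, w_{\mathrm{sur}}(\hat q)$. On simulated cells I will use a high-probability event $\mathcal{E}$ on which every simulated cell satisfies $|\Delta_D|_{ij}\le\zeta(m,\delta)$, and likewise for $\Delta_A$. That part is then at most $\zeta\sum_{j\,\mathrm{sim}}\hat q_j\le\zeta$, since $\hat q$ is a distribution. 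Adding the two parts gives $\tau_D(\hat q)\le \zeta+\beta w_{\mathrm{sur}}(\hat q)$. The symmetric argument gives $\tau_A(\hat p)\le\zeta+\beta w_{\mathrm{sur}}(\hat p)$. Taking the maximum over the two players yields the displayed bound. The argument is deterministic on $\mathcal{E}$, so it applies to any approximate equilibrium $(\hat p,\hat q)$, including ones that depend on the data.

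The main work is constructing $\mathcal{E}$ with $\Pr(\mathcal{E})\ge 1-\delta$ under adaptive selection. For each cell $(i,j)$ and each payoff (defender and attacker), I will consider the running mean of its rollout samples indexed by that cell's own visit count. Rollouts are independent $\sigma$-sub-Gaussian given the past, so the centred partial sums form a sub-Gaussian martingale in the cell's local time, whatever rule decides when the cell is visited. I will invoke a stitched time-uniform bound from \citet{howard2021timeuniform} (an iterated-logarithm boundary). It gives, with probability at least $1-\delta'$ and simultaneously for all visit counts $k\ge1$, $|\bar X_k-\mu|\le\sigma\sqrt{2(\log(2/\delta')+\log\log_2(2k))/k}$. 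Choosing $\delta'=\delta/(2n_Dn_A)$ and taking a union bound over the $n_Dn_A$ cells and two payoffs gives the $\log(4n_Dn_A/\delta)$ term.

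Because the bound holds for all $k$ at once, it also holds at the random final visit count of each cell, and so for any stopping rule. Monotonicity of the radius in $k$ (for $k\ge m$, checked for the given form) then lets me replace $k$ by the lower bound $m$. One detail needs care: revisits that are averaged by a running mean must be exactly the sample mean of that cell's i.i.d.\ draws. I expect the hardest step to be matching the constants of the cited stitched boundary to the stated $\zeta$. If they do not match exactly, I would state $\zeta$ with the constants that the cited theorem actually provides, since the structure of the argument does not depend on them.
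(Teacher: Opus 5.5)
\textbf{Same route as the paper, but the concentration step you flagged fails in both, and it can be repaired without changing $\zeta$.}

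Your skeleton is exactly the paper's proof: form (c) of Lemma~\ref{lem:supp} reached through (b); the split of $(|\Delta_D|\hat q)_i$ into a simulated part at most $\zeta(m,\delta)$ and a surrogate part at most $\beta\,w_{\mathrm{sur}}(\hat q)$; a union bound with $\delta'=\delta/(2n_Dn_A)$ over the $2n_Dn_A$ entries; and monotonicity of $\zeta$ to pass from $m_{ij}$ to $m$. The step you flagged is not constant bookkeeping, and the paper's proof has the same gap. No boundary of the form $\sigma\sqrt{2(\log(2/\delta')+\log\log_2(2k))/k}$ can hold simultaneously for all $k\ge1$. Take Gaussian noise, so that $S_k=k(\bar X_k-\mu)$ is a Gaussian random walk, and write the claimed boundary on $S_k$ as $\sigma\sqrt{k}\,\phi(k)$ with $\phi(t)^2=2\log\log t+O(1)$. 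Then $\phi(t)e^{-\phi(t)^2/2}/t$ is of order $\sqrt{\log\log t}/(t\log t)$, and its integral diverges. By the Kolmogorov--Erd\H{o}s--Feller--Petrowsky test the walk therefore crosses the boundary infinitely often almost surely, whatever the additive constant, so the probability of ever crossing is $1$, not $\delta'$. Any valid boundary of this shape needs a coefficient strictly above $2$ on $\log\log k$. That is what the stitched bounds of \citet{howard2021timeuniform} carry, so citing them cannot produce the stated $\zeta$.

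Your fallback of enlarging $\zeta$ would prove a weaker theorem, and you do not need it. Because $m$ is a fixed lower bound on every simulated cell's visit count, you only need $\sup_{k\ge m}|\bar X_k-\mu|\le\zeta(m,\delta)$, a constant radius on $k\ge m$. Apply Ville's inequality to the nonnegative supermartingale $\exp(\lambda S_k-\lambda^2\sigma^2k/2)$ with $\lambda=c/\sigma^2$ and $\alpha=e^{-mc^2/(2\sigma^2)}$. The line $\sigma^2\log(1/\alpha)/c+ck/2$ equals $cm$ at $k=m$ and stays below $ck$ afterwards, so $\Pr[\exists k\ge m:\ |\bar X_k-\mu|\ge c]\le 2e^{-mc^2/(2\sigma^2)}$. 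With $c=\sigma\sqrt{2\log(4n_Dn_A/\delta)/m}$ this equals $\delta/(2n_Dn_A)$ per entry, and $c\le\zeta(m,\delta)$ because $\log\log_2(2m)\ge0$. Define the event on a pre-drawn i.i.d.\ tape of rollouts for each cell, read in order whenever the cell is visited. The event is then independent of the acquisition rule and covers every adaptive schedule in which simulated cells reach $m$ visits. The rest of your argument goes through unchanged, and the theorem holds exactly as stated, with the $\log\log$ term as pure slack. Only a data-dependent $m$ would need uniformity over $m$ as well, and there the larger stitched constants are unavoidable.
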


\begin{proof}
Split $(|\Delta_D|\hat q)_i$ over the simulated and surrogate cells of row $i$. The simulated
part is at most the concentration radius times $\sum_j \hat q_j \le 1$, by concentration of
$m$-visit averages and a union bound over all $2n_Dn_A$ cells; the surrogate part is at most
$\beta$ times the $\hat q$-mass that row $i$ places on surrogate cells, which is
$w_{\mathrm{sur}}(\hat q)$. Take the maximum over rows, apply Lemma~\ref{lem:supp}, and
repeat for the attacker; Appendix~A.1 expands each step.
\end{proof}

The theorem tells acquisition what to aim for, which is to drive $w_{\mathrm{sur}}$ to zero.
Representational error enters in proportion to how much equilibrium mass touches surrogate
cells, where the sup-norm bound charges it at full strength. A surrogate cell in a column the opponent never plays costs nothing,
while one in a played column costs in proportion to that column's weight. RWPS therefore scores cells by equilibrium sensitivity as well as uncertainty. The weight
$\tilde q_j$ is zero on the columns that cannot make $w_{\mathrm{sur}}$ positive, and
largest on those that dominate it. Computing $w_{\mathrm{sur}}$ needs only a cache query
against the returned mixture, so we can check directly whether it has reached zero, and Appendix~A.8 builds a terminating coverage iteration on it. Once it vanishes the
bound keeps only its statistical term, which Appendix~A.3.2 inverts into the budget a target
exploitability requires.

\subsection{An optimistic certificate}

Only two events produce regret. The estimate can \emph{under}state a deviation row, or it
can \emph{over}state the value of the support. Filling unsimulated cells of a player's own matrix
at the maximum payoff therefore understates no deviation, which makes the following bound
computable without the true matrices. Its proof, like those of the other results in this
section, is in Appendix~A.1.

\begin{corollary}[Computable optimistic certificate]
\label{cor:cert}
Let $(\hat p, \hat q)$ be an $\eps_{\mathrm{solve}}$-approximate equilibrium of
estimated matrices $(\Dhat, \Ahat)$ whose support block
$\mathrm{supp}(\hat p)\times\mathrm{supp}(\hat q)$ is simulated, and let
$(D^{\mathrm{opt}}, A^{\mathrm{opt}})$ equal the measured values on simulated cells and the
maximum payoff elsewhere. If rollout noise is $\sigma$-sub-Gaussian and each simulated cell
averages $m$ visits, then with probability at least $1-\delta$,
\begin{align*}
\eps \;\le\; \max\Bigl\{\,&\max_i \bigl(D^{\mathrm{opt}}\hat q\bigr)_i
              - \hat p^\top \Dhat \hat q,\;\; \\
              &\max_j \bigl(A^{\mathrm{opt}}\hat p\bigr)_j
              - \hat q^\top \Ahat \hat p \Bigr\}
      \;+\; 2\,\zeta_{m,\delta} + \eps_{\mathrm{solve}},
\end{align*}
where $\zeta_{m,\delta} = \zeta(m,\delta)$ is the same time-uniform radius as in
Theorem~\ref{thm:eps}, valid under adaptive sampling.
\end{corollary}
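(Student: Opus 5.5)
We bound each player's regret directly, following the decomposition behind form (a) of Lemma~\ref{lem:supp}. For the defender, $\mathrm{reg}_D = \max_i (D\hat q)_i - \hat p^\top D\hat q$. We handle the two terms separately: the deviation term from above using the optimistic fill, and the support value from below using the measured support block.

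\textbf{Deviation term.} Put $\hat D^{\mathrm{sim}}$ for the measured averages. On the concentration event,
\[
\bigl|\hat D^{\mathrm{sim}}_{ij} - D_{ij}\bigr| \le \zeta_{m,\delta}
\]
holds for every simulated cell simultaneously. We obtain this event exactly as in Theorem~\ref{thm:eps}: the time-uniform bound of \citet{howard2021timeuniform}, with a union bound over the $2n_Dn_A$ cells of both matrices, so it holds with probability at least $1-\delta$ whatever the adaptive acquisition rule did. On this event:
\begin{itemize}[leftmargin=1.6em,itemsep=2pt]
  \item on simulated cells, $D_{ij} \le D^{\mathrm{opt}}_{ij} + \zeta_{m,\delta}$;
  \item on unsimulated cells, $D_{ij} \le 1 = D^{\mathrm{opt}}_{ij}$, because payoffs are scaled to $[0,1]$.
\end{itemize}
Since $\hat q$ is a probability vector, this gives $(D\hat q)_i \le (D^{\mathrm{opt}}\hat q)_i + \zeta_{m,\delta}$ for every $i$, and hence the same bound after taking $\max_i$.

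\textbf{Support term.} Because $\mathrm{supp}(\hat p)\times\mathrm{supp}(\hat q)$ is simulated, $\hat p^\top D\hat q$ only involves cells whose $\Dhat$ entries are measured averages. It therefore satisfies $\hat p^\top D\hat q \ge \hat p^\top \Dhat\hat q - \zeta_{m,\delta}$, using $\sum_{i,j}\hat p_i\hat q_j = 1$.

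Combining the two terms gives $\mathrm{reg}_D \le \max_i (D^{\mathrm{opt}}\hat q)_i - \hat p^\top\Dhat\hat q + 2\zeta_{m,\delta}$. The attacker's regret gets the identical bound with $A^{\mathrm{opt}}$, $\hat p$ and $\Ahat$, on the same event. Taking the maximum over players gives the stated bound.

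\textbf{Role of $\eps_{\mathrm{solve}}$.} The argument above does not actually use approximate optimality of $(\hat p,\hat q)$. Adding $\eps_{\mathrm{solve}}$ only loosens the right-hand side, so the statement holds with it included. Its presence matches the paper's convention of reporting it alongside the bound. It becomes genuinely necessary if one replaces $\hat p^\top\Dhat\hat q$ by $\max_i(\Dhat\hat q)_i$ on the right, which is the form that makes the connection to (a) explicit.

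\textbf{Main obstacle.} The delicate point is the probabilistic bookkeeping rather than the algebra. The set of simulated cells, and even whether the support block is fully simulated, is chosen adaptively from the samples. The sup-norm concentration statement must therefore hold uniformly over cells and stopping times, not just for a fixed design. I would simply reuse the event constructed for Theorem~\ref{thm:eps}, observing that the certificate only needs that event together with boundedness of payoffs on unsimulated cells.

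A minor check is that "maximum payoff" means the upper end of the scaled range, so $D\le D^{\mathrm{opt}}$ on unsimulated cells deterministically. A second check is that cells revisited via running means still count as $m$-visit averages.
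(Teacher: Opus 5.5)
Your proposal is correct and follows the paper's proof essentially step for step. It works on the same time-uniform event $\mathcal{E}$ from the proof of Theorem~\ref{thm:eps}, bounds the deviation term through the entrywise inequality $D \le D^{\mathrm{opt}} + \zeta_{m,\delta}$, bounds the value term from below using the simulated support block, and observes, as the paper does, that $\eps_{\mathrm{solve}}$ is pure slack in this certificate.
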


Every quantity on the right is available at the end of a build. The deviation terms use
$(D^{\mathrm{opt}}, A^{\mathrm{opt}})$, the value terms use the estimated matrices
$(\Dhat, \Ahat)$ the solver was given, and $\eps_{\mathrm{solve}}$ is measured on those, so
the true matrices never appear. The surrogate error $\beta$ of Theorem~\ref{thm:eps} is the
one quantity no data can bound, because it concerns cells never simulated. Covering the
deviation-relevant set sets $w_{\mathrm{sur}} = 0$ and removes it.

\begin{corollary}[Deviation-relevant coverage]
\label{cor:coverage}
Let $s_D = |\mathrm{supp}(\hat p)|$, $s_A = |\mathrm{supp}(\hat q)|$. If every cell of
$\{1{:}n_D\} \times \mathrm{supp}(\hat q)$ and of
$\mathrm{supp}(\hat p) \times \{1{:}n_A\}$ is simulated, then
$w_{\mathrm{sur}}(\hat p) = w_{\mathrm{sur}}(\hat q) = 0$ and the bound of
Theorem~\ref{thm:eps} reduces to its statistical and solver terms, independent
of surrogate quality.
The set has
\[
B^\ast \;=\; n_D\, s_A + n_A\, s_D - s_D s_A
\]
cells, against $n_D n_A$ for a full rebuild.
\end{corollary}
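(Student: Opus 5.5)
The plan is to verify two things: that the coverage hypothesis forces both surrogate-mass quantities of Theorem~\ref{thm:eps} to vanish, and that the covered set has the stated size. Both are direct from the definitions.

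\textbf{Vanishing of $w_{\mathrm{sur}}$.} Recall that $w_{\mathrm{sur}}(\hat q) = \max_i \sum_{j:(i,j)\ \mathrm{surrogate}} \hat q_j$.
\begin{itemize}
\item Fix any deviation row $i \in \{1{:}n_D\}$. The only terms that can be nonzero in this sum are those with $\hat q_j > 0$, that is, $j \in \mathrm{supp}(\hat q)$.
\item By hypothesis every cell of $\{1{:}n_D\}\times\mathrm{supp}(\hat q)$ is simulated. So no pair $(i,j)$ with $\hat q_j>0$ is a surrogate cell.
\item Hence the sum is empty or has all terms zero, and the maximum over $i$ is $0$.
\end{itemize}
The attacker side is symmetric. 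There, $w_{\mathrm{sur}}(\hat p)=\max_j\sum_{i:(i,j)\ \mathrm{surrogate}}\hat p_i$ ranges over deviation columns $j$. It is killed by simulating $\mathrm{supp}(\hat p)\times\{1{:}n_A\}$, because $A_{ji}$ is indexed by the same pair $(d_i,a_j)$, so ``simulated'' refers to the same cell for both matrices. Substituting $w_{\mathrm{sur}}=0$ into Theorem~\ref{thm:eps} removes the $\beta$ term. What remains is $\eps_{\mathrm{solve}} + 2\zeta(m,\delta)$, which holds for any value of $\beta$.

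I should note one subtlety: $(\hat p,\hat q)$ must be the equilibrium returned on the matrices after coverage. If covering changed the estimates and the solver were rerun, the supports could move. I would therefore state the result for the fixed returned profile, which is what Theorem~\ref{thm:eps} quantifies over, and point to Appendix~A.8 for the iteration that reaches a fixed point.

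\textbf{Counting $B^\ast$.} This is inclusion--exclusion on the two rectangles.
\begin{itemize}
\item $R_1=\{1{:}n_D\}\times\mathrm{supp}(\hat q)$ has $n_D s_A$ cells.
\item $R_2=\mathrm{supp}(\hat p)\times\{1{:}n_A\}$ has $s_D n_A$ cells.
\item Their intersection is $\mathrm{supp}(\hat p)\times\mathrm{supp}(\hat q)$, with $s_D s_A$ cells.
\end{itemize}
Therefore $|R_1\cup R_2| = n_D s_A + n_A s_D - s_D s_A$. For comparison with a full rebuild, $n_D n_A - B^\ast = (n_D-s_D)(n_A-s_A)\ge 0$, so $B^\ast$ never exceeds $n_D n_A$.

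\textbf{Main obstacle.} There is no real technical obstacle. The one point needing care is confirming that Theorem~\ref{thm:eps} charges surrogate error only through $w_{\mathrm{sur}}$. That is, the two-term form (b) of Lemma~\ref{lem:supp} already bounds the value term $\hat p^\top|\Delta_D|\hat q$ by $\tau_D(\hat q)$, whose surrogate part is controlled by the same row masses. Support cells lie in $R_1$, so this term also carries only statistical error.
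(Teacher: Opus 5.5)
Your proposal is correct and follows the paper's proof essentially step for step. Only columns in $\mathrm{supp}(\hat q)$, respectively rows in $\mathrm{supp}(\hat p)$, carry weight in $w_{\mathrm{sur}}$, and the hypothesis makes all those cells simulated, so both masses vanish and the $\beta$ term drops out of Theorem~\ref{thm:eps}; $B^\ast$ then follows by inclusion--exclusion on the two rectangles, whose intersection is $\mathrm{supp}(\hat p)\times\mathrm{supp}(\hat q)$ (your extra identity $n_Dn_A - B^\ast = (n_D-s_D)(n_A-s_A)$ and your note that the result concerns the fixed returned profile are correct additions).
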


The principle that a profile's regret depends only on its unilateral deviations
is standard in EGTA~\citep{wellman2006egta,wellman2025egtasurvey} and underlies
MRFS~\citep{jordan2008mrfs}. Setting $s_D=s_A=1$ recovers its $2n-1$ confirmation set. We
add the closed-form bimatrix count and the link to Theorem~\ref{thm:eps}. In earlier
methods the cells outside the deviation set are never evaluated. Here they are
\emph{filled} by a surrogate the solver reads, and the
corollary shows that the fill cannot reach either player's regret, which is why predicting
unsimulated payoffs is safe where earlier work pruned them. Because the supports it needs
are estimated by re-solving bootstrap samples of the game, without any rollouts, $B^\ast$ also works as a
screen before any simulation. A small $B^\ast$ says the estimator is worth enabling, and a
$B^\ast$ close to $n_D n_A$ says a full rebuild is the better option. The appendix prices a target
certificate from $B^\ast$ and gives the extended discussion.

Exploration is necessary for the asymptotic guarantee. An embedding is a finite summary,
so two policies can share one while differing arbitrarily in payoff, and any deterministic
acquisition rule that reads only embeddings and the cache can be led never to simulate the
cell that matters, as the appendix proves. The uniform draws of Algorithm~\ref{alg:build} are
the only mechanism that does not consult the embedding.

\section{Experimental Setup}
\label{sec:experiments}

The experiments ask whether the estimator finds better equilibria than other budgeted
payoff builds inside growing-pool PSRO (Section~\ref{sec:psro-headline}), whether that
holds on cyber simulators (Section~\ref{sec:cyber-sweep}), and whether the bounds of
Section~\ref{sec:theory} are tight enough to be worth computing
(Section~\ref{sec:tightness}). On the latent-quality games and Blotto,
exploitability is measured against the true, computable Nash equilibrium. On CyGym and
ANSG no such equilibrium is available, which is why RWPS is needed there in the first
place, so we measure exploitability against a full rebuild of a fixed pool instead.

\subsection{Games}
All are $21 \times 21$ ($441$ cells), payoffs scaled to $[0,1]$, rollout noise
$\sigma = 0.10$ of the range, $N_{\mathrm{MC}} = 4$ episodes per cell evaluation, so a full
rebuild costs $1764$ episodes. They differ in what the fingerprint encodes and in
equilibrium support size, the two quantities the theory says matter.

\textbf{Latent quality.} Strategies carry latent qualities $q_s \sim \mathcal{N}(0,1)$
with bilinear general-sum payoffs $D_{ij} = q^D_i q^A_j + 0.3 q^D_i$,
$A_{ij} = -q^D_i q^A_j + 0.3 q^A_j$. Equilibrium supports are $(2,2)$, so
$B^\ast = 80$ cells, $18\%$. Two embeddings share these payoffs: an informative one,
$[q, \tanh q, q^2, \sin q]$, which makes payoff smooth in $\varphi$, and an arbitrary one
drawn independently of $q$, which ablates the embedding
(appendix).

\textbf{Colonel Blotto, asymmetric values.} $S = 5$ soldiers over $F = 3$ fields, fields
worth different amounts to the two players drawn from $[0.5,2]$, hence general-sum
($\operatorname{std}(D{+}A) = 0.21$ across cells). The fingerprint is the normalised
allocation; payoff is combinatorial in it. Supports are $(12,12)$, so $B^\ast = 360$ cells,
or $82\%$, and the theory predicts before any run that this game is expensive.

\textbf{CyGym and ANSG.} CyGym~\citep{lanier2026cygym} is a network security game in
which an attacker compromises nodes by gaining binary access and a defender responds
over a partially observed network. The Agentic Network Security Game (ANSG) extends it
to networks whose nodes are LLM agents. Compromise becomes \emph{goal subversion}. An
attacker injects instructions into content an agent reads, by indirect prompt injection
or malicious tool triggering, so the agent pursues attacker goals while its legitimate
task may still complete. Subverted agents can propagate payloads downstream. The defender
audits, reprompts, resets or isolates nodes on noisy signals, and the attacker cannot
observe resets and must probe to learn which nodes remain subverted. Valuations are independent
per player, so the game is general-sum. ANSG keeps CyGym's per-node state and PSRO
machinery. Full details are in the appendix. In the CyGym and ANSG budget sweeps every strategy is embedded by an identity tag.

\subsection{Measurements}
\textbf{Exploitability} $\eps$, Eq.~\eqref{eq:eps-def}, of the returned mixture in the true
game; lower is better, $0$ is exact, and scaling makes it a fraction of the payoff range:
\begin{equation}
\eps = \max\Bigl\{\max_i (Dq)_i - p^\top D q,\;\; \max_j (Ap)_j - q^\top A p\Bigr\}.
\label{eq:eps-def}
\end{equation}
\textbf{Bound tightness}: realised $\eps$ against $2\tau$, $2\tsupp$, and the sharp form of
Lemma~\ref{lem:supp}, all computed from the method's own output. Error decomposition and coverage studies are in the appendix.

\subsection{Baselines and protocol}
\emph{Uniform sampling} spends the same budget uniformly at random with a mean fill;
\emph{full rebuild} simulates every cell, as standard PSRO does. We additionally compare
against minimum-regret-first search and information-gain search~\citep{jordan2008mrfs} and
the progressive-sampling family (Section~\ref{sec:psro-headline}). The latter shows that a
uniform $\eps$-approximation guarantee cannot be bought at these budgets. Restricted games are solved by fictitious play on the
synthetic games and on CyGym and ANSG by support enumeration with a Lemke--Howson
fallback. The solver's residual regret enters every bound as $\eps_{\mathrm{solve}}$. The growing-pool
comparison uses a $5\%$ per-build budget over twelve iterations with pools growing from three
to fourteen per side, sixteen seeds; the fixed-pool diagnostic builds of
Section~\ref{sec:tightness} sweep $\{5,10,20,40,70\}\%$ of the matrix over sixteen seeds, measuring $\eps$ against the full game
by zero-lifting the restricted mixture, with best-response episodes counted separately. All
runs are deterministic given recorded seeds. The $\tau$-statistics are maxima over $441$
cells and therefore heavy-tailed, so we report medians.

\section{Results}
\label{sec:results}

All aggregates are mean $\pm$ one standard deviation over sixteen seeds unless a caption
states otherwise. Every method in a comparison starts from the same pool.

\subsection{Exploitability over PSRO Rounds for Synthetic Games}
\label{sec:psro-headline}

This is the setting the estimator is built for. Pools grow, the cache persists, and every
payoff entry costs a rollout. Table~\ref{tab:psro} reports twelve PSRO
iterations at a $5\%$ per-build budget over sixteen seeds, with exploitability
measured against the full $21\times21$ game by zero-lifting the restricted
mixture.

\begin{figure*}[t]
\centering
\includegraphics[width=0.76\textwidth]{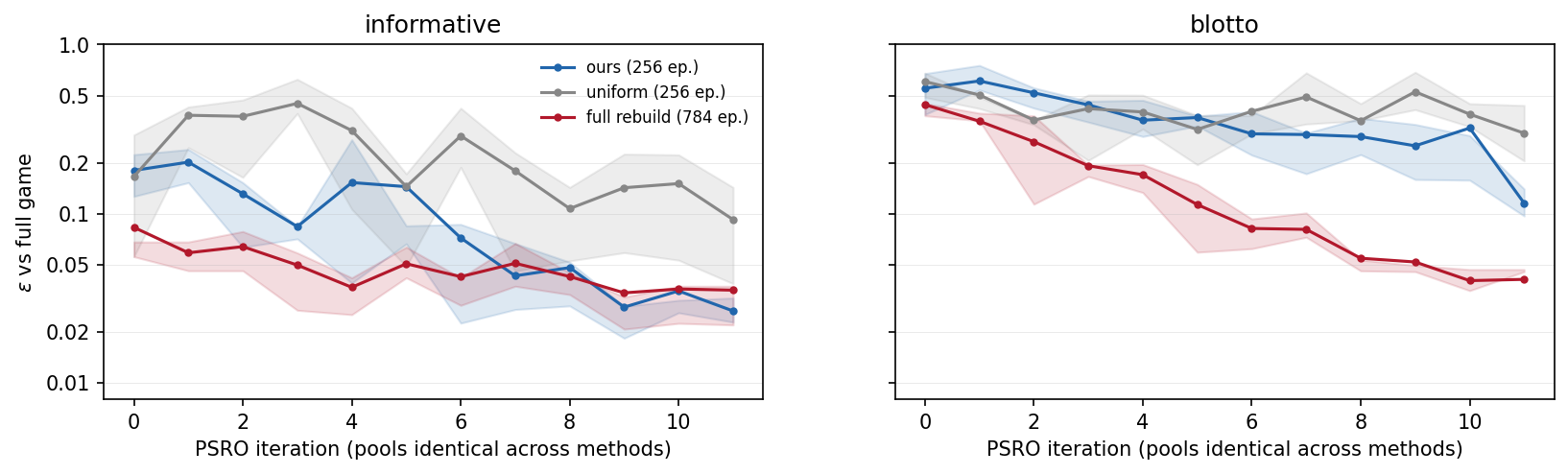}
\caption{Exploitability against the full game over twelve PSRO iterations at a $5\%$
per-build payoff budget. All methods draw the same initial pool and share a best-response
oracle, but each responds to its own mixture, so the pools coincide at the first iteration
and diverge thereafter. Means over sixteen seeds, interquartile bands.}
\label{fig:psro}
\end{figure*}

\begin{table}[t]
\centering\small
\caption{Growing-pool PSRO, $5\%$ per-build budget, mean $\pm$ SD over sixteen
seeds. $\eps_0$ is the first iteration, where every method holds the same randomly
drawn pool. Budget is cumulative payoff-rollout episodes over the run, and every
budgeted method is held to the same $256$. The full rebuild has no budget. It simulates
every uncached cell, costing $784$ episodes by iteration twelve, and is shown as a quality
reference.
The three progressive-sampling members coincide because none prunes a single
index at this budget.}
\label{tab:psro}
\begin{tabular}{@{}llccr@{}}
\toprule
Game & Payoff build & $\eps_0$ & $\eps_{12}$ & budget \\
\midrule
\multirow{6}{*}{Informative}
 & RWPS (ours)            & $0.262$ & $\mathbf{0.024 \pm 0.017}$ & $256$ \\
 & MRFS                   & $0.180$ & $0.091 {\scriptstyle\pm} 0.099$ & $256$ \\
 & IGS                    & $0.168$ & $0.113 {\scriptstyle\pm} 0.111$ & $256$ \\
 & uniform (flat fill)    & $0.208$ & $0.113 {\scriptstyle\pm} 0.097$ & $256$ \\
 & GS / PSP / PS-REG-M    & $0.223$ & $0.223 {\scriptstyle\pm} 0.174$ & $256$ \\
 & \emph{full rebuild}    & $0.111$ & $0.036 {\scriptstyle\pm} 0.023$ & $784$ \\
\midrule
\multirow{6}{*}{Blotto}
 & RWPS (ours)            & $0.525$ & $\mathbf{0.082 \pm 0.042}$ & $256$ \\
 & MRFS                   & $0.515$ & $0.222 {\scriptstyle\pm} 0.114$ & $256$ \\
 & IGS                    & $0.554$ & $0.276 {\scriptstyle\pm} 0.133$ & $256$ \\
 & uniform (flat fill)    & $0.593$ & $0.293 {\scriptstyle\pm} 0.140$ & $256$ \\
 & GS / PSP / PS-REG-M    & $0.627$ & $0.627 {\scriptstyle\pm} 0.180$ & $256$ \\
 & \emph{full rebuild}    & $0.398$ & $0.045 {\scriptstyle\pm} 0.015$ & $784$ \\
\bottomrule
\end{tabular}
\end{table}

The gap between methods builds up over the run.
Figure~\ref{fig:psro} plots the same runs iteration by iteration. All methods begin
together, separate over the first four iterations, and keep their ordering after that.

By iteration twelve the estimator reaches $0.024$ against MRFS's $0.091$ and
uniform's $0.113$, and on Blotto $0.082$ against $0.222$ and $0.293$. Both
margins are significant under a paired test over the sixteen seeds. Against MRFS,
$p = 0.022$ on the informative game and $p < 0.001$ on Blotto, and against uniform,
$p = 0.003$ and $p < 0.001$.

RWPS starts behind at iteration $0$ ($0.262$ against $0.180$ for MRFS), as expected with an
empty cache, and its lead then widens every iteration, which is what reusing earlier
simulations should produce.

Against the full rebuild the result divides by game. On the informative game the
estimator reaches $0.024$ against the rebuild's $0.036$ at a third of the
episodes, winning $13$ of $16$ seeds, but the paired difference is not
significant ($p = 0.057$). We claim only that $5\%$ of the cells cost no measurable quality. On Blotto the rebuild is
ahead, at $0.045$ against $0.082$ ($p = 0.004$).

IGS and the progressive-sampling family trail because both need more samples per profile
than the budget affords. A $5\%$ budget is ten cell evaluations at a $14\times14$ pool,
against the $294$ per cell that PSP's published schedule requires. The family matches RWPS
only at about $100\times$ the full matrix, where PS-REG-M first prunes (Table~\ref{tab:psp}), a
budget that defeats the purpose of estimating payoffs when every cell is a simulator rollout.

\begin{table}[t]
\centering\small
\setlength{\tabcolsep}{3pt}
\caption{RWPS against the progressive-sampling family on fixed $21\times21$ pools,
exploitability mean $\pm$ SD over four seeds at budgets that are multiples of the matrix.
Bold marks each row's best. The last column is the share of utility indices PS-REG-M
prunes, and the appendix gives all five budgets.}
\label{tab:psp}
\begin{tabular}{@{}llccccc@{}}
\toprule
Game & Budget & RWPS & GS & PSP & PS-REG-M & pruned \\
\midrule
\multirow{3}{*}{Inf.}
 & $1\times$   & $\mathbf{.029 \pm .023}$ & $.162 {\scriptstyle\pm} .138$ & $.222 {\scriptstyle\pm} .081$ & $.214 {\scriptstyle\pm} .106$ & $0$ \\
 & $20\times$  & $\mathbf{.012 \pm .006}$ & $.413 {\scriptstyle\pm} .125$ & $.156 {\scriptstyle\pm} .298$ & $.029 {\scriptstyle\pm} .050$ & $0$ \\
 & $100\times$ & $.007 {\scriptstyle\pm} .003$ & $.025 {\scriptstyle\pm} .040$ & $.005 {\scriptstyle\pm} .003$ & $\mathbf{.003 \pm .002}$ & $0.5\%$ \\
\midrule
\multirow{3}{*}{Blotto}
 & $1\times$   & $\mathbf{.043 \pm .011}$ & $.475 {\scriptstyle\pm} .029$ & $.471 {\scriptstyle\pm} .026$ & $.617 {\scriptstyle\pm} .268$ & $0$ \\
 & $20\times$  & $\mathbf{.025 \pm .004}$ & $.561 {\scriptstyle\pm} .081$ & $.366 {\scriptstyle\pm} .081$ & $.404 {\scriptstyle\pm} .049$ & $0$ \\
 & $100\times$ & $.015 {\scriptstyle\pm} .003$ & $.440 {\scriptstyle\pm} .035$ & $\mathbf{.013 \pm .001}$ & $.017 {\scriptstyle\pm} .007$ & $6.1\%$ \\
\bottomrule
\end{tabular}
\end{table}

\subsection{Budget sweeps on Cybersecurity Games}
\label{sec:cyber-sweep}

On both simulators RWPS reaches the lowest exploitability at the smallest budgets, and the
other arms close the gap only as the budget grows (Figures~\ref{fig:cygym-sweep}
and~\ref{fig:ansg-sweep}). Uniform sampling and MRFS need most of the available cells to reach the same floor, which
RWPS reaches with its first purchase. IGS stays above it throughout, and the
progressive-sampling family buys nothing at any budget shown.
RWPS leads from the first purchase because every arm starts from the same pre-cached
history but only RWPS learns from it, predicting every unbought cell and buying where the
equilibrium is sensitive and that prediction is least certain.

The two simulators separate the acquisition score from the fill differently. On
CyGym the uniform-acquisition ablation tracks the adaptive rule closely, so the
fill carries most of the margin. On ANSG the ablation is well above the floor at budgets 1--3 and reaches it
only at budget~4, three purchases after RWPS (adaptive), so the score makes the
first few purchases count. This is the cleanest test of the score in the paper, because
ANSG keeps every RWPS arm on the rank-one fill and the two variants differ only in which
cells they buy.
The appendix reports the same sweep on
Cyberwheel~\citep{oesch2024cyberwheel}, where the pool is flat and every buying arm ties.

\begin{figure}[t]
\centering
\includegraphics[width=0.72\linewidth]{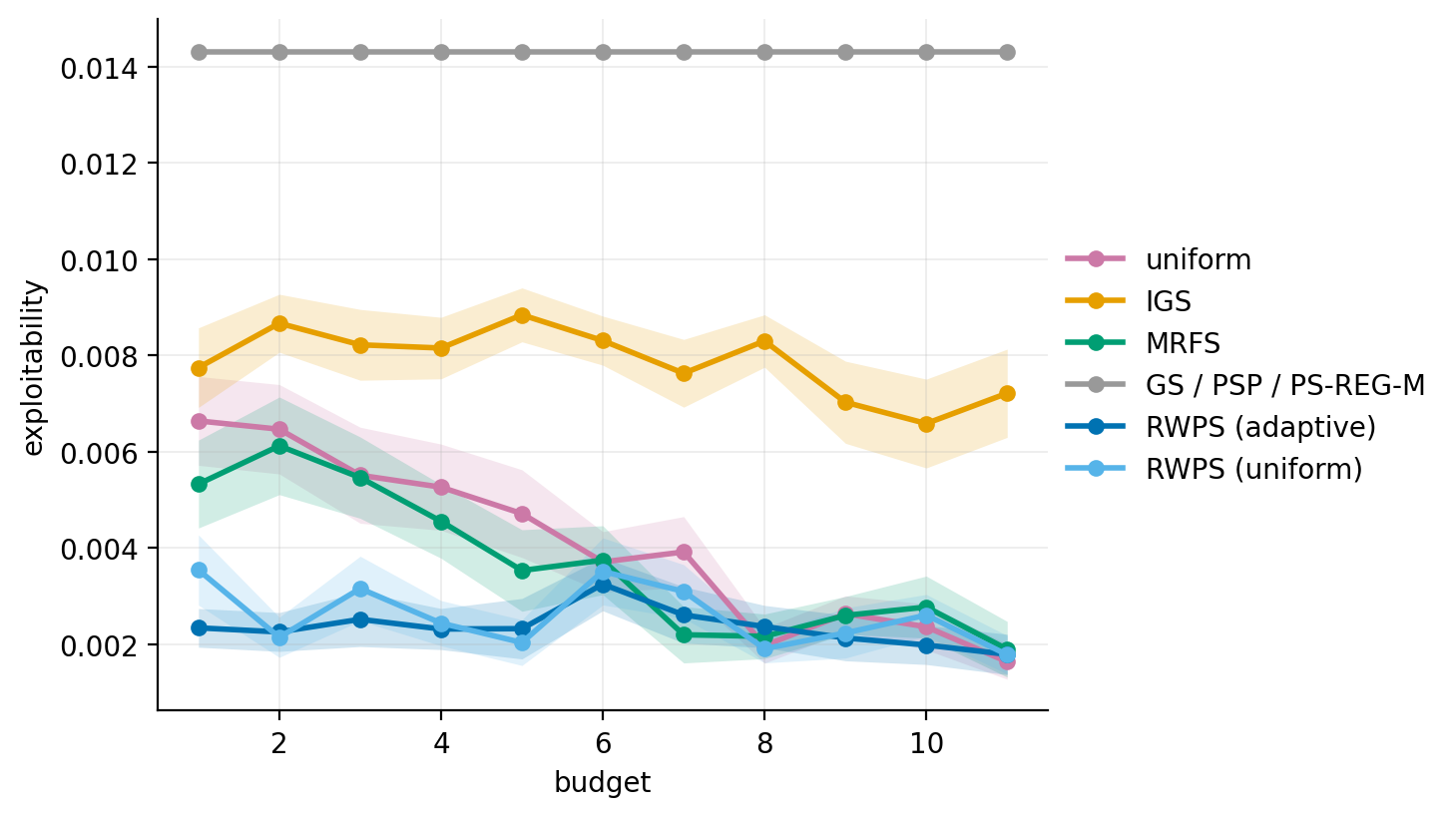}
\caption{CyGym budget sweep. $6\times6$ pool with a $5\times5$ history pre-cached for
every arm, leaving $11$ cells to buy; $16$ seeds. Exploitability is computed against a
reference matrix estimated at $16$ rollouts per cell. Bands are $\pm1$ standard error. GS/PSP/PS-REG-M is constant because its sampling schedule
requires more samples per cell than the budget affords, so it returns the history-only
estimate.}
\label{fig:cygym-sweep}
\end{figure}

\begin{figure}[t]
\centering
\includegraphics[width=0.72\linewidth]{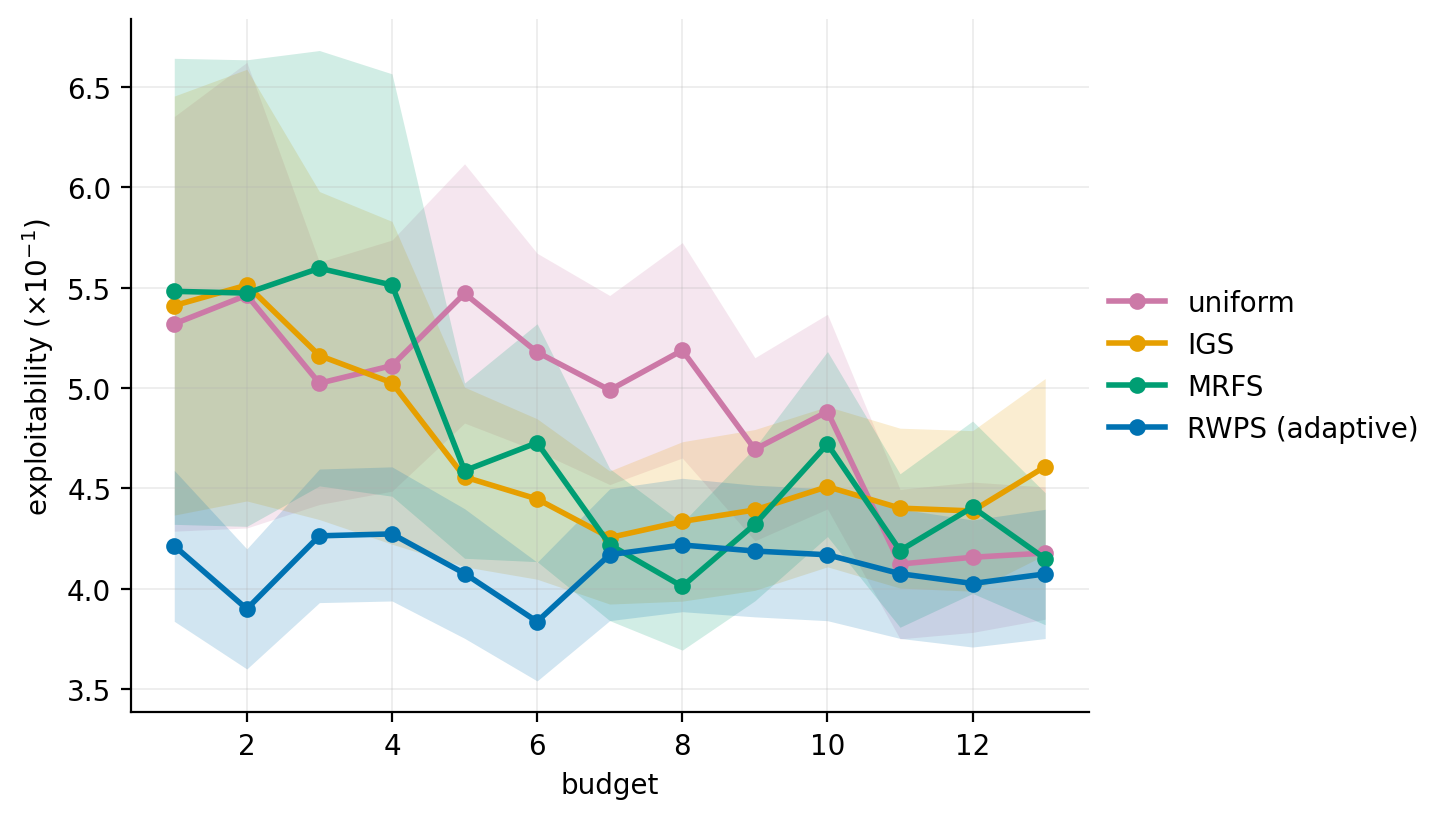}
\caption{ANSG budget sweep. $7\times7$ pool with a $6\times6$ history pre-cached for
every arm, leaving $13$ cells to buy, over $16$ seeds. Exploitability is computed exactly against
the fully simulated matrix of this pool, so no measurement is clamped. Bands are $\pm1$
standard error. Two arms are omitted for legibility. GS/PSP/PS-REG-M is constant at
$1.97\times10^{5}$ payoff units because its sampling schedule requires far more samples
per cell than the budget affords, so it buys nothing and returns the history-only
estimate. RWPS (uniform), the ablation with the acquisition score removed, is $2.11\times10^{4}$, $1.88\times10^{4}$ and
$1.57\times10^{4}$ at budgets 1--3, and reaches the floor only at budget 4, where
RWPS (adaptive) is already at it from budget 1.}
\label{fig:ansg-sweep}
\end{figure}

\subsection{Acquisition Ablation}
\label{sec:acq-ablation}

Table~\ref{tab:psro} varies the payoff build as a whole, so its margin mixes two
mechanisms, choosing which cells to simulate and predicting the ones we skip. Here we fix
the surrogate, budget, batch schedule and exploration rate and vary only the acquisition
score, on the same sixteen paired seeds, so any difference between rows is due to
acquisition alone.

Uniform here keeps the surrogate fill, while Table~\ref{tab:psro}'s replaces it with a
flat value, so comparing the two isolates the fill.

The fill accounts for $62\%$ of the total margin and the acquisition score for
$38\%$. On the informative game the flat-fill baseline reaches $0.113$, the surrogate-fill baseline $0.058$, and the full
method $0.024$. On Blotto the same three are $0.293$, $0.162$, and $0.082$. The
split holds on both games to the nearest percentage point.

\begin{table}[t]
\centering\small
\caption{Acquisition comparison. All rules share the surrogate, budget, batches and exploration
rate and differ only in the score. Mean $\pm$ one standard error of $\eps_{12}$ over sixteen paired seeds. Uncertainty sampling scores by surrogate disagreement
alone~\citep{settles2009active,srinivas2012gpucb}, and uniform selection keeps the fill
but drops the score. The appendix reports three further variants.}
\label{tab:acq}
\begin{tabular}{@{}lcc@{}}
\toprule
Acquisition score & Informative & Blotto \\
\midrule
$\nu$-smoothed product (ours)      & $\mathbf{0.024 \pm 0.004}$ & $\mathbf{0.082 \pm 0.011}$ \\
\citet{sokota2019deviation}        & $0.032 {\scriptstyle\pm} 0.004$ & $0.120 {\scriptstyle\pm} 0.011$ \\
uncertainty ($\hat\sigma$ only)    & $0.040 {\scriptstyle\pm} 0.007$ & $0.116 {\scriptstyle\pm} 0.013$ \\
uniform selection                  & $0.058 {\scriptstyle\pm} 0.016$ & $0.162 {\scriptstyle\pm} 0.019$ \\
\bottomrule
\end{tabular}
\end{table}

The smoothed rule is first on both games. On Blotto it beats
\citet{sokota2019deviation} by $0.038$ under a paired $t$-test ($p = 0.038$) and a
signed-rank test ($p = 0.016$), winning $13$ of $16$ seeds. On the informative game its
$0.008$ lead is not significant ($p = 0.224$), which we report as a tie.

Uncertainty sampling beats uniform selection on both games but loses to the full rule
($p = 0.021$ on Blotto, $p = 0.062$ on the informative game). Disagreement shows where the
fill is wrong, but a highly uncertain cell can still be irrelevant if no equilibrium
strategy plays it.

The appendix sweeps $\nu$, $\eps_{\mathrm{x}}$ and the bootstrap sizes.

\subsection{Regret Bound Analysis}
\label{sec:tightness}

\begin{figure}[t]
\centering
\includegraphics[width=0.6\linewidth]{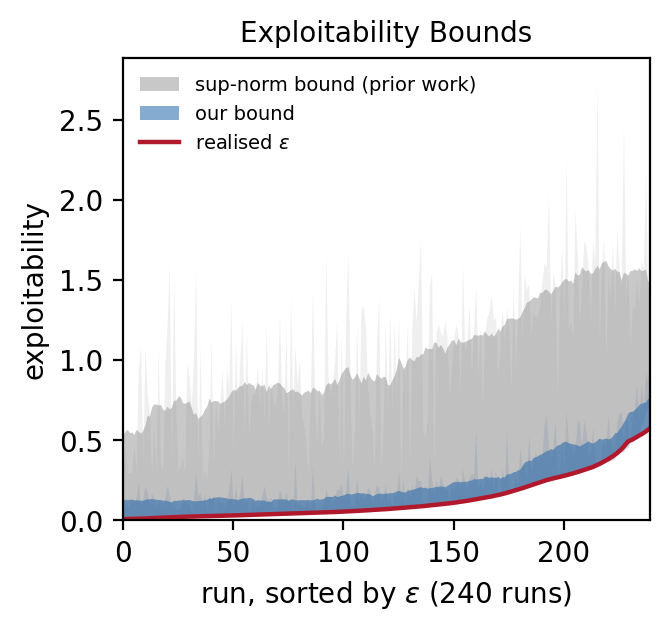}
\caption{Realised exploitability and its bounds on the estimator's own output,
over all $240$ fixed-pool budget-sweep runs on the three $21\times21$ games
(the informative and arbitrary-embedding latent-quality games and Colonel Blotto;
budgets $\{5,10,20,40,70\}\%$ of the matrix; $16$ seeds each), sorted by $\eps$. Shaded regions are each
bound's margin over $\eps$.}
\label{fig:perturb}
\end{figure}

Errors off the equilibrium support do not affect regret, and the estimator's own output
confirms it. Over all $240$ budget-sweep runs every bound holds on every run
(Figure~\ref{fig:perturb}). Our bound, the signed form (a) of Eq.~\eqref{eq:chain} plus the
measured solver term, averages $0.25$ of the sup-norm bound, with a median overestimation
factor of $2.3$ against $11.4$. Forms (c) and (b) average $0.47$ and $0.33$, as the chain
predicts. The estimator puts its error on cells no equilibrium strategy plays, those cells
set the sup-norm bound, and the support weighting ignores them. The solver term is needed,
since without it the signed bound fails on three Blotto runs, each by less than that run's
solver regret. These bounds are diagnostics and cannot be deployed as certificates. Each form
of Eq.~\eqref{eq:chain} contains $\Dhat - D$, so these builds use a fixed-pool bed where the
true matrix is available. The appendix gives per-game factors and the deployable
certificate.

\section{Conclusion}
\label{sec:conclusion}

In simulation-based security games the expensive step is measuring payoffs, and most of
those measurements can be avoided. On a $5\%$ per-build budget RWPS found less exploitable
equilibria than every other budgeted method we tested, and did best at the smallest budgets
on CyGym and ANSG. Our bounds are four to six times tighter than the standard one and need no
access to the true payoffs, so an analyst can state how exploitable a computed defence is
without ever simulating the full game. Together these make equilibrium analysis affordable at
the simulator fidelity the analysis calls for, including on new threat models such as
networks of LLM agents.

\FloatBarrier
\bibliographystyle{plainnat}
\bibliography{refs}

@inproceedings{standen2021cyborg,
  author    = {Standen, Maxwell and Lucas, Martin and Bowman, David and
               Richer, Toby J. and Kim, Junae and Marriott, Damian},
  title     = {{CybORG}: A Gym for the Development of Autonomous Cyber Agents},
  booktitle = {IJCAI-21 1st International Workshop on Adaptive Cyber Defense},
  year      = {2021}
}

@misc{cage2023,
  author       = {{TTCP CAGE Working Group}},
  title        = {Cyber Operations Research Gym},
  howpublished = {\url{https://github.com/cage-challenge/CybORG}},
  year         = {2022}
}

@article{emerson2024cyborgpp,
  author  = {Emerson, Harry and Zhou, Liz and Bowman, David and others},
  title   = {{CybORG++}: An Enhanced Gym for the Development of Autonomous
             Cyber Agents},
  journal = {arXiv preprint arXiv:2410.16324},
  year    = {2024}
}

@article{andrew2022yawningtitan,
  author  = {Andrew, Alex and Spillard, Sam and Collyer, Joshua and Dhir, Neil},
  title   = {Developing Optimal Causal Cyber-Defence Agents via Cyber Security
             Simulation},
  journal = {arXiv preprint arXiv:2207.12355},
  year    = {2022}
}

@misc{cyberbattlesim2021,
  author       = {{Microsoft Defender Research Team}},
  title        = {{CyberBattleSim}},
  howpublished = {\url{https://github.com/microsoft/cyberbattlesim}},
  year         = {2021}
}

@article{molina2021farland,
  author  = {Molina-Markham, Andres and Miniter, Cory and Powell, Becky and
             Ridley, Ahmad},
  title   = {Network Environment Design for Autonomous Cyberdefense},
  journal = {arXiv preprint arXiv:2103.07583},
  year    = {2021}
}

@inproceedings{lanier2026cygym,
  author    = {Lanier, Michael and Vorobeychik, Yevgeniy},
  title     = {{CyGym}: A Simulation-Based Game-Theoretic Analysis Framework
               for Cybersecurity},
  booktitle = {Proceedings of GameSec},
  series    = {LNCS 16223},
  publisher = {Springer},
  year      = {2026}
}

@inproceedings{lanier2026metadoar,
  author    = {Lanier, Michael and Vorobeychik, Yevgeniy},
  title     = {A Scalable Approach to Solving Simulation-Based Network Security
               Games},
  booktitle = {Proceedings of FLAIRS},
  year      = {2026}
}

@inproceedings{lanctot2017psro,
  author    = {Lanctot, Marc and Zambaldi, Vinicius and Gruslys, Audrunas and
               Lazaridou, Angeliki and Tuyls, Karl and P{\'e}rolat, Julien and
               Silver, David and Graepel, Thore},
  title     = {A Unified Game-Theoretic Approach to Multiagent Reinforcement
               Learning},
  booktitle = {Advances in Neural Information Processing Systems (NeurIPS)},
  year      = {2017}
}

@inproceedings{mcmahan2003double,
  author    = {McMahan, H. Brendan and Gordon, Geoffrey J. and Blum, Avrim},
  title     = {Planning in the Presence of Cost Functions Controlled by an
               Adversary},
  booktitle = {International Conference on Machine Learning (ICML)},
  year      = {2003}
}

@article{vorobeychik2007payoff,
  author  = {Vorobeychik, Yevgeniy and Wellman, Michael P. and Singh, Satinder},
  title   = {Learning Payoff Functions in Infinite Games},
  journal = {Machine Learning},
  volume  = {67},
  number  = {1--2},
  pages   = {145--168},
  year    = {2007}
}

@article{wellman2006egta,
  author  = {Wellman, Michael P.},
  title   = {Methods for Empirical Game-Theoretic Analysis},
  journal = {Proceedings of the AAAI Conference on Artificial Intelligence},
  year    = {2006}
}

@inproceedings{rowland2019multiagent,
  author    = {Rowland, Mark and Omidshafiei, Shayegan and Tuyls, Karl and
               P{\'e}rolat, Julien and Valko, Michal and Piliouras, Georgios and
               Munos, R{\'e}mi},
  title     = {Multiagent Evaluation under Incomplete Information},
  booktitle = {Advances in Neural Information Processing Systems (NeurIPS)},
  year      = {2019}
}

@article{omidshafiei2019alpharank,
  author  = {Omidshafiei, Shayegan and Papadimitriou, Christos and
             Piliouras, Georgios and Tuyls, Karl and Rowland, Mark and
             Lespiau, Jean-Baptiste and Czarnecki, Wojciech M. and
             Lanctot, Marc and P{\'e}rolat, Julien and Munos, R{\'e}mi},
  title   = {$\alpha$-Rank: Multi-Agent Evaluation by Evolution},
  journal = {Scientific Reports},
  volume  = {9},
  year    = {2019}
}

@inproceedings{balduzzi2019openended,
  author    = {Balduzzi, David and Garnelo, Marta and Bachrach, Yoram and
               Czarnecki, Wojciech M. and P{\'e}rolat, Julien and
               Jaderberg, Max and Graepel, Thore},
  title     = {Open-ended Learning in Symmetric Zero-sum Games},
  booktitle = {International Conference on Machine Learning (ICML)},
  year      = {2019}
}

@inproceedings{czarnecki2020spinningtops,
  author    = {Czarnecki, Wojciech M. and Gidel, Gauthier and Tracey, Brendan and
               Tuyls, Karl and Omidshafiei, Shayegan and Balduzzi, David and
               Jaderberg, Max},
  title     = {Real World Games Look Like Spinning Tops},
  booktitle = {Advances in Neural Information Processing Systems (NeurIPS)},
  year      = {2020}
}

@article{brown1951fictitious,
  author  = {Brown, George W.},
  title   = {Iterative Solution of Games by Fictitious Play},
  journal = {Activity Analysis of Production and Allocation},
  year    = {1951}
}

@inproceedings{lakshminarayanan2017ensembles,
  author    = {Lakshminarayanan, Balaji and Pritzel, Alexander and
               Blundell, Charles},
  title     = {Simple and Scalable Predictive Uncertainty Estimation using Deep
               Ensembles},
  booktitle = {Advances in Neural Information Processing Systems (NeurIPS)},
  year      = {2017}
}

@article{settles2009active,
  author  = {Settles, Burr},
  title   = {Active Learning Literature Survey},
  journal = {University of Wisconsin-Madison Department of Computer Sciences
             Technical Report 1648},
  year    = {2009}
}

@article{shahriari2016bayesopt,
  author  = {Shahriari, Bobak and Swersky, Kevin and Wang, Ziyu and
             Adams, Ryan P. and de Freitas, Nando},
  title   = {Taking the Human Out of the Loop: A Review of Bayesian
             Optimization},
  journal = {Proceedings of the IEEE},
  volume  = {104},
  number  = {1},
  pages   = {148--175},
  year    = {2016}
}

@inproceedings{tuyls2018generalised,
  author    = {Tuyls, Karl and P{\'e}rolat, Julien and Lanctot, Marc and
               Leibo, Joel Z. and Graepel, Thore},
  title     = {A Generalised Method for Empirical Game Theoretic Analysis},
  booktitle = {Proceedings of the 17th International Conference on Autonomous
               Agents and MultiAgent Systems (AAMAS)},
  year      = {2018}
}

@article{tuyls2020bounds,
  author  = {Tuyls, Karl and P{\'e}rolat, Julien and Lanctot, Marc and
             Hughes, Edward and Everett, Richard and Leibo, Joel Z. and
             Szepesv{\'a}ri, Csaba and Graepel, Thore},
  title   = {Bounds and Dynamics for Empirical Game Theoretic Analysis},
  journal = {Autonomous Agents and Multi-Agent Systems},
  volume  = {34},
  number  = {7},
  year    = {2020}
}

@article{brafman2002rmax,
  author  = {Brafman, Ronen I. and Tennenholtz, Moshe},
  title   = {R-MAX: A General Polynomial Time Algorithm for Near-Optimal
             Reinforcement Learning},
  journal = {Journal of Machine Learning Research},
  volume  = {3},
  pages   = {213--231},
  year    = {2002}
}

@inproceedings{areyanviqueira2019learning,
  author    = {Areyan Viqueira, Enrique and Greenwald, Amy and
               Cousins, Cyrus and Upfal, Eli},
  title     = {Learning Simulation-Based Games from Data},
  booktitle = {Proceedings of the 18th International Conference on
               Autonomous Agents and MultiAgent Systems (AAMAS)},
  pages     = {1778--1780},
  year      = {2019}
}

@inproceedings{areyanviqueira2020improved,
  author    = {Areyan Viqueira, Enrique and Cousins, Cyrus and Greenwald, Amy},
  title     = {Improved Algorithms for Learning Equilibria in
               Simulation-Based Games},
  booktitle = {Proceedings of the 19th International Conference on
               Autonomous Agents and MultiAgent Systems (AAMAS)},
  pages     = {79--87},
  year      = {2020}
}

@article{cousins2022computational,
  author  = {Cousins, Cyrus and Mishra, Bhaskar and
             Areyan Viqueira, Enrique and Greenwald, Amy},
  title   = {Computational and Data Requirements for Learning Generic
             Properties of Simulation-Based Games},
  journal = {arXiv preprint arXiv:2208.06400},
  year    = {2022}
}

@misc{mishra2023regretpruning,
  author       = {Mishra, Bhaskar and Cousins, Cyrus and Greenwald, Amy},
  title        = {Regret Pruning for Learning Equilibria in Simulation-Based Games},
  year         = {2022},
  howpublished = {arXiv preprint arXiv:2211.16670}
}

@article{picheny2016bayesian,
  author  = {Picheny, Victor and Binois, Mickael and Habbal, Abderrahmane},
  title   = {A {B}ayesian Optimization Approach to Find {N}ash Equilibria},
  journal = {Journal of Global Optimization},
  volume  = {73},
  number  = {1},
  pages   = {171--192},
  year    = {2019},
  doi     = {10.1007/s10898-018-0688-0}
}

@inproceedings{sokota2019deviation,
  author    = {Sokota, Samuel and Ho, Caleb and Wiedenbeck, Bryce},
  title     = {Learning Deviation Payoffs in Simulation-Based Games},
  booktitle = {Proceedings of the AAAI Conference on Artificial Intelligence},
  year      = {2019}
}

@inproceedings{vorobeychik2008stochastic,
  author    = {Vorobeychik, Yevgeniy and Wellman, Michael P.},
  title     = {Stochastic Search Methods for {N}ash Equilibrium Approximation
               in Simulation-Based Games},
  booktitle = {Proceedings of the 7th International Joint Conference on
               Autonomous Agents and Multiagent Systems (AAMAS)},
  pages     = {1055--1062},
  year      = {2008}
}

@inproceedings{jordan2008mrfs,
  author    = {Jordan, Patrick R. and Vorobeychik, Yevgeniy and Wellman, Michael P.},
  title     = {Searching for Approximate Equilibria in Empirical Games},
  booktitle = {Proceedings of the 7th International Joint Conference on
               Autonomous Agents and Multiagent Systems (AAMAS)},
  pages     = {1063--1070},
  year      = {2008}
}

@inproceedings{wiedenbeck2012dpr,
  author    = {Wiedenbeck, Bryce and Wellman, Michael P.},
  title     = {Scaling Simulation-Based Game Analysis through
               Deviation-Preserving Reduction},
  booktitle = {Proceedings of the 11th International Conference on
               Autonomous Agents and Multiagent Systems (AAMAS)},
  pages     = {931--938},
  year      = {2012}
}

@inproceedings{wiedenbeck2014bootstrap,
  author    = {Wiedenbeck, Bryce and Cassell, Ben-Alexander and Wellman, Michael P.},
  title     = {Bootstrap Statistics for Empirical Games},
  booktitle = {Proceedings of the 13th International Conference on
               Autonomous Agents and Multiagent Systems (AAMAS)},
  pages     = {597--604},
  year      = {2014}
}

@article{srinivas2012gpucb,
  author  = {Srinivas, Niranjan and Krause, Andreas and Kakade, Sham M. and
             Seeger, Matthias W.},
  title   = {Information-Theoretic Regret Bounds for {Gaussian} Process
             Optimization in the Bandit Setting},
  journal = {IEEE Transactions on Information Theory},
  volume  = {58},
  number  = {5},
  pages   = {3250--3265},
  year    = {2012}
}

@article{howard2021timeuniform,
  author  = {Howard, Steven R. and Ramdas, Aaditya and McAuliffe, Jon and
             Sekhon, Jasjeet},
  title   = {Time-Uniform, Nonparametric, Nonasymptotic Confidence Sequences},
  journal = {The Annals of Statistics},
  volume  = {49},
  number  = {2},
  pages   = {1055--1080},
  year    = {2021}
}

@article{wellman2025egtasurvey,
  title   = {Empirical Game-Theoretic Analysis: A Survey},
  author  = {Wellman, Michael P. and Tuyls, Karl and Greenwald, Amy},
  journal = {Journal of Artificial Intelligence Research},
  year    = {2025},
  note    = {arXiv:2403.04018}
}

@inproceedings{wang2023mfgegta,
  title     = {Empirical Game-Theoretic Analysis for Mean Field Games},
  author    = {Wang, Yongzhao and Ma, Qiurui and Wellman, Michael P.},
  booktitle = {Proceedings of the 22nd International Conference on Autonomous Agents and Multiagent Systems (AAMAS)},
  pages     = {1025--1033},
  year      = {2023}
}

@inproceedings{wang2022strategyexploration,
  title     = {Evaluating Strategy Exploration in Empirical Game-Theoretic Analysis},
  author    = {Wang, Yongzhao and Ma, Qiurui and Wellman, Michael P.},
  booktitle = {Proceedings of the 21st International Conference on Autonomous Agents and Multiagent Systems (AAMAS)},
  pages     = {1346--1354},
  year      = {2022}
}

@inproceedings{oesch2024cyberwheel,
  author    = {Oesch, Sean and Chaulagain, Amul and Weber, Brian and Dixson, Matthew and Sadovnik, Amir and Roberson, Benjamin and Watson, Cory and Austria, Phillipe},
  title     = {Towards a High Fidelity Training Environment for Autonomous Cyber Defense Agents},
  booktitle = {Proceedings of the 17th Cyber Security Experimentation and Test Workshop},
  series    = {CSET '24},
  pages     = {91--99},
  publisher = {Association for Computing Machinery},
  year      = {2024},
  doi       = {10.1145/3675741.3675752}
}

@inproceedings{rashid2021alpharank,
  author    = {Rashid, Tabish and Zhang, Cheng and Ciosek, Kamil},
  title     = {Estimating $\alpha$-Rank by Maximizing Information Gain},
  booktitle = {Proceedings of the AAAI Conference on Artificial Intelligence},
  volume    = {35},
  pages     = {5673--5681},
  year      = {2021},
  doi       = {10.1609/aaai.v35i6.16712}
}

@misc{savani2025gambit,
  author       = {Savani, Rahul and Turocy, Theodore L.},
  title        = {Gambit: The Package for Computation in Game Theory, Version 16.2.1},
  year         = {2025},
  howpublished = {\url{https://www.gambit-project.org}}
}

@inproceedings{han2024arise,
  author    = {Han, Minbiao and Zhang, Fengxue and Chen, Yuxin},
  title     = {No-Regret Learning of {N}ash Equilibrium for Black-Box Games via {G}aussian Processes},
  booktitle = {Proceedings of the 40th Conference on Uncertainty in Artificial Intelligence (UAI)},
  year      = {2024},
  url       = {https://arxiv.org/abs/2405.08318}
}

@inproceedings{aldujaili2018blackbox,
  author    = {Al-Dujaili, Abdullah and Hemberg, Erik and O'Reilly, Una-May},
  title     = {Approximating {N}ash Equilibria for Black-Box Games: A {B}ayesian Optimization Approach},
  booktitle = {Workshop on Optimization in Multiagent Systems (OptMAS)},
  year      = {2018},
  url       = {https://arxiv.org/abs/1804.10586}
}

@inproceedings{nguyen2026conservative,
  author    = {Nguyen, Austin A. and Wellman, Michael P.},
  title     = {Conservative Equilibrium Discovery in Offline Game-Theoretic Multiagent Reinforcement Learning},
  booktitle = {Proceedings of the 25th International Conference on Autonomous Agents and MultiAgent Systems (AAMAS)},
  year      = {2026},
  url       = {https://arxiv.org/abs/2603.00374}
}

@inproceedings{hammar2020finding,
  author    = {Hammar, Kim and Stadler, Rolf},
  title     = {Finding Effective Security Strategies through Reinforcement Learning and Self-Play},
  booktitle = {Proceedings of the 16th International Conference on Network and Service Management (CNSM)},
  pages     = {1--9},
  publisher = {IEEE},
  year      = {2020}
}

@inproceedings{kunz2022multiagent,
  author    = {Kunz, Thomas and Fisher, Christian and La Novara-Gsell, James and Nguyen, Christopher and Li, Li},
  title     = {A Multiagent {CyberBattleSim} for {RL} Cyber Operation Agents},
  booktitle = {Proceedings of the International Conference on Computational Science and Computational Intelligence (CSCI)},
  pages     = {897--903},
  publisher = {IEEE},
  year      = {2022}
}

@misc{schwartz2019autonomous,
  author       = {Schwartz, Jonathon and Kurniawati, Hanna},
  title        = {Autonomous Penetration Testing using Reinforcement Learning},
  year         = {2019},
  howpublished = {arXiv:1905.05965}
}

@inproceedings{hammar2026csle,
  author    = {Hammar, Kim},
  title     = {{CSLE}: A Reinforcement Learning Platform for Autonomous Security Management},
  booktitle = {Proceedings of the Ninth Annual Conference on Machine Learning and Systems (MLSys)},
  year      = {2026}
}

\clearpage
\appendix
\section{Appendix}
\label{app:root}

This appendix collects the proofs, the specification of the ANSG domain, implementation
details and hyperparameters, and additional experiments.

\subsection{Proofs}
\label{app:proofs}

This appendix collects full proofs of the results in Section~\ref{sec:theory}, in order of
appearance. Throughout, rows index defender strategies $d_1,\dots,d_{n_D}$ and columns
attacker strategies $a_1,\dots,a_{n_A}$; $D, \Dhat \in \mathbb{R}^{n_D\times n_A}$ hold the
defender's true and estimated payoffs, and $A, \Ahat \in \mathbb{R}^{n_A\times n_D}$ the
attacker's, indexed attacker-first. For a mixed profile $(p,q)$ the defender's regret in a
matrix $M$ is
\[
\mathrm{reg}_D(p,q;M) \;=\; \max_i\, e_i^\top M q \;-\; p^\top M q ,
\]
and symmetrically $\mathrm{reg}_A(p,q;N) = \max_j e_j^\top N p - q^\top N p$ for the
attacker. A profile is an $\eps$-Nash equilibrium of $(D,A)$ when both regrets are at most
$\eps$, and its exploitability is $\eps = \max\{\mathrm{reg}_D, \mathrm{reg}_A\}$. That
$(\hat p,\hat q)$ is an $\eps_{\mathrm{solve}}$-approximate equilibrium of $(\Dhat,\Ahat)$
means
\begin{equation}
\label{eq:solve}
\max_i e_i^\top \Dhat \hat q \;\le\; \hat p^\top \Dhat \hat q + \eps_{\mathrm{solve}},
\qquad
\max_j e_j^\top \Ahat \hat p \;\le\; \hat q^\top \Ahat \hat p + \eps_{\mathrm{solve}}.
\end{equation}
Write $\Delta_D = \Dhat - D$ and $\Delta_A = \Ahat - A$, and $|\cdot|$ for the entrywise
absolute value.

\begin{proof}[Proof of Lemma~\ref{lem:supp}]
We prove the four links of Eq.~\eqref{eq:chain} for the defender in turn, then the
attacker's analogue and the consequence for $\eps$.

We begin with form (a). Fix a pure deviation $i$. Since $\Dhat = D + \Delta_D$,
\[
e_i^\top D \hat q \;=\; e_i^\top \Dhat \hat q - e_i^\top \Delta_D \hat q .
\]
By Eq.~\eqref{eq:solve}, $e_i^\top \Dhat \hat q \le \hat p^\top \Dhat \hat q + \eps_{\mathrm{solve}}$,
and expanding $\Dhat$ once more,
$\hat p^\top \Dhat \hat q = \hat p^\top D \hat q + \hat p^\top \Delta_D \hat q$. Combining,
\[
e_i^\top D \hat q - \hat p^\top D \hat q
\;\le\; \eps_{\mathrm{solve}} + \hat p^\top \Delta_D \hat q + \bigl(-\Delta_D \hat q\bigr)_i .
\]
The right-hand side depends on $i$ only through the last term, so taking the maximum over $i$
on both sides gives
$\mathrm{reg}_D - \eps_{\mathrm{solve}} \le \max_i(-\Delta_D\hat q)_i + \hat p^\top \Delta_D \hat q$,
which is (a). No property of the game other than bilinearity has been used; in particular
the argument does not assume zero sum.

To pass from (a) to (b), we bound the two terms separately. For every row $i$,
$(-\Delta_D \hat q)_i = \sum_j \hat q_j (-\Delta_{D,ij}) \le \sum_j \hat q_j |\Delta_{D,ij}|
= (|\Delta_D|\hat q)_i$, because $\hat q_j \ge 0$ and $-x \le |x|$; taking the maximum over
$i$ gives $\max_i(-\Delta_D\hat q)_i \le \tau_D(\hat q)$. Likewise
$\hat p^\top \Delta_D \hat q = \sum_{i,j}\hat p_i \hat q_j \Delta_{D,ij}
\le \sum_{i,j}\hat p_i \hat q_j |\Delta_{D,ij}| = \hat p^\top |\Delta_D| \hat q$, since all weights
are non-negative.

To pass from (b) to (c), write $\hat p^\top |\Delta_D|\hat q = \sum_i \hat p_i (|\Delta_D|\hat q)_i$.
This is a convex combination of the numbers $(|\Delta_D|\hat q)_i$, since $\hat p$ is a
probability vector, and a convex combination never exceeds its largest element. Hence
$\hat p^\top |\Delta_D|\hat q \le \max_i (|\Delta_D|\hat q)_i = \tau_D(\hat q)$, and
(b) $\le 2\tau_D(\hat q)$.

Finally, for (c) to (d), note that for every row,
$(|\Delta_D|\hat q)_i = \sum_j \hat q_j |\Delta_{D,ij}| \le \|\Delta_D\|_\infty \sum_j \hat q_j
= \|\Delta_D\|_\infty \le \tau$, so $\tau_D(\hat q) \le \tau$. The sum ranges only over
$j \in \mathrm{supp}(\hat q)$: an entry in a column of zero mass contributes nothing, however
large. For the equality condition, suppose $\tau_D(\hat q) = \tau$. Then some row $i$
attains $\sum_j \hat q_j |\Delta_{D,ij}| = \tau$; since each term satisfies
$|\Delta_{D,ij}| \le \tau$ and the weights sum to one, this forces $|\Delta_{D,ij}| = \tau$ for
every $j$ with $\hat q_j > 0$. So a worst-error cell lies in a played column, which is the
stated necessary condition.

The same argument covers the attacker, whose payoff to pure strategy $j$ against $\hat p$ is
$e_j^\top A \hat p$; the four steps apply with the roles exchanged: replace
$(D,\Dhat,\hat p,\hat q,i)$ by $(A,\Ahat,\hat q,\hat p,j)$ and use the second inequality of
Eq.~\eqref{eq:solve}. This gives the attacker's chain with
$\tau_A(\hat p) = \max_j (|\Delta_A|\hat p)_j$ in place of $\tau_D(\hat q)$; played rows now
play the role of played columns.

Each form bounds its player's regret minus $\eps_{\mathrm{solve}}$. The
exploitability is the larger of the two regrets, so it is at most $\eps_{\mathrm{solve}}$
plus the larger of the two players' values of whichever form is used, which makes
$(\hat p,\hat q)$ an $\eps$-Nash equilibrium of $(D,A)$ for that $\eps$.
\end{proof}

\begin{proof}[Proof of Theorem~\ref{thm:eps}]
The argument has three steps: a confidence event under which every simulated entry is
accurate, a split of each row's weighted error on that event, and an application of
Lemma~\ref{lem:supp}.

We first build a confidence event under which every simulated entry is accurate. Fix one
payoff entry, say $D_{ij}$, and let
$\bar X_n$ be the average of its first $n$ rollouts, each equal to $D_{ij}$ plus independent
$\sigma$-sub-Gaussian noise. A stitched time-uniform boundary of the kind developed
by~\citet{howard2021timeuniform} gives, for any $\delta' \in (0,1)$,
\begin{align*}
\Pr\Bigl[\,&\exists\, n \ge 1:\\
&\bigl|\bar X_n - D_{ij}\bigr| >
\sigma\sqrt{\tfrac{2(\log(2/\delta') + \log\log_2(2n))}{n}}\,\Bigr] \;\le\; \delta' ,
\end{align*}
where the factor $2$ inside the logarithm accounts for the two tails. The guarantee holds
for all $n$ simultaneously, and therefore also at any random visit count, including one
chosen by an acquisition rule on the basis of earlier samples; this is what makes it valid
under adaptive sampling, where a fixed-$n$ bound would not be. There are $n_D n_A$ entries
in $D$ and $n_D n_A$ in $A$. Setting $\delta' = \delta/(2n_Dn_A)$ and taking a union bound,
the event
\begin{align*}
\mathcal{E}:\quad &\bigl|\Dhat_{ij} - D_{ij}\bigr| \le \zeta(m_{ij},\delta)
\ \text{ and }\ \bigl|\Ahat_{ji} - A_{ji}\bigr| \le \zeta(m_{ij},\delta)\\
&\text{for every simulated cell } (i,j)
\end{align*}
holds with probability at least $1-\delta$, where $m_{ij}$ is the cell's visit count and
$\log(2/\delta') = \log(4n_Dn_A/\delta)$ as in $\zeta$. Finally, $\zeta(\cdot,\delta)$ is
non-increasing in the visit count: the numerator grows only doubly logarithmically, while
the denominator grows linearly. Since every simulated cell has $m_{ij} \ge m$, on
$\mathcal{E}$ every simulated entry has error at most $\zeta(m,\delta)$.

Next we split each row's weighted error on that event. Fix a row $i$, and let $S_i$ and $U_i$ be the columns
whose cells in that row are simulated and surrogate-filled, respectively. On $\mathcal{E}$,
\begin{align*}
(|\Delta_D|\hat q)_i
&= \sum_{j\in S_i} \hat q_j |\Delta_{D,ij}| + \sum_{j\in U_i} \hat q_j |\Delta_{D,ij}|\\
&\le \zeta(m,\delta)\sum_{j\in S_i}\hat q_j \;+\; \beta \sum_{j\in U_i} \hat q_j ,
\end{align*}
using Step 1 on the first sum and the assumption that surrogate entries have absolute error
at most $\beta$ on the second. The first sum of weights is at most one. The second is the
$\hat q$-mass row $i$ places on surrogate cells, which is at most its maximum over rows,
$w_{\mathrm{sur}}(\hat q)$. Hence $(|\Delta_D|\hat q)_i \le \zeta(m,\delta) + \beta\,
w_{\mathrm{sur}}(\hat q)$ for every $i$, and taking the maximum,
$\tau_D(\hat q) \le \zeta(m,\delta) + \beta\, w_{\mathrm{sur}}(\hat q)$. The attacker's
version, $\tau_A(\hat p) \le \zeta(m,\delta) + \beta\, w_{\mathrm{sur}}(\hat p)$, follows in
the same way over the attacker's rows.

It remains to apply Lemma~\ref{lem:supp}. Form (c) of Lemma~\ref{lem:supp} gives
$\mathrm{reg}_D \le \eps_{\mathrm{solve}} + 2\tau_D(\hat q)$ and
$\mathrm{reg}_A \le \eps_{\mathrm{solve}} + 2\tau_A(\hat p)$. Substituting Step 2 and taking the
larger of the two regrets yields the stated bound on $\eps$, on the event $\mathcal{E}$ of
probability at least $1-\delta$. Because $\mathcal{E}$ was built to hold at every visit
count, the bound holds for whatever cells and visit counts the acquisition rule chose.
\end{proof}

\begin{proof}[Proof of Corollary~\ref{cor:cert}]
Work on the event $\mathcal{E}$ of the previous proof, which has probability at least
$1-\delta$, and consider the defender; the attacker is symmetric.

We first bound the best deviation from above, and claim that $D \le D^{\mathrm{opt}} + \zeta_{m,\delta}$
entrywise. On a simulated cell, $D^{\mathrm{opt}}_{ij}$ is the measured mean, which on
$\mathcal{E}$ is within $\zeta_{m,\delta}$ of $D_{ij}$. On an unsimulated cell,
$D^{\mathrm{opt}}_{ij}$ is the maximum payoff, which bounds $D_{ij}$ by definition. Since
$\hat q \ge 0$ and sums to one, it follows that for every $i$,
$e_i^\top D \hat q \le e_i^\top D^{\mathrm{opt}}\hat q + \zeta_{m,\delta}$, and so
$\max_i e_i^\top D\hat q \le \max_i (D^{\mathrm{opt}}\hat q)_i + \zeta_{m,\delta}$.

Next we bound the equilibrium value from below. The value term needs the matrix the solver was
given to be accurate where the equilibrium plays, that is, on the support block
$\mathrm{supp}(\hat p)\times\mathrm{supp}(\hat q)$. When that block is simulated, every term
of $\hat p^\top \Dhat\hat q = \sum_{i,j}\hat p_i\hat q_j\Dhat_{ij}$ with non-zero weight is a
measured mean, so on $\mathcal{E}$,
$\hat p^\top D \hat q \ge \hat p^\top \Dhat \hat q - \zeta_{m,\delta}$.

Subtracting the second bound from the first,
$\mathrm{reg}_D = \max_i e_i^\top D\hat q - \hat p^\top D\hat q
\le \max_i (D^{\mathrm{opt}}\hat q)_i - \hat p^\top \Dhat \hat q + 2\zeta_{m,\delta}$, which is at
most the stated bound; the $\eps_{\mathrm{solve}}$ term is slack and is kept only so that the
certificate dominates Theorem~\ref{thm:eps} term by term. The same argument with
$A^{\mathrm{opt}}$ and the rows of $\mathrm{supp}(\hat p)$ bounds $\mathrm{reg}_A$, and the
exploitability is the larger of the two.

The condition on the support block cannot simply be dropped. If some support cell is surrogate-filled, the value term
can overstate $\hat p^\top D \hat q$ by up to $\beta$ times its weight, and that error is not
visible in the data. Replacing $\Dhat$ in the value term by a pessimistic fill, measured
values on simulated cells and the minimum payoff elsewhere, restores the bound without any
condition, at the price of a looser value term.
\end{proof}

\begin{proof}[Proof of Corollary~\ref{cor:coverage}]
We first show that both multipliers vanish. By definition,
\[
w_{\mathrm{sur}}(\hat q) = \max_i \sum_{j} \hat q_j\,\mathbf{1}[(i,j)\text{ surrogate}],
\]
and only columns $j \in \mathrm{supp}(\hat q)$ carry weight. The hypothesis says that every cell
in those columns is simulated, in every row, so for each row the sum has no non-zero term
and $w_{\mathrm{sur}}(\hat q) = 0$. The same argument over the rows of
$\mathrm{supp}(\hat p)$ gives $w_{\mathrm{sur}}(\hat p) = 0$.

Substituting $w_{\mathrm{sur}} = 0$ into Theorem~\ref{thm:eps}
removes the term $\beta\, w_{\mathrm{sur}}$ for both players, leaving
$\eps \le \eps_{\mathrm{solve}} + 2\zeta(m,\delta)$. Since $\beta$ appears only in the removed
term, the bound holds whatever the surrogate's error.

It remains to count the cells. The set is the union of $C = \{1{:}n_D\}\times\mathrm{supp}(\hat q)$, with
$n_D s_A$ cells, and $R = \mathrm{supp}(\hat p)\times\{1{:}n_A\}$, with $n_A s_D$ cells. Their
intersection is $\mathrm{supp}(\hat p)\times\mathrm{supp}(\hat q)$, with $s_D s_A$ cells, so by
inclusion and exclusion $|C\cup R| = n_D s_A + n_A s_D - s_D s_A = B^\ast$.
\end{proof}

\begin{proof}[Proof of Proposition~\ref{prop:coverage}]
Coverage is monotone: a cell once simulated is never returned to the surrogate, so the set
of cached cells is non-decreasing in $t$.

Let $\mathcal{R}_t$ denote the set consisting of every row simulated in all $n_A$ columns
together with every column simulated in all $n_D$ rows, after round $t$. Round $t$ simulates
$\{1{:}n_D\}\times\mathrm{supp}(\hat q_t)$, which completes every column of
$\mathrm{supp}(\hat q_t)$, and $\mathrm{supp}(\hat p_t)\times\{1{:}n_A\}$, which completes
every row of $\mathrm{supp}(\hat p_t)$; hence
$\mathrm{supp}(\hat p_t) \cup \mathrm{supp}(\hat q_t) \subseteq \mathcal{R}_t$.

Suppose the procedure does not halt after round $t+1$, so
$w_{\mathrm{sur}}(\hat q_{t+1}) > 0$ or $w_{\mathrm{sur}}(\hat p_{t+1}) > 0$. In the first
case there are $i$ and $j$ with $\hat q_{t+1,j} > 0$ and $(i,j)$ unsimulated, so column $j$
is incomplete and therefore $j \notin \mathrm{supp}(\hat q_t)$; round $t+1$ simulates
$\{1{:}n_D\}\times\mathrm{supp}(\hat q_{t+1}) \ni (\cdot,j)$ and so adds $j$ to
$\mathcal{R}$. The second case is symmetric in rows. Every non-halting round therefore
enlarges $\mathcal{R}$ by at least one element, and $|\mathcal{R}| \le n_D + n_A$, which
gives (i).

For (ii), halting means $w_{\mathrm{sur}}(\hat p) = w_{\mathrm{sur}}(\hat q) = 0$ at the
mixture returned by the final solve, which is exactly the hypothesis of
Corollary~\ref{cor:coverage} evaluated at that mixture; the conclusion of
Theorem~\ref{thm:eps} then loses its $\beta$ term. For (iii), monotonicity bounds the total
simulated set by the whole matrix.
\end{proof}

\begin{proof}[Proof of the per-iteration accounting above]
Write $\hat p, \hat q$ for the unchanged supports and $d_{\mathrm{new}}, a_{\mathrm{new}}$
for the added strategies. The set
$\{1{:}n_D\}\times\mathrm{supp}(\hat q)$ acquires one new row, contributing the $s_A$ cells
$(d_{\mathrm{new}}, j)$ for $j \in \mathrm{supp}(\hat q)$; the set
$\mathrm{supp}(\hat p)\times\{1{:}n_A\}$ acquires one new column, contributing the $s_D$
cells $(i, a_{\mathrm{new}})$ for $i \in \mathrm{supp}(\hat p)$. The cell
$(d_{\mathrm{new}}, a_{\mathrm{new}})$ lies in neither set, since by hypothesis neither new
strategy is in a support. All other cells of the coverage set were simulated at the previous
iteration and are cached. A cached full rebuild instead simulates every newly exposed cell,
of which there are $(k+1)^2 - k^2 = 2k+1$.
\end{proof}

\begin{proof}[Proof of Corollary~\ref{cor:budget}]
Under coverage $w_{\mathrm{sur}}(\hat p) = w_{\mathrm{sur}}(\hat q) = 0$, so the bound of
Corollary~\ref{cor:cert} reduces to $\eps_{\mathrm{solve}} + 2\zeta(m,\delta)$ and is at most
$\eps_{\mathrm{tgt}}$ exactly when $2\zeta(m,\delta) \le \eps_{\mathrm{tgt}} -
\eps_{\mathrm{solve}}$. Since $\zeta(m,\delta)$ is strictly decreasing in $m$, the set of
admissible $m$ is an up-set and $m^\ast$ is its minimum. The cost follows because coverage
simulates $B^\ast$ cells and each must carry $m^\ast$ visits. For the rate, drop the
$\log\log_2(2m)$ term, which is $o(\log m)$: solving
$\sigma\sqrt{2\log(4 n_D n_A/\delta)/m} = (\eps_{\mathrm{tgt}} - \eps_{\mathrm{solve}})/2$
gives $m = 8\sigma^2\log(4 n_D n_A/\delta)/(\eps_{\mathrm{tgt}} -
\eps_{\mathrm{solve}})^2$, and reinstating the $\log\log$ term inflates this by a factor
$1 + o(1)$.
\end{proof}

\begin{proof}[Proof of Theorem~\ref{thm:asymptotic}]
For (i), coverage is monotone: a simulated cell is never returned to the surrogate. Let
$\mathcal{R}$ be the set of rows simulated in every column together with the columns
simulated in every row. By the argument of Proposition~\ref{prop:coverage}, whenever the
iteration fails to halt it enlarges $\mathcal{R}$ by at least one element. Since
$|\mathcal{R}| \le n_D + n_A$ and $\mathcal{R}$ never shrinks, at most $n_D + n_A$ such
extensions can occur across the whole sequence, however many times the mixture changes;
after the last, every subsequent solve halts immediately and $w_{\mathrm{sur}} = 0$ holds
permanently.

For (ii), on the stated event Theorem~\ref{thm:eps} gives
\[
\eps \;\le\; \eps_{\mathrm{solve}} + 2\max_{\pi}\bigl[\zeta(m,\delta)
      + \beta\, w_{\mathrm{sur}}(\pi)\bigr],
\]
and $w_{\mathrm{sur}} = 0$ removes the second term. The radius
\[
\zeta(m,\delta) = \sigma\sqrt{\frac{2\bigl(\log(4n_Dn_A/\delta) + \log\log_2(2m)\bigr)}{m}}
\]
is decreasing in $m$ with $\zeta \to 0$: the numerator grows doubly logarithmically while
the denominator grows linearly, so the ratio inside the root tends to $0$ by
L'H\^opital's rule on the $\infty/\infty$ form. The $\log\log$ term therefore costs only a bounded factor at any
practical $m$, and $\zeta$ retains the $m^{-1/2}$ decay of a fixed-design radius; this is
why the budget of Corollary~\ref{cor:budget} inherits the fixed-design rate up to that
factor. Crucially the bound is time-uniform: it holds simultaneously for all $m$ and all
sample-dependent stopping rules on a single event of probability $1-\delta$, so the limit is
taken on that event without a further union bound.

For (iii), $\eps_{\mathrm{solve}}$ is the exploitability of $(\hat p, \hat q)$ in the
estimated game and vanishes when the restricted game is solved exactly, giving $\eps \to 0$.
Since $\eps$ is exploitability in the true restricted game, the limit profile is an exact
equilibrium of it. The cells outside the coverage set enter neither $\max_i (Dq)_i$ nor
$\max_j (Ap)_j$ by Corollary~\ref{cor:coverage}, so their values are unconstrained.
\end{proof}

\begin{proof}[Proof of Lemma~\ref{lem:collision}]
By Eq.~\eqref{eq:fingerprint}, $\varphi(\pi)$ is a function of
$\{\mu_\pi(\xi_p)\}_{p \le P}$ alone, so any two policies agreeing on $\Xi$ receive
identical fingerprints whatever they do elsewhere. Payoffs, by contrast, depend on behaviour
at all reachable states. Choosing $\pi'$ to agree with $\pi$ on $\Xi$ and to realise the
desired behaviour off it, which the hypothesis permits, gives the claim.
\end{proof}

\begin{proof}[Proof of Theorem~\ref{thm:trap}]
Suppose for contradiction that some such $\mathcal{A}$, at budget $B$, simulates a trap cell
on \emph{every} twin--trap instance.

Consider the \emph{all-duplicate} instance $I_0$, in which a designated attacker strategy
$a_\star$ is a genuine copy of $a_{\mathrm{twin}}$: same fingerprint, same payoffs, likewise
weak. Run $\mathcal{A}$ on $I_0$ and let $\mathcal{S}_B$ be the set of cells it simulates, so
$|\mathcal{S}_B| \le B < n_D n_A$. Since $\mathcal{S}_B$ omits at least one cell, and by
enlarging the pool if necessary, choose an attacker column $j_\star$ that $\mathcal{S}_B$
does not meet.

By Lemma~\ref{lem:collision} there is a policy $a_{\mathrm{trap}}$ with
$\varphi(a_{\mathrm{trap}}) = \varphi(a_{\mathrm{twin}})$ whose payoffs make it a dominant
deviation. Let $I_1$ be the instance obtained from $I_0$ by installing $a_{\mathrm{trap}}$
in column $j_\star$, leaving everything else untouched. Then $I_1$ is a twin--trap instance,
and $I_0$ and $I_1$ differ only in the payoffs of cells in column $j_\star$.

We claim $\mathcal{A}$ executes the same run on $I_0$ and $I_1$. We argue by induction over
rounds. In the base case, before any cell is simulated, both runs hold the empty cache and
the same fingerprints, since $\varphi(a_{\mathrm{trap}}) = \varphi(a_{\mathrm{twin}}) =
\varphi(a_\star)$ and $\varphi$ is a function of the actor alone, not of payoffs; the two
runs therefore agree at round $1$. For the inductive hypothesis, suppose the runs agree
through round $r$, so the caches $\mathcal{C}_r$ agree as multisets of simulated cells. The
entries recorded there agree as values as well: every cell in
$\mathcal{C}_r \subseteq \mathcal{S}_B$ lies outside column $j_\star$, where $I_0$ and $I_1$
are identical by construction. Hence $\mathcal{A}$ receives identical inputs at round $r+1$
and, being deterministic, selects identical cells there, so the runs agree through round
$r+1$. By induction, $\mathcal{A}$ selects the same cells on $I_0$ and $I_1$ at every
round.

Therefore $\mathcal{A}$ simulates exactly $\mathcal{S}_B$ on $I_1$, and $\mathcal{S}_B$ does
not meet column $j_\star$, so $\mathcal{A}$ simulates no trap cell on $I_1$. But $I_1$ is a
twin--trap instance, on which the hypothesis asserts that $\mathcal{A}$ does simulate a trap
cell. This is the contradiction, and it establishes the first claim.

For the exploitability claim, on $I_1$ the surrogate assigns column $j_\star$ its prediction
for $a_{\mathrm{twin}}$, since the two share a fingerprint and no label from $j_\star$ ever
enters the cache. That prediction is uniformly weak, so the returned $\hat q$ places no mass
on the trap. In the true game $a_{\mathrm{trap}}$ is a dominant deviation, so the attacker's
regret $\max_j (A\hat p)_j - \hat q^\top A \hat p$ is at least
$\eta \defeq (A\hat p)_{j_\star} - \hat q^\top A \hat p$, the gap between the trap's true
payoff against $\hat p$ and the value the estimated game reports. This is fixed by the
instance and does not shrink with $B$.

Uniform draws escape the argument because the construction needs a deterministic
$\mathcal{S}_B$ outside of which to place the trap. A rule visiting each unsimulated cell
with probability bounded below admits no such placement, and reaches column $j_\star$ almost
surely.
\end{proof}

\subsection{Fingerprint collisions}
\label{app:collision}

\begin{lemma}[Fingerprint collisions exist]
\label{lem:collision}
Fix a probe bank $\Xi = \{\xi_1,\dots,\xi_P\}$ and let $\Pi$ be a policy class in which,
for any $\pi \in \Pi$ and any prescribed behaviour off $\Xi$, some $\pi' \in \Pi$ agrees
with $\pi$ on $\Xi$ and realises that behaviour elsewhere. Then for every $\pi \in \Pi$
there is $\pi' \in \Pi$ with $\varphi(\pi') = \varphi(\pi)$ whose payoffs against a fixed
opponent are arbitrary.
\end{lemma}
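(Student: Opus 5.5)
The plan is a direct construction from the definition of the fingerprint in Eq.~\eqref{eq:fingerprint} and the closure hypothesis on $\Pi$.

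\emph{Step 1: the fingerprint depends only on the probe bank.} By Eq.~\eqref{eq:fingerprint}, $\varphi(\pi)$ is the concatenation of the mean and standard deviation over $p$ of the actor outputs $\mu_\pi(\xi_p)$. It is therefore a deterministic function of the finite tuple $(\mu_\pi(\xi_1),\dots,\mu_\pi(\xi_P))$ and of nothing else. Consequently, any $\pi'$ with $\mu_{\pi'}(\xi_p)=\mu_\pi(\xi_p)$ for every $p\le P$ satisfies $\varphi(\pi')=\varphi(\pi)$.

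\emph{Step 2: construct $\pi'$.} Fix a target payoff value $v$ in the payoff range against the fixed opponent. Choose an off-$\Xi$ behaviour $b_v$ whose expected payoff against that opponent is $v$. The closure hypothesis then supplies $\pi'\in\Pi$ that agrees with $\pi$ on $\Xi$ and realises $b_v$ elsewhere. Step~1 gives matching fingerprints, and by construction the payoff of $\pi'$ is $v$. Since $v$ was arbitrary, the payoffs of $\pi'$ against the fixed opponent are arbitrary, which is the claim.

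\emph{Main obstacle.} The weak point is whether every target payoff is actually realisable by behaviour off $\Xi$. Two issues arise:
\begin{itemize}
\item A finite probe bank is typically a measure-zero, or at least non-exhaustive, subset of the reachable states, so behaviour there should not pin down the rollout distribution.
\item If the trajectory could be forced through probe states only, payoffs might be constrained.
\end{itemize}
I would handle both by reading ``arbitrary'' relative to the hypothesis. The hypothesis lets us prescribe any behaviour off $\Xi$, and payoffs are expectations over rollouts that visit states off $\Xi$. So the claim is that the payoff ranges over everything achievable by such prescriptions, and the construction delivers exactly that. If needed, I would also state explicitly that the opponent is held fixed, so that the payoff is a function of the defender's behaviour alone.
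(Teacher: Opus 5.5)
Your proposal is correct and follows the paper's proof essentially step for step. The fingerprint of Eq.~\eqref{eq:fingerprint} depends only on $\{\mu_\pi(\xi_p)\}_{p\le P}$, and the closure hypothesis supplies a $\pi'$ that agrees with $\pi$ on $\Xi$ and realises the desired behaviour off it; the realisability caveat you flag is left equally informal in the paper, which simply asserts that payoffs depend on behaviour at all reachable states, so your reading of ``arbitrary'' relative to the hypothesis matches the paper's intent.
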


The hypothesis on $\Pi$ is mild. It asks only that agreement on $P$ probe states does not
determine behaviour everywhere, which fails only if $\Xi$ happens to be a sufficient
statistic for the policy class. For neural controllers on a state space larger than the
probe bank, it holds.

\subsection{Extended discussion}
\label{app:extended}

This appendix gives the longer treatment of three topics the main text summarises:
how RWPS relates to prior payoff-estimation and equilibrium-search methods, what the
coverage result adds to known EGTA principles and how it prices a certificate, and
further detail on the growing-pool PSRO comparison. Throughout, RWPS is the budgeted estimator of
Algorithm~\ref{alg:build}: it simulates a subset of payoff cells, fills the rest
with an ensemble surrogate over strategy embeddings, and chooses cells by a score
built from bootstrap equilibrium marginals. The \emph{deviation-relevant set} is
the set of cells pairing every strategy of one player with each strategy in the
support of the other player's equilibrium mixture.

\subsubsection{Relation to prior payoff-estimation methods}
\label{app:ext-related}

GS, PSP and PS-REG-M allocate samples so that a
uniform $\eps$-approximation guarantee holds over every cell, pruning profiles
once their bounds separate. We run all three with the sampling schedules from their published works. At the
budgets RWPS targets that guarantee cannot be bought: PSP's published schedule begins at $294$ samples per cell, whereas a $5\%$ per-build
budget on a $14\times14$ pool affords ten cell evaluations in total. None of the
three methods therefore prunes a single profile, and all return the history-only
estimate. This reflects the operating regime, not a flaw in those algorithms,
which are designed for budgets that cover the matrix many times over.

\citet{sokota2019deviation} also pair a
learned model with an acquisition rule defined at the level of the solution. They
train a network mapping mixed profiles to deviation payoffs and refine it by
sampling in the neighbourhood of each candidate equilibrium. Three differences
separate their setting from ours. Their model takes a mixture as input and
returns deviation payoffs directly, so no payoff table is built, whereas RWPS
estimates individual cells over strategy embeddings and passes the completed
matrix to a solver. They exploit role symmetry to scale in the number of players,
while our games are two-player, asymmetric, and have pools that grow by best
response. And their refinement is validated empirically, with no bound on the
resulting equilibrium, which is what Theorem~\ref{thm:eps} and
Corollary~\ref{cor:coverage} supply. Because their rule is a score over cells
rather than a full payoff-build method, the fair comparison runs it inside an
otherwise identical build, which is how Table~\ref{tab:acq} reports it.

Choosing which expensive
evaluation to run is the subject of active learning and Bayesian
optimisation~\citep{settles2009active,shahriari2016bayesopt}, and deep
ensembles~\citep{lakshminarayanan2017ensembles} supply the per-cell uncertainty
RWPS uses. \citet{picheny2016bayesian} apply Bayesian optimisation to Nash
equilibrium computation in black-box games, treating the equilibrium rather than
a payoff as the object of interest. They target continuous strategy spaces with a
Gaussian-process surrogate and a fixed game; RWPS targets a finite pool that grows
across PSRO iterations and reuses every simulated entry as training data. Both
depart from standard Bayesian optimisation in the target: accuracy of the
equilibrium of the whole matrix, not pointwise accuracy. RWPS measures uncertainty
about that equilibrium by re-solving bootstrap realisations of the game, which is
nearly free because solving is cheap next to simulating.

Of their two acquisition criteria, probability of equilibrium has a discrete
analogue, reported as the bootstrap-inclusion variant in
Appendix~\ref{app:acq-variants}: a cell is scored by how often it enters a
bootstrap support instead of by a Gaussian-process posterior. Stepwise
uncertainty reduction does not carry over. It targets the probability that a
cell \emph{is} the equilibrium, which is defined for the pure equilibria of a
discretised space but not for the mixed equilibria, with supports $(2,2)$ and
$(12,12)$, that our games have. Extending it to mixed equilibria is open.

Two lines extend
\citet{picheny2016bayesian}. \citet{aldujaili2018blackbox} search for pure equilibria of
continuous black-box games by Gaussian-process regression on an approximate regret
surface, with an $\eps$-greedy trade-off between exploiting that surface and reducing
uncertainty. \citet{han2024arise} give ARISE, which restricts acquisition to a region of
interest around the current equilibrium estimate and proves no-regret guarantees for the
resulting query sequence. Both minimise a global regret functional over a continuous
strategy space and query strategy profiles directly. RWPS instead completes a finite
payoff matrix whose rows and columns are policies produced by PSRO, reuses every cell
across iterations, and certifies the equilibrium of the completed matrix; ARISE's
region-of-interest restriction plays a role similar to our equilibrium weighting, and
its no-regret analysis is over queries rather than over the returned mixture.

\citet{nguyen2026conservative} also face payoffs that
cannot be freely simulated, but from the opposite direction: their dataset is fixed, so
they replace the simulator with an ensemble dynamics model and steer best responses away
from profiles the data does not support. Our setting keeps the simulator and rations
calls to it, so uncertainty drives \emph{which} cells to pay for rather than a
conservatism penalty on responses. The two are complementary: a budgeted estimator
decides what to simulate next, and a conservative response objective decides what to
trust when nothing more can be simulated.

\citet{rowland2019multiagent}
maintain confidence intervals over payoffs until the response graph is
determined, and \citet{rashid2021alpharank} choose each query to maximise
information gain about $\alpha$-rank. Neither optimises exploitability against a
Nash mixture, so neither is run as a baseline. Information-gain acquisition is
represented among the baselines through the IGS method of
\citet{jordan2008mrfs}.

\subsubsection{Coverage, budgets and certificates}
\label{app:ext-theory}

That verifying a profile's regret
requires only the profile and its unilateral deviations is long established in
empirical game-theoretic analysis (EGTA)~\citep{wellman2006egta}.
\citet{wellman2025egtasurvey} state it directly and note that the count stays
modest for mixed profiles with small supports; the same idea underlies
minimum-regret-first search (MRFS)~\citep{jordan2008mrfs} and
deviation-preserving reduction~\citep{wiedenbeck2012dpr}. We do not claim the
principle. Corollary~\ref{cor:coverage} adds two things: a closed-form count,
$B^\ast = n_D s_A + n_A s_D - s_D s_A$ cells for a bimatrix game with pool sizes
$n_D, n_A$ and support sizes $s_D, s_A$, and a coupling to the error bound of
Theorem~\ref{thm:eps}.

The coupling is the more consequential part. In prior treatments the cells outside
the deviation set are \emph{unevaluated}: pruned, deferred or unknown, with
guarantees stated only over what was sampled. In RWPS they are \emph{filled} by a
surrogate the solver reads, and the corollary shows that however inaccurate that
fill is, it cannot reach either player's regret; the surrogate term
$w_{\mathrm{sur}}$ is zero regardless of the surrogate's error $\beta$. This is
what makes interpolation admissible in place of pruning, and what turns the
deviation-relevant set into a budget for an estimator rather than a stopping rule
for a sampler.

The connection is exact. MRFS confirms a \emph{pure}
profile of a two-player $n\times n$ game by evaluating its entire unilateral
deviation set, the row and column through that cell, which is $2n-1$
evaluations. Setting $s_D = s_A = 1$ in Corollary~\ref{cor:coverage} gives
$n + n - 1 = 2n-1$. The deviation-relevant set is MRFS's confirmation criterion
extended from a pure profile to a mixed one.

We need full columns on the
opponent's support, since only those columns enter a player's expected payoff, and
full rows on a player's own support, since deviations range over all of that
player's strategies. Nothing outside this set affects either player's regret.
Coverage does not simulate the whole matrix: the remaining $n_D n_A - B^\ast$
cells stay surrogate-filled and are read by the solver, and the corollary shows
only that they carry zero weight in regret.

Theorem~\ref{thm:eps} bounds exploitability by a
statistical term, from finite rollouts per simulated cell, and a representational
term $\beta$, from surrogate error on unsimulated cells. Only the statistical term
shrinks with spending. The representational term depends on whether fingerprints
happen to predict payoffs in the game at hand, cannot be certified in advance, and
by Theorem~\ref{thm:trap} can be arbitrarily large. Covering the
deviation-relevant set removes it outright, leaving a purely statistical
certificate: the guarantee a full rebuild would give, without a full rebuild and
without any assumption about surrogate quality.

$B^\ast$ is linear in pool and support sizes, so
support size rather than matrix size sets the price of coverage. The support sizes
are unknown before solving, but they can be estimated by re-solving many
bootstrap realisations of the surrogate-filled game, following the bootstrap
methodology of \citet{wiedenbeck2014bootstrap}; this costs solves, not rollouts.
Before committing any simulation, a practitioner can therefore read off whether
coverage is a small fraction of the matrix, in which case budgeted estimation is
worth enabling, or close to all of it, in which case a full rebuild is the better
choice. Treated as a one-shot prediction the estimate can be wrong;
Appendix~\ref{sec:fixedpoint} shows how to iterate it to a fixed point instead.

Coverage fixes \emph{which} cells are
simulated; the number of rollouts per cell fixes \emph{how tight} the certificate
of Corollary~\ref{cor:cert} is.

\begin{corollary}[Budget for a target certificate]
\label{cor:budget}
Fix $\delta$ and a target $\eps_{\mathrm{tgt}} > \eps_{\mathrm{solve}}$. If the
deviation-relevant set is covered, so that $w_{\mathrm{sur}} = 0$, then the certificate of
Corollary~\ref{cor:cert} is at most $\eps_{\mathrm{tgt}}$ as soon as every simulated cell
carries
\[
m \;\ge\; m^\ast \;=\; \min\{\, m : 2\zeta(m,\delta) \le \eps_{\mathrm{tgt}} - \eps_{\mathrm{solve}} \,\},
\]
and the total rollout cost of achieving it is $B^\ast \cdot m^\ast$, with
\[
m^\ast \;=\; O\!\left(\frac{\sigma^2 \log(n_D n_A/\delta)}{(\eps_{\mathrm{tgt}} - \eps_{\mathrm{solve}})^2}\right)
\]
up to the $\log\log$ factor carried by the time-uniform radius.
\end{corollary}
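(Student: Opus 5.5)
The plan is to combine Corollary~\ref{cor:coverage} with Corollary~\ref{cor:cert}, then invert the confidence radius $\zeta(m,\delta)$. First, under coverage every cell of $\{1{:}n_D\}\times\mathrm{supp}(\hat q)$ and $\mathrm{supp}(\hat p)\times\{1{:}n_A\}$ is simulated. In particular the support block $\mathrm{supp}(\hat p)\times\mathrm{supp}(\hat q)$ is simulated, so the hypothesis of Corollary~\ref{cor:cert} holds.

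Next we show that the optimistic deviation terms agree with the estimated ones. For the defender,
\[
(D^{\mathrm{opt}}\hat q)_i = \sum_{j\in\mathrm{supp}(\hat q)} \hat q_j D^{\mathrm{opt}}_{ij}.
\]
Every such cell is simulated, so $D^{\mathrm{opt}}_{ij}=\Dhat_{ij}$ there, and hence $D^{\mathrm{opt}}\hat q=\Dhat\hat q$. The same holds for the attacker, with $A^{\mathrm{opt}}\hat p=\Ahat\hat p$. The first term of the certificate is therefore
\[
\max_i(\Dhat\hat q)_i-\hat p^\top\Dhat\hat q ,
\]
which is at most $\eps_{\mathrm{solve}}$ by the approximate-equilibrium condition. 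The certificate collapses to $2\eps_{\mathrm{solve}}+2\zeta(m,\delta)$. This is one $\eps_{\mathrm{solve}}$ more than the statement's accounting. I would handle it by citing Theorem~\ref{thm:eps} with $w_{\mathrm{sur}}=0$ directly, which gives $\eps_{\mathrm{solve}}+2\zeta$. Alternatively, I would note from the proof of Corollary~\ref{cor:cert} that the added $\eps_{\mathrm{solve}}$ there is slack. Either route yields the certified level $\eps_{\mathrm{solve}}+2\zeta(m,\delta)$, which is at most $\eps_{\mathrm{tgt}}$ iff $2\zeta(m,\delta)\le\eps_{\mathrm{tgt}}-\eps_{\mathrm{solve}}$.

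Then I would establish monotonicity of the radius. The quantity $\zeta(m,\delta)^2\propto(\log(4n_Dn_A/\delta)+\log\log_2(2m))/m$ is decreasing in $m$, because the numerator grows doubly logarithmically, and it tends to $0$. So the admissible $m$ form an up-set, $m^\ast$ exists and is finite, and any $m\ge m^\ast$ works. This monotonicity is also what the proof of Theorem~\ref{thm:eps} used to pass from $m_{ij}\ge m$ to a common radius. The cost follows because coverage simulates exactly $B^\ast$ cells, each with $m^\ast$ visits.

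Finally, for the rate I would drop the $\log\log_2(2m)$ term and solve
\[
\sigma\sqrt{2\log(4n_Dn_A/\delta)/m}=(\eps_{\mathrm{tgt}}-\eps_{\mathrm{solve}})/2 ,
\]
which gives $m=8\sigma^2\log(4n_Dn_A/\delta)/(\eps_{\mathrm{tgt}}-\eps_{\mathrm{solve}})^2$. Reinstating the $\log\log$ term multiplies this by a $1+o(1)$ factor, which is absorbed into the stated qualification.

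The main obstacle I expect is the $\eps_{\mathrm{solve}}$ bookkeeping in the first step, since the statement as written would double-count the solver term. I would resolve it by taking the sharper form of the bound, as above.
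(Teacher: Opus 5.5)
Your proposal is correct and follows the paper's proof step for step: reduce the certificate under coverage to $\eps_{\mathrm{solve}}+2\zeta(m,\delta)$, use monotonicity of $\zeta$ to get an up-set with minimum $m^\ast$, charge $m^\ast$ visits to each of the $B^\ast$ covered cells, and drop the $\log\log$ term to get $m = 8\sigma^2\log(4n_Dn_A/\delta)/(\eps_{\mathrm{tgt}}-\eps_{\mathrm{solve}})^2$. Your bookkeeping remark is sound and more careful than the paper, which only asserts the reduction to $\eps_{\mathrm{solve}}+2\zeta$: under coverage $D^{\mathrm{opt}}\hat q=\Dhat\hat q$ and $A^{\mathrm{opt}}\hat p=\Ahat\hat p$, so the certificate as literally stated can reach $2\eps_{\mathrm{solve}}+2\zeta$, and the threshold holds for the slack-free certificate from the proof of Corollary~\ref{cor:cert} (your second route), whereas your first route bounds $\eps$ through Theorem~\ref{thm:eps} rather than the certificate itself.
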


Here $\zeta(m,\delta)$ is the concentration radius for a cell mean from $m$
rollouts, $\sigma$ the payoff noise scale, and $\eps_{\mathrm{solve}}$ the solver's
own regret. The two factors are bought separately and neither substitutes for the
other: $B^\ast$ removes $\beta$, and $m^\ast$ shrinks $\zeta$. Both are computable in
advance, $B^\ast$ from bootstrap supports and $m^\ast$ by inverting $\zeta$, and
every quantity in the certificate is computable from simulation data. A deployment
can therefore quote a budget against a target exploitability before spending it,
including where the true equilibrium is unavailable. Coverage is cheap and
precision is not: at $\sigma = 0.1$ on a $30\times30$ pool, $2\zeta$ is roughly half
the payoff range at $m = 4$, and reaching $0.05$ needs $m^\ast \approx 512$, while
pool size enters only logarithmically.

The unsigned forms of Eq.~\eqref{eq:chain} charge errors in both directions, but the signed
form (a) does not: only a deviation row whose value the estimate understates, or a support
whose value it overstates, produces regret. This is what makes optimism admissible. Filling
the unsimulated cells of a player's own matrix at the maximum payoff, in the tradition of
R-max~\citep{brafman2002rmax}, understates no deviation, so the deviation term reduces to
statistical noise on simulated cells and only the support term remains; Corollary~\ref{cor:cert}
uses exactly this fill, and only to evaluate the bound after the solve. Using the same fill
inside the solve instead changes the object solved. Every unsimulated cell then looks
maximal, so a best response points at it; it enters the support, is simulated, proves
ordinary, and leaves, and the next unsimulated cell takes its place. With $n_D n_A$ cells to
chase, this certifies only after nearly the whole matrix is bought
(Appendix~\ref{app:diagnostics}). Optimism certifies a solution; it does not save
simulation.

\subsubsection{Necessity of forced exploration}
\label{sec:twintrap}

The asymptotic guarantee of RWPS rests on its forced-exploration draws. They
visit every cell infinitely often almost surely, so running means converge and
Lemma~\ref{lem:supp} with $\tsupp \to 0$ closes the argument; the two-phase
Borel--Cantelli proof is standard. No acquisition rule that reads only
fingerprints and cached payoffs can replace these draws. A fingerprint is a finite
summary of a policy's behaviour, so distinct policies can share one while
differing arbitrarily in payoff (Appendix~\ref{app:collision}), and such a rule
cannot tell them apart.

\begin{theorem}[Necessity of exploration]
\label{thm:trap}
Call an attacker strategy $a_{\mathrm{twin}}$ a \emph{twin} if it is densely simulated and
uniformly weak, and $a_{\mathrm{trap}}$ a \emph{trap} if
$\varphi(a_{\mathrm{trap}}) = \varphi(a_{\mathrm{twin}})$ while its true payoffs make it a
dominant deviation. Let $\mathcal{A}$ be any deterministic acquisition rule whose selections
are a function of the fingerprints $\{\varphi(s)\}_{s \in S_D \cup S_A}$ and the cache
$\mathcal{C}$ alone. Then for every budget $B < n_D n_A$ there is a twin--trap instance on
which $\mathcal{A}$ simulates no trap cell, and on which the returned mixture has
exploitability at least $\eta > 0$ with $\eta$ determined by the instance and independent
of $B$.
\end{theorem}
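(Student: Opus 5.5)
The plan is an indistinguishability (adversary) argument against a deterministic rule. I would first fix a base instance $I_0$ in which a designated attacker strategy $a_\star$ is a genuine duplicate of the twin $a_{\mathrm{twin}}$: it has the same fingerprint and the same uniformly weak payoffs. I would run $\mathcal{A}$ on $I_0$ and let $\mathcal{S}_B$ be the set of at most $B < n_D n_A$ cells it simulates. Since $\mathcal{S}_B$ misses at least one cell, I want an attacker column $j_\star$ that $\mathcal{S}_B$ does not touch at all. If every column is touched, I would enlarge the pool with extra duplicate columns; the budget is fixed while the column count grows, so some duplicate column is eventually untouched. This enlargement step needs care and is where I expect the main obstacle. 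Adding columns changes the fingerprint multiset and so changes the run. To handle it, I would fix the enlarged duplicate pool first, with more than $B$ copies of the twin, and only then run $\mathcal{A}$ and pick $j_\star$. Pigeonhole over more than $B$ identical columns then guarantees an unsimulated one.

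Next I would invoke Lemma~\ref{lem:collision} to obtain $a_{\mathrm{trap}}$ with $\varphi(a_{\mathrm{trap}})=\varphi(a_{\mathrm{twin}})$ and payoffs making it a dominant deviation. For definiteness, give it maximal attacker payoff against every defender row. I would then form $I_1$ by replacing column $j_\star$ of $I_0$ with $a_{\mathrm{trap}}$. The two instances agree on all fingerprints and on every cell outside column $j_\star$.

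The core step is an induction over acquisition rounds showing that $\mathcal{A}$ makes identical selections on $I_0$ and $I_1$. At round one the inputs are identical: the fingerprints match and the cache is empty. If the runs agree through round $r$, every cached cell lies in $\mathcal{S}_B$, hence outside column $j_\star$, so the cached values coincide. A deterministic rule then selects the same cell at round $r+1$. Hence $\mathcal{A}$ simulates exactly $\mathcal{S}_B$ on $I_1$ and never touches the trap column, which proves the first claim.

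For the exploitability claim, I would argue that the final estimate on column $j_\star$ in $I_1$ is the same as in $I_0$. The surrogate sees only fingerprints and cache, so it predicts the twin's weak payoffs there, and the returned $(\hat p,\hat q)$ is the same as on $I_0$ with no mass on $j_\star$. In the true game $I_1$ the attacker's regret is at least
\[
\eta=(A\hat p)_{j_\star}-\hat q^\top A\hat p .
\]
The trap's payoff is maximal while every other column's payoff is bounded by the twin-level weak value, so $\eta$ is at least the gap between the maximum payoff and the twin's weak value. That gap is a property of the instance and does not depend on $B$. To make $\eta$ independent of $B$, I would use a single choice of payoffs (twin value $0$, trap value $1$) across all $B$.
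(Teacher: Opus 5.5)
Your route is the paper's: a duplicate base instance $I_0$, a column $j_\star$ that the run on $I_0$ never touches, a trap installed there via Lemma~\ref{lem:collision}, the round-by-round induction showing the two runs coincide, and the attacker-regret gap $\eta=(A\hat p)_{j_\star}-\hat q^\top A\hat p$. Your handling of the enlargement is more careful than the paper's ``by enlarging the pool if necessary''. You fix more than $B$ twin copies before running $\mathcal{A}$ and then use pigeonhole. The paper enlarges after the run, which alters the fingerprints and hence the very run it is meant to extend.

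The step that fails as written is ``the returned $(\hat p,\hat q)$ is the same as on $I_0$ with no mass on $j_\star$.'' Sameness is right, because the estimated matrices are a function of fingerprints and cache. Zero mass does not follow.

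In your instance every non-trap attacker column has the twin's value $0$, and all untouched copies receive one common surrogate prediction. So the estimated game leaves the attacker indifferent up to rollout noise, and a solver may put positive mass on $j_\star$. For example, noise can make the simulated copies look slightly worse than the common prediction on the untouched ones. If $j_\star$ is then the only untouched column, it can receive all of the mass.

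Because every other column has true payoff $0$, the attacker's regret on $I_1$ is exactly $1-\hat q_{j_\star}$, which can then be $0$. The paper gets zero mass from the twin being weak \emph{relative to} the rest of the pool. Making every other column twin-level discards that.

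The repair is cheap. The entire run on $I_0$, including the returned $\hat q$, is fixed before $j_\star$ is chosen and is reproduced exactly on $I_1$. So take at least $B+2$ twin copies, which leaves at least two untouched columns, and choose $j_\star$ among them to minimise $\hat q_j$. Then $\hat q_{j_\star}\le 1/2$ and $\eta\ge 1/2$ for every $B$.

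Separately, make the instance noise-free, or couple rollout noise cell by cell. Then ``the cached values coincide'' holds literally, and $\mathcal{S}_B$, and hence $j_\star$, is a fixed set rather than a random one.
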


\subsubsection{Progressive sampling at budgets where it prunes}
\label{app:ps-high}

The growing-pool comparison runs the progressive-sampling family below the budgets it was
designed for. To show where it does become competitive, Table~\ref{tab:psp} in the main paper, and Table~\ref{tab:psp-full} in full, compare RWPS
with GS, PSP and PS-REG-M on fixed $21\times21$ pools at budgets from $0.2\times$ to
$100\times$ the $441$-cell matrix, with each family member implemented from its original
specification. PS-REG-M, the member with mixed-equilibrium guarantees, prunes nothing below
$100\times$; there it prunes $0.5\%$ of utility indices on the informative game and $6.1\%$
on Blotto. PSP prunes nothing at any budget tested, and the pure-equilibrium variant
PS-REG+ starts earlier, pruning $0.7\%$ and $4.2\%$ at $20\times$. At $100\times$ PSP and
PS-REG-M reach exploitability comparable to RWPS: none of the four differences is
significant (paired $t$-test, $p \ge 0.15$). Below that, GS and PSP coincide exactly at
$0.2\times$, because neither can certify any cell and both fall back to a flat fill. Across
all thirty pairwise comparisons RWPS is significantly better in twenty and never
significantly worse ($p<0.05$). The comparison is therefore one of operating regime: the
family buys a uniform PAC guarantee and pays its sample complexity, which is affordable
only at budgets many times the matrix. This sweep predates the final acquisition rule and
uses four seeds, so we report it as a regime check rather than as a head-to-head result.

\begin{table}[h]
\centering\small
\setlength{\tabcolsep}{3pt}
\caption{All five budgets of the progressive-sampling comparison (extends
Table~\ref{tab:psp}); exploitability mean $\pm$ SD over four seeds, bold marks each row's best.}
\label{tab:psp-full}
\begin{tabular}{@{}llccccc@{}}
\toprule
Game & Budget & RWPS & GS & PSP & PS-REG-M & pruned \\
\midrule
\multirow{5}{*}{Inf.}
 & $0.2\times$ & $\mathbf{.051 \pm .013}$ & $.307 {\scriptstyle\pm} .112$ & $.307 {\scriptstyle\pm} .112$ & $.166 {\scriptstyle\pm} .136$ & $0$ \\
 & $1\times$   & $\mathbf{.029 \pm .023}$ & $.162 {\scriptstyle\pm} .138$ & $.222 {\scriptstyle\pm} .081$ & $.214 {\scriptstyle\pm} .106$ & $0$ \\
 & $5\times$   & $\mathbf{.037 \pm .019}$ & $.230 {\scriptstyle\pm} .091$ & $.354 {\scriptstyle\pm} .195$ & $.546 {\scriptstyle\pm} .331$ & $0$ \\
 & $20\times$  & $\mathbf{.012 \pm .006}$ & $.413 {\scriptstyle\pm} .125$ & $.156 {\scriptstyle\pm} .298$ & $.029 {\scriptstyle\pm} .050$ & $0$ \\
 & $100\times$ & $.007 {\scriptstyle\pm} .003$ & $.025 {\scriptstyle\pm} .040$ & $.005 {\scriptstyle\pm} .003$ & $\mathbf{.003 \pm .002}$ & $0.5\%$ \\
\midrule
\multirow{5}{*}{Blotto}
 & $0.2\times$ & $\mathbf{.135 \pm .045}$ & $.488 {\scriptstyle\pm} .024$ & $.488 {\scriptstyle\pm} .024$ & $.501 {\scriptstyle\pm} .041$ & $0$ \\
 & $1\times$   & $\mathbf{.043 \pm .011}$ & $.475 {\scriptstyle\pm} .029$ & $.471 {\scriptstyle\pm} .026$ & $.617 {\scriptstyle\pm} .268$ & $0$ \\
 & $5\times$   & $\mathbf{.032 \pm .018}$ & $.556 {\scriptstyle\pm} .093$ & $.419 {\scriptstyle\pm} .050$ & $.440 {\scriptstyle\pm} .080$ & $0$ \\
 & $20\times$  & $\mathbf{.025 \pm .004}$ & $.561 {\scriptstyle\pm} .081$ & $.366 {\scriptstyle\pm} .081$ & $.404 {\scriptstyle\pm} .049$ & $0$ \\
 & $100\times$ & $.015 {\scriptstyle\pm} .003$ & $.440 {\scriptstyle\pm} .035$ & $\mathbf{.013 \pm .001}$ & $.017 {\scriptstyle\pm} .007$ & $6.1\%$ \\
\bottomrule
\end{tabular}
\end{table}

\subsubsection{Growing-pool PSRO: further detail}
\label{app:ext-results}

In the growing-pool comparison of
Table~\ref{tab:psro}, every method starts from the same pool, but each method's
best-response oracle trains against that method's own restricted equilibrium, so a
better payoff estimate steers PSRO toward different strategies and the pools
diverge after the first iteration. Exploitability is measured against the full
game by zero-lifting each restricted mixture, which keeps pools of different
composition comparable. The comparison is therefore end-to-end: it credits RWPS
both for its payoff estimates and for the strategies those estimates lead PSRO to
find. Appendix~\ref{app:embedding} repeats it with identical payoffs and
uninformative fingerprints to isolate what the fingerprint contributes.

RWPS starts behind: at iteration $0$ its
exploitability is $0.262$, against $0.180$ for MRFS and $0.208$ for uniform
sampling. This is expected. The cache is empty at the first build, so the
surrogate is a rank-one fit to almost no labels and there is no history to reuse.
The advantage of RWPS lies in what it carries between builds, and its lead widens
monotonically with the iteration index.

IGS finishes behind both MRFS and RWPS on both games (Table~\ref{tab:psro}). \citet{jordan2008mrfs} recommend IGS
for noisy payoffs, where MRFS is their method for exactly revealed payoffs, and
report it outperforming the ECVI benchmark. We implement it from their Section~6.2:
the next profile sampled is the one maximising Kullback--Leibler information gain
aggregated over its deviation set, with the minimum-regret probability of each
profile taken from their point approximation, and an MRFS prefix supplying initial
samples as in their IGS-MRFS-3 variant.

The budget, not the algorithm, is what puts IGS behind. IGS reallocates samples
among profiles it has already visited; its original experiments run at roughly
twenty-eight samples per profile on a thirty-five-profile game. A budget that
cannot visit every profile once denies it the quantity it reasons about. On a
fixed $21\times21$ pool, IGS improves from $0.177$ to $0.041$ as its allowance
grows from $0.2$ to $1.0$ evaluations per profile, converging toward MRFS.

Figure~\ref{fig:nashconv_simulator_calls}
reports an earlier CyGym comparison of RWPS and uniform sampling as a function of
cumulative simulator calls.

\begin{figure}[h]
    \centering
    \includegraphics[width=0.5\textwidth]{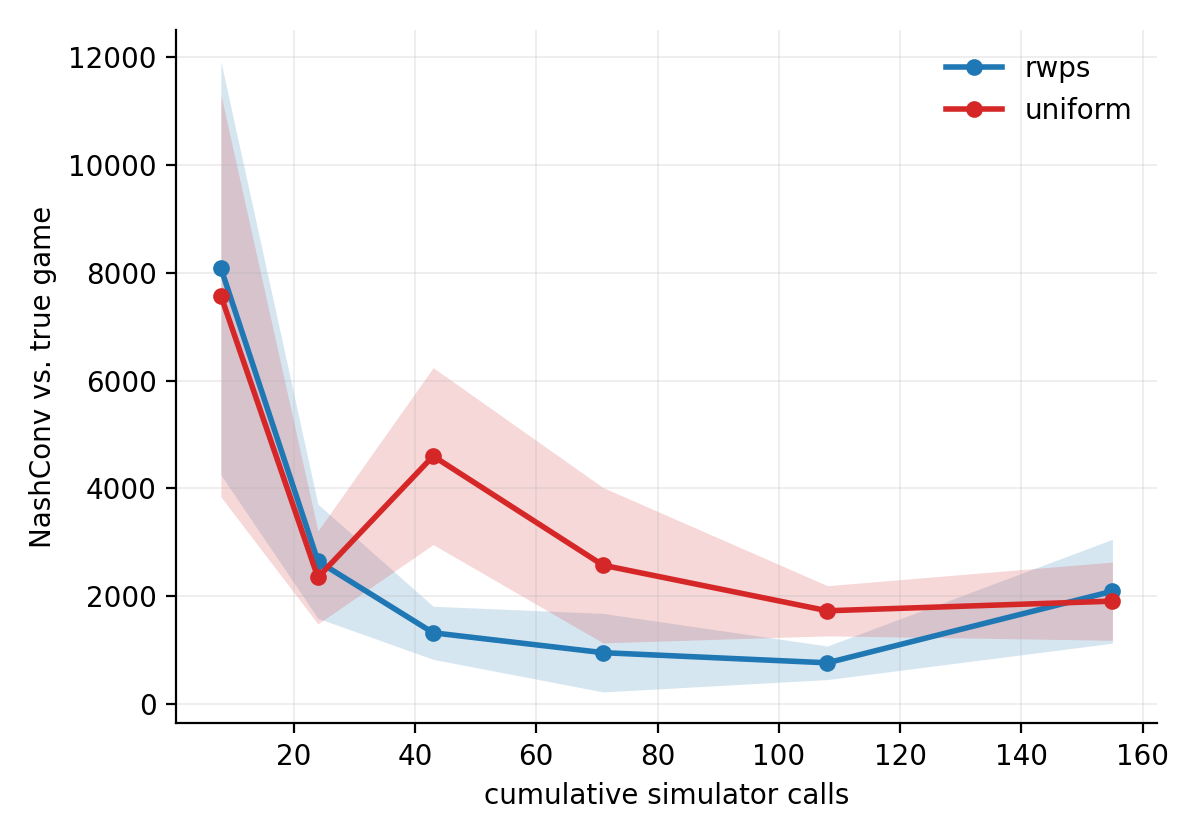}
    \caption{NashConv against cumulative simulator calls in CyGym (lower is better).
    At the lowest and highest budgets uniform sampling suffices; across all budget
    levels RWPS meets or exceeds it.}
    \label{fig:nashconv_simulator_calls}
\end{figure}

The bound evaluated in
Figure~\ref{fig:perturb} includes the solver's own regret $\eps_{\mathrm{solve}}$.
Without that term the signed bound is violated on three runs, all on Blotto and
each by less than that run's solver regret. Fictitious play converges slowly on
games with large supports, and raising its iteration count from $400$ to $20{,}000$
does not remove the residual: measured $\eps_{\mathrm{solve}}$ averages $0.008$ on
Blotto against $0.002$ on the informative game. Because $\eps_{\mathrm{solve}}$ is
computable from the estimated matrices alone, including it costs nothing in
applicability and restores validity on every run, at a cost of $0.006$ in mean
tightness.

\subsubsection{Cyber-defence simulators}
\label{app:simulators}

RWPS needs a simulator in which both an attacker and a defender are learning players, since
PSRO grows a policy pool for each side and every payoff entry pairs one policy from each. We
surveyed the environments in current use against that requirement; Table~\ref{tab:sims}
summarises the result.

CybORG~\citep{standen2021cyborg}, which underpins the CAGE challenges~\citep{cage2023}, and
CyGym~\citep{lanier2026cygym} both support learning agents on each side, and we use the
latter directly. Cyberwheel~\citep{oesch2024cyberwheel} supports one learning red and one
learning blue agent simultaneously and is the third simulator in our evaluation.
gym-idsgame~\citep{hammar2020finding} models intrusion prevention as a Markov game between
a learned attacker and a learned defender, and its accompanying study trains both by
self-play against an opponent pool. It is the closest prior use of a policy pool in this
domain, though it estimates no payoff matrix and so has no counterpart to our estimator. The
multiagent extension of CyberBattleSim~\citep{kunz2022multiagent} adds a trainable defender
and reports joint training of red and blue policies, which makes it a candidate once its
code is publicly maintained.

Several widely used environments do not meet the requirement. In
YAWNING-TITAN~\citep{andrew2022yawningtitan} only the defender learns and the attacker is
scripted and probabilistic, and in the standard CyberBattleSim~\citep{cyberbattlesim2021} the
defender is a predefined stochastic process. NASim~\citep{schwartz2019autonomous} models a
penetration-testing attacker with no active defender. CSLE~\citep{hammar2026csle} is a broad
platform combining emulation and simulation across several security-management use cases
rather than a single two-player environment; we have not identified a configuration of it
that fits our setting.

\begin{table}[h]
\centering\small
\caption{Cyber-defence simulators by whether each side is a learning player.
NASim has no defender at all.}
\label{tab:sims}
\begin{tabular}{@{}lccl@{}}
\toprule
Simulator & Learning red & Learning blue & Role here \\
\midrule
CyGym                     & \cmark & \cmark & evaluated \\
Cyberwheel                & \cmark & \cmark & evaluated \\
CybORG / CAGE             & \cmark & \cmark & related \\
gym-idsgame               & \cmark & \cmark & related \\
Multiagent CyberBattleSim & \cmark & \cmark & candidate \\
YAWNING-TITAN             & \xmark & \cmark & excluded \\
CyberBattleSim            & \cmark & \xmark & excluded \\
NASim                     & \cmark & \xmark & excluded \\
CSLE                      & \multicolumn{2}{c}{platform} & deferred \\
\bottomrule
\end{tabular}
\end{table}

\subsection{The Agentic Network Security Game}
\label{app:ansg}

This appendix gives the full specification of the ANSG introduced in
Section~\ref{sec:experiments}.

CyGym~\citep{lanier2026cygym} models compromise as binary node access: the
attacker exploits a node and thereby controls it. In a network of LLM agents
the relevant failure is different. An agent processes workloads by invoking
tools, and an adversary who plants instructions in content the agent reads can
cause it to execute attacker-chosen goals \emph{in addition to} the ones its
principal asked for. The legitimate task still completes, so the subversion
need not look like a failure at all. ANSG replaces access-compromise with this
\emph{goal subversion} model, keeping CyGym's binary per-node state and its
PSRO solution machinery while changing what compromise means and what each
player can observe.

ANSG is a two-player general-sum partially observable stochastic game
\[
\mathcal{G} = \langle \mathcal{S},\, O^D,\, O^A,\, \mathcal{A}^D,\,
\mathcal{A}^A,\, \mathcal{T},\, U^D,\, U^A,\, \gamma \rangle
\]
over a directed communication graph $G = (\mathcal{V},\mathcal{E})$ whose nodes
host agents. Each node carries a behavioural mode
$b_v \in \{\mathtt{benign},\mathtt{malicious}\}$ and a context age $h_v$. Both
players choose policies simultaneously without observing the other's, so Nash
equilibrium is the solution concept and PSRO the approximation method, as in
CyGym.

Attacker and defender valuations are independent rather than opposed: the
attacker gains the value of the subverted goals it executes, while the defender
loses the value of its workloads being corrupted, and these need not sum to a
constant. Per-goal value factors differ by player. This is the property that
makes ANSG a natural test for a budgeted estimator whose theory assumes nothing
about zero-sum structure.

The defender may \emph{audit} a node, receiving a noisy signal whose
detectability rises with the value of the subverted goal; \emph{reprompt} it, a
soft intervention with no CyGym analogue; \emph{reset} it, which restores
benign behaviour but erases context history at a cost; or \emph{isolate} it,
severing communication edges and halting propagation. The attacker may
\emph{inject} a payload into a workload, choosing a target, a goal rank, and a
channel, or \emph{probe} to estimate which previously compromised nodes remain
malicious.

Injection carries a channel choice $\chi$. Indirect prompt injection appends an
attacker goal alongside the legitimate workload, which therefore completes
normally; the only signal is a rank-dependent trace. Malicious tool triggering
instead redirects an existing call, denying the legitimate output for a more
damaging hit at higher detection probability. The rank choice sets value
against exposure, and the channel choice compounds it.

Both players are partially informed, which is the main departure from CyGym. The defender sees only noisy audit signals. The attacker cannot observe
resets, and because a reset benign node is behaviourally indistinguishable from
one never compromised, must probe to estimate the current state. Compromise
also propagates: a malicious agent can embed payloads in its outputs, infecting
downstream agents along $\mathcal{E}$.

ANSG is the harder of the two deployment games for a budgeted build, for
reasons the theory anticipates. Payoffs require full rollouts of a partially
observable game, so simulation dominates iteration cost and the premise of
Section~\ref{sec:results} holds strongly. Equilibrium supports are not known in
advance, so $B^\ast$ must be estimated from bootstrap replicates rather than
read off a solved game. And no ground-truth equilibrium is computable, so the
certificate of Corollary~\ref{cor:cert} is the only available quality measure
rather than a diagnostic checked against a known answer.
\subsection{Budget sweep on Cyberwheel}
\label{app:cyberwheel}

When the pool is flat, every arm that buys cells returns essentially the same
equilibrium, and budgeted acquisition has nothing to separate.
Figure~\ref{fig:cyberwheel-sweep} repeats the budget sweep of
Section~\ref{sec:cyber-sweep} on Cyberwheel~\citep{oesch2024cyberwheel}. Uniform sampling, IGS,
MRFS, and both RWPS variants all sit within about $0.015$ of zero exploitability
at every budget, with overlapping bands from budget~2 onward. The one visible
separation is at budget~1, where RWPS (uniform) is highest of the buying arms.
GS/PSP/PS-REG-M is constant near $0.16$ because, as on CyGym and ANSG, its
sampling schedule cannot afford a single cell and it returns the history-only
estimate. The pool is flat in a measurable sense: on this $6\times6$ pool the
payoff signal-to-noise ratio is below $1$ for both roles, so no cell's
simulated value is distinguishable from rollout noise.

This is the outcome Corollary~\ref{cor:coverage} predicts for a matrix on which
no cell is much more deviation-relevant than another: if the surrogate is
already accurate everywhere, which cells are bought barely changes the
equilibrium. We report it as the regime in which RWPS offers no advantage over
simple baselines, and in which it also costs nothing.

\begin{figure}[h]
\centering
\includegraphics[width=0.72\linewidth]{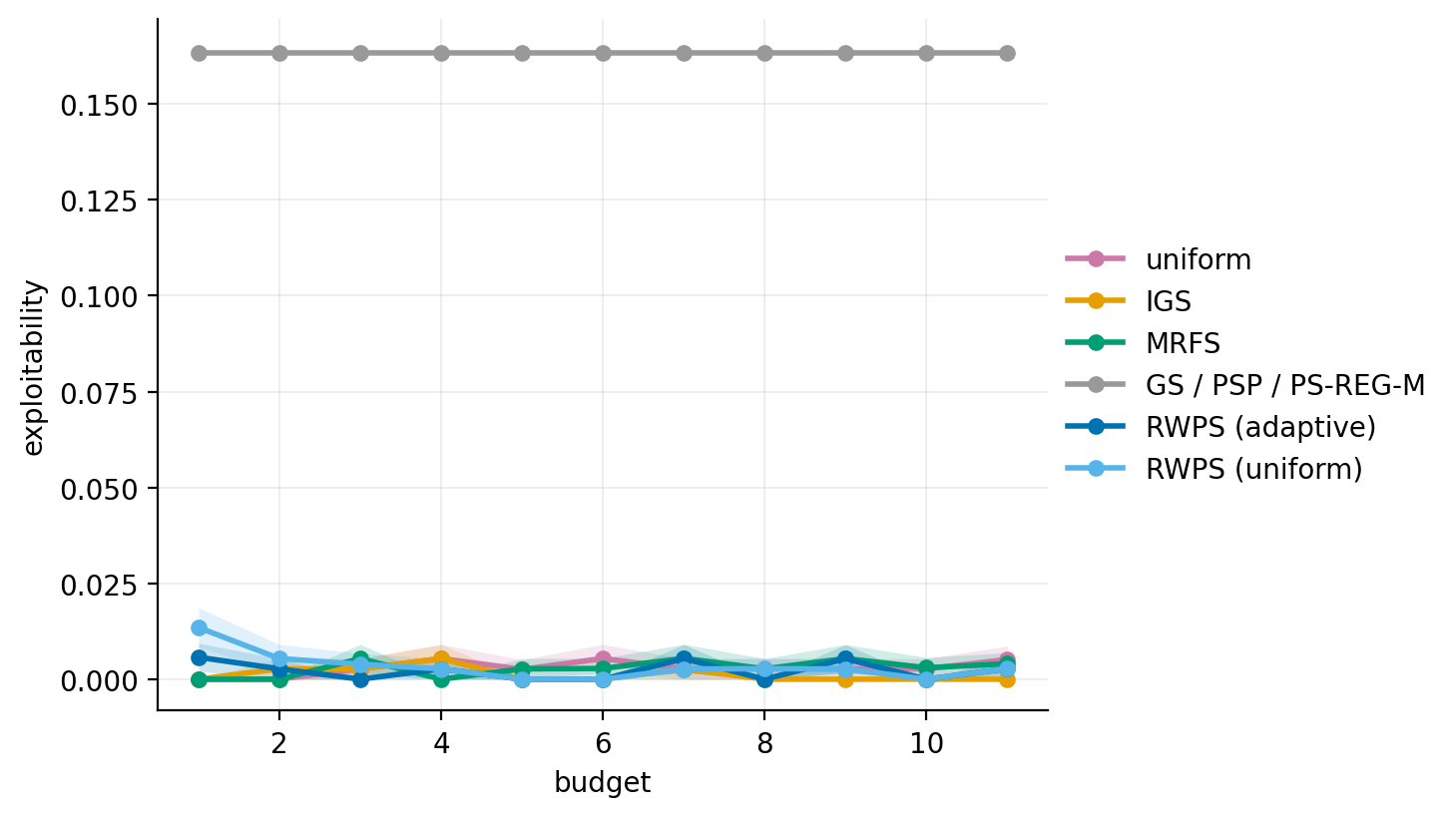}
\caption{Cyberwheel budget sweep. $6\times6$ pool with a $5\times5$ history
pre-cached for every arm, leaving $11$ cells to buy; $16$ seeds. Exploitability is
computed against the stored payoff matrix of the pool. Bands are $\pm1$ standard
error. All buying arms coincide within their bands;
GS/PSP/PS-REG-M buys nothing and returns the history-only estimate.}
\label{fig:cyberwheel-sweep}
\end{figure}

\subsection{Implementation and computational procedure}
\label{app:procedure}

A build with budget $B$ runs $R$ rounds. Each round refits the ensemble on the whole cache,
predicts $\hat u_{ij}$ and $\hat\sigma_{ij}$ on every unsimulated cell, re-solves
$B_{\mathrm{boot}}$ bootstrap realisations of the game to obtain the smoothed marginals, and
then simulates about $B/R$ cells, each uniform with probability $\eps_{\mathrm{x}}$ and
otherwise the highest-scoring. The surrogate is therefore refit $R+1$ times per build, once
per round and once more before the matrices are committed. Only the simulation step
consumes budget; the bootstrap costs solves alone.

The cache persists across PSRO iterations, so a fit at iteration $k$ trains on every cell
bought earlier, and cached values always override the surrogate. Repeat visits, which
arise only from exploration draws, are averaged into a running mean. The opening fit of
each build warm-starts from the previous build's weights, which leaves final
exploitability unchanged ($p = 0.81$ informative, $p = 0.62$ Blotto, eight seeds) and cuts
build time by $17$--$20\%$; the later rounds of a build are fit cold, because the cache
roughly doubles between rounds and a short warm schedule does not track that. Until the
cache holds eight labels, a rank-one row-and-column model stands in for the ensemble, so
the first round on a fresh pool is close to uninformed.

Each ensemble member is an MLP mapping the concatenated embeddings of a cell's two
strategies to both players' payoffs. Member $m$ trains on its own bootstrap resample of the
cache ($n$ cells drawn with replacement) with its own seed, which is the only source of
disagreement between members; the loss is mean squared error on standardised targets,
optimised full-batch with Adam for a fixed number of epochs, with no validation split or
early stopping. This ensemble bootstrap measures uncertainty about individual payoffs. The
separate acquisition bootstrap resamples $80\%$ of each pool's strategies and re-solves the
game, and so measures uncertainty about the equilibrium.

On the synthetic games the restricted game is solved by fictitious play, whose residual
regret is carried in every bound as $\eps_{\mathrm{solve}}$; the CyGym, ANSG and Cyberwheel
sweeps use CyGym's solver, support enumeration with a Lemke--Howson fallback. Those sweeps
also embed strategies by identity tags, one-hot vectors per strategy, since building a
behavioural probe bank for each simulator was out of scope; with tags the surrogate learns
only row and column effects. On ANSG, $36$ cached cells were enough to trigger the MLP,
which then filled the $13$ unseen cells worse than a flat mean, so that sweep keeps every
RWPS arm on the rank-one model throughout. Table~\ref{tab:hparams} lists the settings.

\begin{table}[h]
\centering\small
\caption{Default settings unless stated otherwise; epochs are cold / warm fits.}
\label{tab:hparams}
\begin{tabular}{@{}ll@{}}
\toprule
Setting & Value \\
\midrule
Ensemble                    & $M = 3$ MLPs, $2\times48$ hidden \\
Training                    & Adam, lr $5\times10^{-3}$; $60$ / $10$ epochs \\
Rounds and batch            & $R = 3$, $c = \lceil B/3\rceil$ \\
Bootstrap                   & $B_{\mathrm{boot}} = 32$, $80\%$ of each pool \\
Smoothing, exploration      & $\nu = 0.25$, $\eps_{\mathrm{x}} = 0.15$ \\
Rank-one fallback           & below $8$ cached labels \\
Fictitious play             & $400$ final, $200$ in bootstrap \\
Rollouts                    & $N_{\mathrm{MC}} = 4$, $\sigma = 0.10$ \\
Synthetic games             & $21\times21$, payoffs in $[0,1]$ \\
Probe bank                  & $P = 32$ states per role \\
\bottomrule
\end{tabular}
\end{table}

\subsection{Embedding Analysis}
\label{app:embedding}

The latent-quality game admits a controlled ablation of the fingerprint. Holding
the payoff matrices fixed and redrawing the fingerprints independently of latent
quality yields a game identical in every respect except that the embedding
carries no payoff information. Table~\ref{tab:embedding} repeats the
growing-pool comparison of Section~\ref{sec:psro-headline} on both versions.

\begin{table}[t]
\centering\small
\caption{Growing-pool PSRO on games with identical payoffs and different
fingerprints; final exploitability, mean over five seeds, all budgeted methods at a
budget of $256$. Only the estimator reads fingerprints, so only its row moves.}
\label{tab:embedding}
\begin{tabular}{@{}lccc@{}}
\toprule
Payoff build & informative & uninformative & $\Delta$ \\
\midrule
RWPS (ours)          & $0.045$ & $0.060$ & $+0.015$ \\
MRFS                 & $0.053$ & $0.053$ & $0$ \\
IGS                  & $0.082$ & $0.082$ & $0$ \\
GS / PSP / PS-REG-M  & $0.269$ & $0.269$ & $0$ \\
\emph{full, unbudgeted} & $0.035$ & $0.035$ & $0$ \\
\bottomrule
\end{tabular}
\end{table}

Every method that does not consult the embedding is unchanged to the reported
precision, which is a check on the construction as much as a result. The
estimator is the only row that moves, so $0.015$ is a direct measurement of what
payoff-informative fingerprints are worth in this game.

The estimator's advantage over
uniform sampling splits into a part that survives an uninformative embedding,
$0.092 \rightarrow 0.060$, and a part attributable to generalisation across
embedding space, $0.060 \rightarrow 0.045$. The first is the larger. With
fingerprints that are pure noise the surrogate still learns row and column
effects from cells already simulated: it learns that a given strategy tends to
score well or badly against the pool, which requires no embedding at all, and that
alone recovers most of the gap. This is why the
rank-one row-and-column fallback of Appendix~\ref{app:procedure} is not merely a
cold-start device: it is a substantial fraction of what the surrogate does
throughout.

It also explains where the estimator stops winning. On the uninformative game
MRFS reaches $0.053$ against the estimator's $0.060$, inside one standard
deviation. MRFS obtains the same row-and-column structure directly, by expanding
a profile into its unilateral deviations, and needs no embedding to do it. When
the embedding contributes nothing there is nothing left for the estimator to win
with, and the two methods are measuring the same thing by different means. We
report this as the intended outcome of the ablation rather than as a defeat: the
uninformative game exists to isolate the embedding term, and a near-tie with a
model-free method is the correct result when that term is zero by construction.

\subsection{Coverage iteration and asymptotic guarantee}
\label{app:coverage}
\label{sec:fixedpoint}

Corollary~\ref{cor:coverage} identifies the deviation-relevant set $B^\ast$ and
Corollary~\ref{cor:budget} prices a target certificate against it. This appendix
turns the pair into an iteration and gives the asymptotic guarantee that follows.
The finite-budget result the paper relies on is Theorem~\ref{thm:eps}, which does
not depend on anything here.

Corollary~\ref{cor:cert} reports what a completed build certifies; run backwards, it answers
how much simulation buys a stated guarantee, before any of it is spent.

We do not treat $B^\ast$ as a prediction that has to be got right. Instead we treat it as the
first iterate of a procedure that comes with a free stopping test. The test is
$w_{\mathrm{sur}} = 0$, which holds exactly when the deviation-relevant set of the returned
$(\hat p, \hat q)$ has been covered, and which is a cache-membership query costing no
rollouts. This dissolves the apparent circularity in reading $B^\ast$ as an a-priori budget.
The supports it needs are a property of the solve that it funds, but the procedure does not
need them to be correct, only to be verified at the end.

\begin{proposition}[Coverage iteration]
\label{prop:coverage}
From any cache, iterate: solve $(\Dhat_t, \Ahat_t)$ for $(\hat p_t, \hat q_t)$; simulate
every uncached cell of
$\{1{:}n_D\}\times\mathrm{supp}(\hat q_t) \cup \mathrm{supp}(\hat p_t)\times\{1{:}n_A\}$;
refit and repeat, halting when
$w_{\mathrm{sur}}(\hat p_t) = w_{\mathrm{sur}}(\hat q_t) = 0$. Then
\begin{enumerate}[label=(\roman*),leftmargin=*,topsep=2pt,itemsep=1pt]
\item the procedure halts after at most $n_D + n_A$ rounds;
\item at halting the hypothesis of Corollary~\ref{cor:coverage} holds for the equilibrium
      actually returned, so $\eps \le \eps_{\mathrm{solve}} + 2\zeta(m,\delta)$ with no
      dependence on $\beta$;
\item total simulation never exceeds $n_D n_A$ cells, so the procedure is never more
      expensive than the full rebuild it replaces.
\end{enumerate}
\end{proposition}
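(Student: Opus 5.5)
The plan is a potential-function argument for (i), a direct reduction to Corollary~\ref{cor:coverage} for (ii), and a monotonicity count for (iii). Throughout, I use that coverage is monotone: a simulated cell is cached and always overrides the surrogate, so the cached set only grows with $t$.

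For (i), the potential is the set $\mathcal{R}_t$ of \emph{complete} lines after round $t$. A complete line is a defender row simulated in all $n_A$ columns, or an attacker column simulated in all $n_D$ rows. Since the cache is monotone, $\mathcal{R}_t$ never shrinks, and $|\mathcal{R}_t|\le n_D+n_A$.

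The first step is to record that round $t$ completes every column in $\mathrm{supp}(\hat q_t)$ and every row in $\mathrm{supp}(\hat p_t)$. The second step is to show that a non-halting round adds at least one new element. Suppose round $t+1$ does not halt, say $w_{\mathrm{sur}}(\hat q_{t+1})>0$. Then some column $j$ with $\hat q_{t+1,j}>0$ still contains an unsimulated cell. That column is therefore not yet in $\mathcal{R}_t$. Round $t+1$ simulates all of $\{1{:}n_D\}\times\mathrm{supp}(\hat q_{t+1})$, so it adds $j$. The row case, $w_{\mathrm{sur}}(\hat p_{t+1})>0$, is symmetric. The potential therefore strictly increases on every non-halting round, which bounds the number of rounds by $n_D+n_A$.

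The step I expect to need care is the halting test. It is evaluated on the mixture of the \emph{current} solve, taken after refitting. The argument must compare that solve against the completeness status \emph{before} the round's purchases, not after. I would fix the order of operations explicitly: solve, test, then simulate. With that order, a failed test certifies an incomplete played line, and that line is then completed.

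For (ii), halting means $w_{\mathrm{sur}}(\hat p)=w_{\mathrm{sur}}(\hat q)=0$ for the returned mixture. By the definition of $w_{\mathrm{sur}}$, every cell in a played column and every cell in a played row is simulated. This is exactly the hypothesis of Corollary~\ref{cor:coverage}, applied at that mixture. Substituting into Theorem~\ref{thm:eps} removes the $\beta$ term and leaves $\eps\le\eps_{\mathrm{solve}}+2\zeta(m,\delta)$. The time-uniform event of Theorem~\ref{thm:eps} covers the adaptive, data-dependent choice of cells, so no further union bound is needed.

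For (iii), monotonicity means no cell is ever bought twice by the iteration. Each round buys only uncached cells, so the total number bought is at most $n_Dn_A$.
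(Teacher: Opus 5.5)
Your proposal is correct and follows the paper's proof essentially step for step. It uses the same potential $\mathcal{R}_t$ of complete rows and columns, the same observation that a failed halting test exhibits a played but incomplete line which that round then completes, the reduction of (ii) to Corollary~\ref{cor:coverage} via the equivalence between $w_{\mathrm{sur}}=0$ and coverage of played rows and columns, and cache monotonicity for (iii). Your explicit solve--test--simulate ordering, and your conclusion that the offending column lies outside $\mathcal{R}_t$ rather than merely outside $\mathrm{supp}(\hat q_t)$ (as the paper writes), state slightly more precisely the step the potential argument actually needs.
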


Halting is by exhaustion of rows and columns, so the sequence of mixtures need not converge,
and an oscillating equilibrium costs additional rounds and nothing else. The procedure is
sound and terminating but not minimal, since an intermediate mixture can pull in rows for a
support that later vacates. Note that the two error directions are not symmetric.
Overshooting the support buys unnecessary cells while keeping the guarantee, whereas
undershooting forfeits it, so we seed the first iterate with the \emph{union} of the
bootstrap supports rather than with a threshold on their average.

\begin{theorem}[Convergence under coverage]
\label{thm:asymptotic}
Consider a sequence of builds in which the iteration of Proposition~\ref{prop:coverage} is
run to halting whenever the returned mixture changes, and in which every cell of the
coverage set continues to be sampled. Then, on the event of probability at least $1-\delta$
on which Theorem~\ref{thm:eps} holds,
\begin{enumerate}[label=(\roman*),leftmargin=*,topsep=2pt,itemsep=1pt]
\item at most $n_D + n_A$ coverage extensions occur over the entire sequence, after which
      $w_{\mathrm{sur}}(\hat p) = w_{\mathrm{sur}}(\hat q) = 0$ holds permanently;
\item thereafter $\eps \le \eps_{\mathrm{solve}} + 2\zeta(m,\delta)$, and
      $\eps \to \eps_{\mathrm{solve}}$ as $m \to \infty$;
\item hence, with the restricted game solved exactly, the returned mixture converges to an
      exact equilibrium of that game, while the $n_D n_A - B^\ast$ surrogate-filled cells
      are never corrected and may remain arbitrarily wrong.
\end{enumerate}
\end{theorem}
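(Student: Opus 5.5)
I would prove the three parts in order, working throughout on the single event $\mathcal{E}$ from the proof of Theorem~\ref{thm:eps}. That event has probability at least $1-\delta$ and is time-uniform, so every statement below can be made on one fixed event, with no further union bound over builds or visit counts.

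For (i), the plan is to reuse the potential-function argument of Proposition~\ref{prop:coverage}, but to apply it globally across builds rather than within one run of the iteration. Let $\mathcal{R}$ be the set of fully simulated rows together with the fully simulated columns. Two facts drive the argument:
\begin{itemize}
\item \emph{Monotonicity.} A cached cell is never returned to the surrogate, and the pools are fixed, so $\mathcal{R}$ is non-decreasing over the whole sequence.
\item \emph{Strict growth.} Whenever a solve has $w_{\mathrm{sur}}>0$, some row or column in the support is incomplete. The coverage round then completes it, so $|\mathcal{R}|$ grows by at least one.
\end{itemize}
Since $|\mathcal{R}|\le n_D+n_A$, at most that many extensions can occur, counted across all builds and not just within one. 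After the last extension, every solve finds its support rows and columns already complete. Hence $w_{\mathrm{sur}}=0$ holds immediately and permanently, however often the mixture changes. The step that needs care is the bookkeeping: the counter must be global, and "extension" must mean a round that actually enlarges $\mathcal{R}$.

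For (ii), once $w_{\mathrm{sur}}(\hat p)=w_{\mathrm{sur}}(\hat q)=0$, Corollary~\ref{cor:coverage}, through Theorem~\ref{thm:eps}, gives $\eps\le\eps_{\mathrm{solve}}+2\zeta(m,\delta)$ on $\mathcal{E}$. The hypothesis that coverage cells keep being sampled means the minimum visit count $m$ tends to infinity. It then remains to show $\zeta(m,\delta)\to0$. The ratio $(\log(4n_Dn_A/\delta)+\log\log_2(2m))/m$ tends to $0$ because its numerator grows only doubly logarithmically. Hence $\eps\to\eps_{\mathrm{solve}}$. This is the step I expect to be the main obstacle, because $\zeta$ must be evaluated at random, adaptively chosen visit counts. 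I would therefore stress that $\mathcal{E}$ bounds every simulated cell at \emph{every} visit count simultaneously, so the limit can legitimately be taken along the realised sample path.

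For (iii), setting $\eps_{\mathrm{solve}}=0$ gives $\eps\to0$. Exploitability is continuous on the compact product of simplices, so any limit point of the returned mixtures has zero regret in the true restricted game, which means it is an exact equilibrium. I would phrase the conclusion as convergence in exploitability, since the mixtures themselves need not converge, and note this explicitly. Finally, cells outside the coverage set are never simulated, so the surrogate never corrects them. By the weighting in Lemma~\ref{lem:supp}, which is what Corollary~\ref{cor:coverage} uses, they receive zero weight in both regrets. Their values therefore never enter the bound and may stay arbitrarily wrong.
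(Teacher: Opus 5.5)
Your proposal is correct and follows the paper's proof essentially step for step. Part (i) uses the same global, never-shrinking set $\mathcal{R}$ of completed rows and columns, which grows on every non-halting round and has size at most $n_D+n_A$; part (ii) uses Theorem~\ref{thm:eps} with $w_{\mathrm{sur}}=0$ and $\zeta\to 0$ on the single time-uniform event; part (iii) uses the zero weight of cells outside the coverage set. Your phrasing of (iii) through limit points and continuity of exploitability on the compact product of simplices is slightly more careful than the paper's, which refers to ``the limit profile'' without noting that the mixtures themselves need not converge.
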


A single confidence event supports every $m$ here, which is what the time-uniform radius
buys us: a fixed-design bound holds at one prescribed sample size, so letting
$m \to \infty$ under it would require a union over sample sizes. Neither half of the result
is an assumption. We reach coverage in finitely many rounds using a stopping test that is a
cache query, and the vanishing of $\zeta$ follows from continued sampling of the cells that
coverage identifies. Coverage decides which cells convergence is possible on and visits
decide how fast, so spending alone does not substitute for coverage. The contrast with
Section~\ref{sec:twintrap} is one of cost: that argument needs forced exploration to visit
every cell infinitely often, $\Theta(n_D n_A)$ cells, whereas this one needs only the
$\Theta(n_D s_A + n_A s_D)$ cells of the coverage set.

Coverage guarantees the bound, not the solve. Driving $w_{\mathrm{sur}}$ to zero
removes $\beta$ from Theorem~\ref{thm:eps} by construction, so the bound holds
whatever the surrogate predicts. The certificate no longer depends on surrogate
quality. The surrogate has not ceased to matter. Which equilibrium the solver
arrives at is still a function of the fill, because the solver reads the complete
matrix and the unsimulated cells are exactly the cells that determined they were
unplayed. The surrogate therefore controls rounds to termination rather than the
validity of what the iteration returns. The $\beta \cdot w_{\mathrm{sur}}$ term
cannot express this role, since coverage sets it to zero. At $B^\ast$ coverage a
surrogate fill and a flat fill carry the same guarantee, and
Section~\ref{sec:results} separates them empirically.

Coverage separates the budget into two parts that are priced independently.
Coverage decides which cells are bought and removes $\beta \cdot
w_{\mathrm{sur}}$. Revisits decide how tight the bound is and shrink
$2\zeta(m,\delta)$. Only the first is a question about the matrix, and it is the
cheap one, costing $s_D + s_A$ cells per iteration under stable supports. The
second is a visit count, and at small $m$ it dominates, so a covered build
certifies very little until $m$ grows. A deployment can invert this: fix the
exploitability bound required, take $m$ to be the smallest visit count with
$2\zeta(m,\delta)$ below it, and the rollout budget follows as the coverage set
times $m$. Section~\ref{sec:results} prices this.

The iteration trades a spending cap for a guarantee. A budget fixed as a fraction
of the matrix bounds what a build can cost. The iteration instead spends what the
game demands, which is $s_D + s_A$ cells per iteration under stable supports and,
in the worst case, the full matrix. Under a hard rollout ceiling it may be
truncated before it halts. Nothing is lost when that happens. Stopping with
$w_{\mathrm{sur}} > 0$ returns us to Theorem~\ref{thm:eps} with the
representational term retained at the measured $w_{\mathrm{sur}}$, which is the
bound the estimator carries in any case. Halting is what removes $\beta$;
truncation degrades to the general bound rather than forfeiting it.

Fingerprint collisions do not defeat coverage. Theorem~\ref{thm:trap} constrains
acquisition rules that read fingerprints, and its proof plants the trap in a
column the rule never touches. Coverage admits no such column:
$w_{\mathrm{sur}}(\hat p) = 0$ requires $\mathrm{supp}(\hat p) \times \{1{:}n_A\}$, so every
attacker strategy is simulated against the defender's support whatever its fingerprint
reports, and the construction has nowhere to place the trap. If a trap's true payoffs make
it a profitable deviation against $\hat p$, covering reveals them, the re-solve moves mass
onto it, $w_{\mathrm{sur}}$ becomes positive, and the next round completes its column. If it
is profitable only against strategies outside $\mathrm{supp}(\hat p)$, then it does not
enter $\max_j (A\hat p)_j$ and is not a deviation by which the returned mixture can be
exploited. Forced exploration remains load-bearing for the acquisition rule of
Section~\ref{sec:acquisition}, whose score is a function of the embedding, but it is not
needed for the coverage guarantee.

Coverage costs $s_D + s_A$ cells per PSRO iteration whenever a new strategy pair
leaves the supports unchanged. With a persistent cache the right baseline is not
a $\Theta(n_D n_A)$ rebuild. It is the
$2k{+}1$ cells that a pool of size $k$ per side newly exposes, an incremental-caching
accounting standard in EGTA~\citep{wellman2006egta,wiedenbeck2012dpr}. The result
is immediate: those are the only cells of the deviation-relevant set the cache
does not already hold, which is to say the only cells at which
$w_{\mathrm{sur}} \neq 0$. The condition is support stability rather than support
size, and it is the regime PSRO enters near termination. The saving accrues late
in a run rather than eroding.

Each result above attaches to a different part of the estimator, and none of them derives
the acquisition rule. The surrogate fill steers \emph{which} equilibrium is solved for.
Coverage makes the \emph{bound} on that equilibrium independent of the fill
(Proposition~\ref{prop:coverage}). Corollary~\ref{cor:cert} \emph{certifies} the result
post hoc in settings where no ground truth exists, and Corollary~\ref{cor:budget} inverts
that certificate into a budget.

One caveat is worth stating. Our acquisition rule targets value sensitivity, through $W$ on
support$\times$support, rather than the deviation-relevant set, so it under-covers relative
to Corollary~\ref{cor:coverage}, with the exploration draws supplying the remainder. Closing
that gap with a rule that targets deviation coverage directly is an immediate direction for
future work.

\subsection{Ablations and bound diagnostics}
\label{app:ablations}

\subsubsection{Error decomposition}
Counterfactual decomposition (Figure~\ref{fig:decomp}) replaces one error
source at a time with ground truth and re-solves. Error is representational at
low budget and statistical at high, with the crossover below $20\%$ on the
informative game, between $20\%$ and $40\%$ on the arbitrary game, and near
$40\%$ on Blotto. This yields a tuning rule: below the crossover spend on
coverage, above it on revisits.

The crossover need not be measured. Coverage and revisits buy different terms of
Theorem~\ref{thm:eps} and are priced separately: coverage decides \emph{which}
cells and drives $\beta \cdot w_{\mathrm{sur}}$ to zero, while revisits decide
\emph{how tight} and shrink $2\zeta(m,\delta)$. The rule in closed form is
therefore to cover until $w_{\mathrm{sur}} = 0$, which
Proposition~\ref{prop:coverage} bounds at $n_D + n_A$ rounds and
costs $s_D + s_A$ cells per PSRO iteration under stable supports, and then to spend every remaining rollout revisiting covered
cells until $2\zeta(m,\delta)$ meets the exploitability target. The two phases
do not compete for the same cells: revisits fall only on cells coverage has
already bought.

\begin{figure}[t]
\centering
\includegraphics[width=0.72\linewidth]{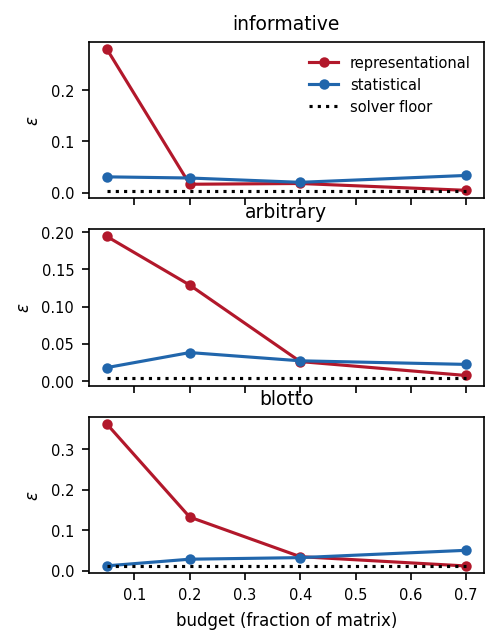}
\caption{Counterfactual decomposition of $\eps$ into statistical and
representational parts (seed means; dotted line is the solver floor).}
\label{fig:decomp}
\end{figure}

\subsubsection{Batch size and exploration probability}
\label{sec:ablation-batch-eps}

The acquisition loop has two knobs. The batch size $c$ is the number of cells
bought between refits: $c = B$ commits the whole budget on a single ranking,
$c = 1$ re-ranks after every cell. The exploration probability
$\eps_{\mathrm{x}}$ is the chance that a given selection ignores the score.
Table~\ref{tab:ablation} sweeps both at a $20\%$ budget over four seeds.

\begin{table}[t]
\centering\small
\caption{Exploitability against batch size $c$ and exploration probability
$\eps_{\mathrm{x}}$, $20\%$ budget, mean over four seeds. Batch size is the
operative knob; $\eps_{\mathrm{x}}$ is close to flat.}
\label{tab:ablation}
\begin{tabular}{@{}lrrrr@{}}
\toprule
& \multicolumn{4}{c}{$\eps_{\mathrm{x}}$} \\
\cmidrule(l){2-5}
$c$ & $0.00$ & $0.05$ & $0.15$ & $0.30$ \\
\midrule
\multicolumn{5}{@{}l}{\emph{Latent, informative}} \\
\quad $1$  & $0.018$ & $0.029$ & $0.019$ & $0.028$ \\
\quad $4$  & $0.030$ & $0.054$ & $0.056$ & $0.027$ \\
\quad $16$ & $0.014$ & $0.031$ & $0.037$ & $0.042$ \\
\quad $64$ & $0.041$ & $0.028$ & $0.033$ & $0.074$ \\
\midrule
\multicolumn{5}{@{}l}{\emph{Blotto}} \\
\quad $1$  & $0.073$ & $0.042$ & $0.078$ & $0.074$ \\
\quad $4$  & $0.086$ & $0.074$ & $0.107$ & $0.106$ \\
\quad $16$ & $0.164$ & $0.164$ & $0.194$ & $0.160$ \\
\quad $64$ & $0.163$ & $0.139$ & $0.214$ & $0.167$ \\
\bottomrule
\end{tabular}
\end{table}

Batch size is the parameter that matters, and it matters where the theory says
it should. On Blotto, $c = 1$ reaches $0.073$ against $0.163$ at $c = 64$, a
factor of $2.2$; on the informative game the effect is within seed noise. The
asymmetry follows from what a batch commits to: every cell in a batch is chosen
from one ranking, so a large $c$ spends that share of the budget on the
surrogate's state at the start of the batch. On a game the surrogate fits well
that state is already good and the loss is small; on a combinatorial surface it
is not, and re-ranking after each cell recovers most of the gap.

The exploration probability is close to flat, and $\eps_{\mathrm{x}} = 0$ is
best or tied in several cells. This is the expected reading rather than a
contradiction. Theorem~\ref{thm:trap} establishes that exploration is
\emph{necessary}, in the sense that no rule reading only fingerprints and the
cache carries the asymptotic guarantee without it; it does not claim that
exploration improves finite-budget exploitability on a benign instance. These
three games are benign in exactly that sense---no fingerprint collision is
constructed---so exploration buys the guarantee and not the number, and its cost
is the small budget share it diverts. The instance where it earns its keep is
the twin--trap construction, reported separately in
Section~\ref{sec:results}. We therefore keep $\eps_{\mathrm{x}} = 0.15$ as the
default: it is insurance whose premium is measured here and whose payout is
measured there.

\subsubsection{Variants of the acquisition score}
\label{app:acq-variants}

Table~\ref{tab:acq} compares our rule against prior work and against no score at
all. This subsection varies our own rule instead, under the same protocol:
identical surrogate, budget, batch schedule and forced-exploration rate, sixteen
paired seeds, $\eps_{12}$. None of these is a baseline drawn from the
literature; each removes or replaces one ingredient of the score in
Algorithm~\ref{alg:build}.

\begin{table}[h]
\centering\small
\caption{Variants of our acquisition score, extending Table~\ref{tab:acq} with three
further rules under the identical protocol; the first row repeats Table~\ref{tab:acq}
for reference. Mean $\pm$ one standard error over sixteen paired seeds.}
\label{tab:acq-variants}
\begin{tabular}{@{}lcc@{}}
\toprule
Acquisition score & Informative & Blotto \\
\midrule
$\nu$-smoothed product (ours)      & $\mathbf{0.024 \pm 0.004}$ & $\mathbf{0.082 \pm 0.011}$ \\
bootstrap inclusion                & $0.035 {\scriptstyle\pm} 0.006$ & $0.100 {\scriptstyle\pm} 0.017$ \\
training exposure                  & $0.039 {\scriptstyle\pm} 0.005$ & $0.105 {\scriptstyle\pm} 0.011$ \\
value-sensitivity ($\nu = 0$)      & $0.048 {\scriptstyle\pm} 0.010$ & $0.096 {\scriptstyle\pm} 0.015$ \\
\bottomrule
\end{tabular}
\end{table}

Value-sensitivity ($\nu = 0$) isolates the smoothing, and the ordering reverses
across games. It is the rule used in earlier versions of this work: no smoothing,
and the product returned inside the bootstrap average. Smoothing is worth $0.024$ on the latent games
($p = 0.007$, $12/16$ seeds) and $0.014$ on Blotto, where it is not significant
($p = 0.348$, $9/16$). It therefore does its work where the support estimate is
sharp and the unsmoothed score concentrates on a handful of cells, not where the
support is already broad---on Blotto, at supports $(12,12)$, the unsmoothed score
already spreads over most of the matrix and there is little left for $\nu$ to do.

Bootstrap inclusion replaces equilibrium mass with inclusion frequency, the
fraction of bootstrap replicates in which a strategy appears in the support at
all, and sums the two roles rather than multiplying them. This is the discrete
analogue of the probability-of-equilibrium criterion of
\citet{picheny2016bayesian}, with the bootstrap standing in for their
Gaussian-process posterior. The sum is our departure from it: a product is supported only on
$\operatorname{supp}(\hat p) \times \operatorname{supp}(\hat q)$, whereas a sum
scores every row against any plausible column, which is the shape of the
deviation-relevant set of Corollary~\ref{cor:coverage}. It is the same
neighbourhood idea as $\nu$, learned from the bootstrap rather than imposed by a
fixed constant, and it finishes second here---close enough that the choice
between them is not settled by these games.

Training exposure is the PSRO-specific rule. Every strategy in the pool arrived
as a best response to a known opponent mixture, so the cells pairing
it with the opponents it was optimised against are cells whose values that
optimisation has already partly determined; the cells it was never trained
against are both less predictable and, by Corollary~\ref{cor:coverage},
deviation-relevant as soon as that strategy enters a support. The rule
down-weights a cell by the training exposure $e(i,j)$, the weight the mixture
that produced $a_j$ placed on defender $i$, and symmetrically. It is the only
rule considered anywhere in this paper that uses information no payoff-build
baseline can see, since only the outer loop knows what each strategy was trained
against, and it finishes fourth. We report it because the negative result is
informative: a best response supplies one scalar constraint on a $\hat p$-weighted
average of a column, not per-cell information, and weighting cells by it asks the
constraint to carry more than it does. The constraint is better used inside the
surrogate fit, which we have not attempted here.

\subsection{Bound and certificate diagnostics}
\label{app:diagnostics}

These three studies use fixed pools as a diagnostic bed rather than the
growing-pool setting of Section~\ref{sec:results}. The first shows that the
deviation-relevant set predicts cost before any budget is spent. The second
breaks Figure~\ref{fig:perturb} down by game. The third gives what the computable
certificate costs in optimism. None is needed to read Table~\ref{tab:psro},
Table~\ref{tab:acq}, or Figure~\ref{fig:perturb}.

\subsubsection{Predicting difficulty before spending}
\label{sec:coverage}
The results above are retrospective: they report what a budget bought. The
practical question is the prospective one, whether a game's cost can be known
before committing to it. Corollary~\ref{cor:coverage} answers it with
$B^\ast = n_D s_A + n_A s_D - s_D s_A$, computable from bootstrap support
estimates alone, and Table~\ref{tab:coverage} evaluates it directly: simulate
exactly the deviation-relevant set, surrogate elsewhere.

Two claims are tested. First, $B^\ast$ \emph{orders the games correctly}. The
two latent games have supports of size $2$ on each side and cost $80$ cells,
$18\%$ of the matrix; Blotto's supports come out at $(12,12)$ under this
solver, giving $B^\ast = 360$ cells, $82\%$. That ordering is available before
any budget is spent, and it matches the ordering the growing-pool results of
Section~\ref{sec:psro-headline} reveal. Equilibrium support size, not matrix size, is the
operative quantity. Second, simulating exactly that set \emph{delivers
full-rebuild quality}: on the informative game coverage reaches
$0.017 \pm 0.003$ against $0.033 \pm 0.014$ for a full rebuild, at $18\%$ of the
simulation, and on the arbitrary game it is statistically indistinguishable
($0.038 \pm 0.020$). Coverage is therefore not merely equal to a full rebuild on the
informative game, it is slightly ahead of it, and the reason is worth stating
because it looks at first like an error. A full rebuild measures every one of
the $441$ cells independently at $N_{\mathrm{MC}} = 4$ and keeps the statistical
error of all of them. Coverage measures $80$ cells and predicts the rest from a
model fitted to everything observed so far. When that model class contains the
truth, its predictions pool information across the matrix and carry \emph{lower}
variance than an individual four-rollout average, so replacing measurements by
predictions can reduce error rather than add it. The latent games are the
extreme case, since their payoff matrices are exactly rank one by construction
(Section~\ref{sec:experiments}) and the surrogate's row-and-column fallback is
correctly specified for them. Applying a rank-one projection to a full rebuild's
own noisy measurements reproduces the effect on its own, halving realised
exploitability, which confirms that the gain comes from denoising rather than
from anything the acquisition rule does.

The effect tracks how well the model class contains the truth, and it disappears
when that fails. On the arbitrary game the payoffs are the same but the
fingerprints carry no information, so only row and column effects can be pooled
and coverage is indistinguishable from a rebuild. On Blotto it is worse. The
payoff surface there has effective rank above four, the surrogate cannot fit it,
and at $82\%$ coverage there is little left to save in any case.

\begin{table}[t]
\centering
\small
\caption{Deviation-relevant coverage (Corollary~\ref{cor:coverage}), mean
$\pm$ SD over sixteen seeds, on $21 \times 21$ games ($441$ cells).}
\label{tab:coverage}
\begin{tabular}{lccc}
\toprule
\textbf{Game} & \textbf{supp.} & $B^\ast$ \textbf{(\%)} & $\eps$ \\
\midrule
Latent, informative & $(2,2)$   & $80$ \ ($18$) & $0.017 \pm 0.003$ \\
Latent, arbitrary   & $(2,2)$   & $80$ \ ($18$) & $0.038 \pm 0.020$ \\
Blotto              & $(12,12)$ & $360$ ($82$)  & $0.062 \pm 0.034$ \\
\midrule
\multicolumn{2}{l}{\emph{full rebuild, latent}} & $441$ ($100$) & $0.033 \pm 0.014$ \\
\multicolumn{2}{l}{\emph{full rebuild, Blotto}} & $441$ ($100$) & $0.043 \pm 0.021$ \\
\bottomrule
\end{tabular}
\end{table}

\subsubsection{Bound tightness by game}

\begin{figure}[t]
\centering
\includegraphics[width=0.72\linewidth]{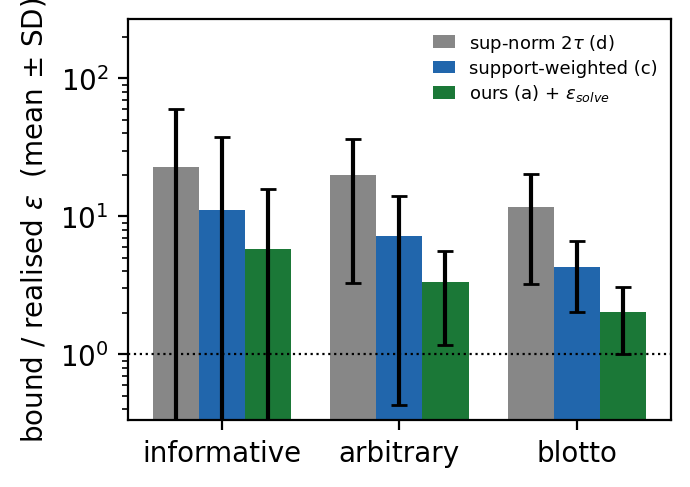}
\caption{Overestimation factor, bound over realised $\eps$, per game (mean
$\pm$ SD over the $240$ budget-sweep runs, log scale). Labels refer to the
forms of Eq.~\eqref{eq:chain}; our bound is $4$--$6\times$ tighter than the
sup-norm bound.}
\label{fig:bounds}
\end{figure}

Figure~\ref{fig:bounds} reports the same $240$ runs as per-game overestimation
factors. The sup-norm bound (d) overestimates realised $\eps$ by mean factors of
$22.7$, $20.0$, and $11.7$ on the three games; the support-weighted form (c)
reduces these to $11.1$, $7.2$, and $4.3$; the signed form (a) with the measured
solver term to $5.8$, $3.4$, and $2.0$. Mean ratios carry heavy right tails,
since a run with near-zero realised $\eps$ inflates the ratio however tight the
bound is in absolute terms, so the seed spreads are large. The ordering
(d) $>$ (c) $>$ (a) is uniform across games and budgets.

\subsubsection{The computable certificate, and the cost of optimism}
\label{sec:cert}
Corollary~\ref{cor:cert} is evaluated in both of the roles
Section~\ref{sec:theory} distinguishes. Used post hoc at $B^\ast$ coverage,
the computable certificate held on all $48$ runs, with values of
$0.935 \pm 0.059$ and $0.938 \pm 0.075$ on the latent games and
$0.516 \pm 0.032$ on Blotto against realised $\eps$ of $0.017$--$0.062$
(Table~\ref{tab:coverage}). At $m = 4$ visits its slack is dominated by the
time-uniform concentration radius $2\zeta_{4,0.05} = 0.481$ and by mismatch
between the estimated and true supports, and it contracts as $1/\sqrt{m}$ with
revisits. Proposition~\ref{prop:coverage} removes the second of those two
sources by construction: iterating coverage to $w_{\mathrm{sur}} = 0$ leaves the
statistical radius as the only slack.

That remaining slack is the operative constraint, and it is a visit count rather
than a coverage question. At $m = 4$ the bound a fully covered build can certify
is $\eps_{\mathrm{solve}} + 0.481$, roughly half the payoff range, which is why
the certificates above sit near $0.94$ against realised $\eps$ of $0.017$--$0.062$:
the estimator was buying the right cells and not enough visits. Inverting
$2\zeta(m,\delta)$ prices the alternative. A certificate of $0.25$ needs $m = 16$
and, at the $80$ cells of $B^\ast$ coverage, $1{,}280$ rollouts---below the
$1{,}764$ a full rebuild spends at $m = 4$. A certificate of $0.10$ needs
$m = 101$ and $8{,}080$ rollouts, and one of $0.05$ needs $m = 408$ and
$32{,}640$, some $18\times$ a full rebuild. Coverage is cheap and the certificate
is not; a deployment should choose the bound it needs and let the visit count
follow, rather than choosing a cell budget and discovering what it certifies. Used inside the
solve, R-max style, optimism certified only after simulating $90$--$97\%$ of
the matrix ($395$--$430$ of $441$ cells), because every unsimulated cell looks
maximal and the support chases fresh cells. The division of labour is
therefore a three-way one: the surrogate fill decides which equilibrium is
solved for, coverage of the deviation-relevant set makes the bound on that
equilibrium independent of the fill, and optimism only certifies it. The
$B^\ast$ runs above isolate the middle role---every one of them carries the same
$\beta$-free guarantee---while the exploitability they reach still separates
surrogate fill from the flat fill of the uniform baseline, which is the first
role.

\end{document}